\documentclass[journal,twoside,web]{ieeecolor}
\usepackage{generic}
\usepackage{cite}
\usepackage{amsmath,amssymb,amsfonts}
\usepackage{graphicx}
\usepackage[caption=false]{subfig}
\usepackage{hyperref}
\hypersetup{hidelinks=true}
\usepackage{textcomp}
\def\BibTeX{{\rm B\kern-.05em{\sc i\kern-.025em b}\kern-.08em
    T\kern-.1667em\lower.7ex\hbox{E}\kern-.125emX}}
\usepackage{edgeStd}
\usepackage{empheq}
\usepackage[export]{adjustbox}

\usepackage{xcolor}
\allowdisplaybreaks
\begin{document}
\title{\LARGE \bf
Provably Efficient Sensor Allocation for Unknown High-dimensional Systems with Limited Sensing
}
\author{Yuyang Zhang$^{1,2}$, Derya Cansever$^{1}$ and Na Li$^{1}$
\thanks{This work is supported by NSF AI institute 2112085, NSF
ECCS 2328241, NIH R01LM014465. Yuyang Zhang is supported by the Kempner Graduate Fellowship.}
\thanks{$^{1}$Yuyang Zhang, Derya Cansever, and Na Li are with SEAS, Harvard University, USA.
{\tt\small \{yuyangzhang@g, derya\_cansever@fas, nali@seas\}.harvard.edu.}
}%
\thanks{$^{2}$Yuyang Zhang is also with the Kempner Institute, Harvard University.
}%
}
\maketitle

\begin{abstract}
    This paper focuses on learning efficient sensor allocations that ensure observability of unknown high-dimensional linear systems using only a small number of sensors. Existing methods either require an impractically large number of sensors or assume access to an observable allocation in advance. We propose a two-stage framework that overcomes these limitations: first, a novel system identification algorithm integrates information from multiple trajectories, each observing different subsets of state coordinates; then, a classic sensor allocation method is adapted to operate on the learned system parameters. Our non-asymptotic guarantees show that
    the proposed approach learns a sensor allocation with a near-optimal number of sensors when sensors can be allocated on any state coordinate. We further extend the results to settings with inaccessible state coordinates that are unavailable for sensor allocation.
\end{abstract}

\begin{IEEEkeywords}
    Linear System Identification, Sensor Allocation, Sample Complexity
\end{IEEEkeywords}

\section{Introduction}\label{sec:intro}
High-dimensional dynamical systems are ubiquitous in real-world applications, such as power grids \cite{app_powergrid_1}, weather forecasting \cite{app_weather_1}, smart buildings \cite{app_building_1}, and neuroscience \cite{app_neuro_1}. Continuous monitoring of system states is essential for understanding their evolution, forecasting future trajectories, enabling adaptive control \cite{feedback_1}, and detecting potential failures \cite{failure_1}. Yet, monitoring all state variables is often impractical and costly, especially when sensors are expensive or energy-intensive \cite{sensor_1}. These challenges motivate the design of algorithms that learn efficient sensor allocations, ensuring observability with only a small number of sensors. 

In high-dimensional linear systems, classic sensor allocation methods typically assume access to an exact dynamical model \cite{sensor_1knownA, olshevsky2014minimal, clark2017submodularity} in order to compute efficient allocations. However, constructing accurate models from first principles is infeasible in such settings, making it necessary to estimate the model from data. This raises two major challenges. First, it is unclear whether the performance guarantees of the existing sensor allocation methods remain valid when applied to approximate models. Second, even estimating an approximate model is itself nontrivial. Standard least-squares methods require access to full state trajectories, which demand prohibitively extensive sensing \cite{sysid_2, sysid_1, sysid_11, sysid_nonlinear2, sysid_nonlinear3, sysid_nonlinear4}. Alternatively, Ho-Kalman-type algorithms can operate on partially observed trajectories \cite{sysid_5, proof_2tsiamis2019finite, sysid_13}, but these methods assume access to a sensor allocation that already guarantees observability. Without prior system knowledge, such an allocation is nearly impossible to obtain unless full-state measurements are available. 

To address the aforementioned difficulties, we propose a two-stage framework for learning efficient sensor allocations that guarantee observability with only a small number of sensors. \textit{In stage one}, we develop  a system identification (SYSID) algorithm that estimates the system dynamics without requiring observability and with an arbitrary number of sensors. Specifically, consider the following $r$-dimensional linear dynamical system $x_{t+1}=Ax_t+Bu_t+w_t$ with state $x_t\in\mathbb{R}^r$, input $u_t\in\mathbb{R}^m$, and noise $w_t\in\bbR^r$. The algorithm collects multiple data trajectories, each measuring a possibly different subset of the state coordinates, and jointly analyzes them to estimate the system matrices $A$ and $B$. 
We prove that the estimation error is bounded by $\tilde{\calO}(\sqrt{mr/(s_{\min}T)})$, where $\tilde{\calO}(\cdot)$ hides constants and logarithmic factors, $T$ is the trajectory length, and $s_{\min}$ is the least number of trajectories used to measure a state coordinate.
Intuitively, as long as every state coordinate is observed in at least one trajectory, i.e. $s_{\min}>0$, the algorithm accurately recovers the system matrices.
We highlight that this condition requires no prior structural knowledge and is easily satisfied---for instance, even with only one sensor, the system can be learned  by collecting $r$ data trajectories, each measuring a different state coordinate.

\textit{In stage two}, we adapt a greedy sensor allocation algorithm to operate on the estimated system parameters from stage one. A carefully designed rank estimation subroutine mitigates the influence of estimation noise, and the resulting algorithm is guaranteed to produce a sensor allocation ensuring observability with at most $\b{1+\log(r)}n^*$ sensors, where $n^*$ is the minimum number needed.
Unlike prior work \cite{sensor_1knownA, sensor_2knownA, olshevsky2014minimal} that requires exact system dynamics, our algorithm operates directly on models learned from data. 

Finally, we extend the proposed algorithms to systems with inaccessible state coordinates, where sensors can not be allocated. 
We show that the above algorithms, with minor modifications, can still learn efficient sensor allocations in this scenario, provided that the accessible state coordinates are sufficient to ensure observability of the system.

\textit{Notations:} For positive integer $a$, let $[a]$ denote set $\{1, 2, \cdots, a\}$. For $i\in[a]$, let $e_i\in\bbR^a$ denote the $a$-dimensional \textit{one-hot vector} with a $1$ on the $i$-th coordinate. We also let $e_0=e_a$ and $e_i=e_{i\bmod a}$ for any integer $i > a$, so that $e_{a+1}=e_{1}$, $e_{a+2}=e_{2}$, and so on.
For matrix $M$, we let $\sigma_{\min}(M)$ denote its smallest \textit{non-zero} singular value. We let $[M]_i$ denote its $i$-th row, $[M]^j$ denote its $j$-th column, and $[M]_i^j$ denote its $(i,j)$-th element. For set $\calI=\{i_1, \cdots, i_{|\calI|}\}$, we let $[M]^{\calI}=\begin{bmatrix}
    [M]^{i_1} & [M]^{i_2} & \cdots & [M]^{i_{|\calI|}}
\end{bmatrix}$ and let $[M]_{\calI}=\begin{bmatrix}
    [M]_{i_1}\t & [M]_{i_2}\t & \cdots & [M]_{i_{|\calI|}}\t
\end{bmatrix}\t$. We use $\tilde{\calO}(\cdot)$ to hide constants and logarithmic factors.

\subsection{Related Work}
\textbf{Sensor allocation algorithms.}
Classical sensor allocation algorithms assume access to a known system model \cite{sensor_1knownA, sensor_2knownA, clark2017submodularity, summers2015submodularity, olshevsky2014minimal}. These methods leverage the system model to evaluate observability-related objectives, and their theoretical guarantee relies heavily on the exactness of the system model. Whether such guarantees extend to approximate models remains unclear.
Several algorithms have been proposed to operate on approximate system models \cite{rw_approx, rw_approx2}, but their correctness is not guaranteed.
A recent line of work addresses data-driven sensor allocation for unknown dynamical systems \cite{rw_fullstate1, rw_fullstate2, rw_observable1}. However, these methods either require full-state measurements or assume an observable sensor allocation during the learning phase.



\textbf{System identification algorithms.}
System identification for linear dynamical systems has a rich history \cite{ljung1998system, keesman2011system}. Recent works have established non-asymptotic learning guarantees \cite{tsiamis2023statistical, sysid_9}. With full-state observations, ordinary least-squares methods achieve near-optimal rates for stable systems \cite{sysid_1, sysid_11} and have been extended to unstable settings \cite{sysid_2, sysid_3}.
For partially observed systems, Ho-Kalman-type algorithms recover system matrices from input-output data up to a similarity transformation \cite{sysid_5, proof_2tsiamis2019finite, sysid_13}. However, these methods require the system being observable, which is difficult to ensure without prior knowledge of the system. 
System identification has also been extended to nonlinear dynamics \cite{sysid_nonlinear1, sysid_nonlinear2, sysid_nonlinear3, brunton2016discovering}, though these works similarly require full state observations.

\textbf{Applications of sensor allocation.}
The call for efficient sensor allocation naturally arises in many large-scale systems. In power grids, phasor measurement units (PMUs) provide essential information for monitoring electricity activities, but their cost limits deployment \cite{app_powergrid_1, sensor_1}. In neuroscience, recording devices such as Neuropixels probes \cite{sensor_2} can monitor thousands of neurons simultaneously, yet the brain contains billions of neurons, making probe placement critical for understanding population-level dynamics \cite{app_neuro_1,app_neuro_3}.  
In weather forecasting, it is also important to allocate a limited number of observation stations to capture the dynamics of extremely high-dimensional atmospheric models \cite{app_weather_1, app_weather_2}. 

\section[]{Preliminaries \& Problem Setup}\label{sec:prelim}
\subsection{System \& Measurement Model}
Consider the following linear dynamical system
\begin{equation}\begin{split}\label{eq:sys}
    x_{t+1} = {}& Ax_{t} + Bu_{t} + w_{t}\in\bbR^r,
\end{split}\end{equation}
where $A$ and $B$ are unknown system matrices, $t$ is the time step and $r\gg1$ is the state dimension. Here  $u_{t}\in\bbR^m, x_{t}\in\bbR^r, w_{t}\in\bbR^r$ denote the input, state and process noise, respectively. 
For simplicity, we assume $x_{0} = 0$. We also assume $u_{t}\overset{\text{i.i.d.}}{\sim}\calN(0,\sigma_u^2I)$, $w_{t}\overset{\text{i.i.d.}}{\sim}\calN(0,\sigma_w^2I)$.\footnote{Our analysis can be easily adapted to inputs and noises with subGaussian distributions and non-isotropic variances.} 
With these parameters, we denote the above system by $\calM=(A,B,\sigma_u^2,\sigma_w^2)$.

Throughout the paper, we consider a target system 
satisfying the following two assumptions.
\begin{assumption}\label{assmp:stable}
    There exist positive constants $\rho_A < 1, \psi_A\geq 1$ such that $\norm{A^t} \leq \psi_A\rho_A^{t-1}, \quad \forall t\geq 0$.\qed
\end{assumption}
\noindent This assumption is standard in related literature \cite{sysid_5, zhang2024learning}. Intuitively, it assumes the $A$ matrix is ``stable'' so that states do not blow up over a long time period. 
    
\begin{assumption}\label{assmp:ctrb}
    $(A,B)$ is controllable, or equivalently, the following controllability matrix is full-row-rank
    \begin{equation*}
    \pushQED{\qed} 
        \begin{bmatrix}
        B & AB & \cdots & A^{r-1}B
    \end{bmatrix}.\qedhere
    \popQED
    \end{equation*}
\end{assumption}

We now describe how the system states $x_t$ are measured by sensors. Each sensor may be placed on one state coordinate for measurement. With $\barn$ sensors on coordinates $i_1, \cdots, i_{\barn}$, the corresponding measurement $y_t$ is given by
\begin{equation}\begin{split}\label{eq:measurement}
    y_{t} = \begin{bmatrix}
        e_{i_1}\t \\
        \vdots \\
        e_{i_{\barn}}\t
    \end{bmatrix} \b{x_t + \eta_t} = \begin{bmatrix}
        [x_t]_{i_1} +[\eta_t]_{i_1}\\
        \vdots \\
        [x_t]_{i_{\barn}}+[\eta_t]_{i_{\barn}}
    \end{bmatrix} \in\bbR^{\barn}.
\end{split}\end{equation}
Here $\eta_{t}$ denotes the measurement noise and we assume $\eta_{t}\overset{\text{i.i.d.}}{\sim}\calN(0,\sigma_{\eta}^2I)$. In matrix $[e_{i_1}\t,\cdots,e_{i_{\barn}}\t]\t$, each row is a distinct one-hot vector $e_i\t$, representing a sensor on the $i$-th state coordinate. We call such matrices, whose rows are distinct one-hot vectors, \textit{measurement matrices}. For any measurement matrix $M$, we also define its observability matrix 
\begin{equation*}\begin{split}
    O(M) \coloneqq \begin{bmatrix}
        M\t & (M A)\t & \cdots & (M A^{r-1})\t
    \end{bmatrix}\t.
\end{split}\end{equation*}

\subsection{The Sensor Allocation Problem}
As discussed in Section~\ref{sec:intro}, this paper focuses on unknown high-dimensional systems with a large state dimension $r$. The goal is to learn a sensor allocation that ensures observability with a small number of sensors. 
Since the system is unknown, a natural approach is to first identify the system from data and design the sensor allocation based on the estimated model. However, standard identification methods either require a massive number of sensors to cover all state coordinates or a sensor allocation that renders the system observable, both of which are impractical in practice.

To overcome this challenge, we propose the following problem setting where $K$ data trajectories are collected, each measuring a possibly different set of state coordinates. Specifically, let $\barn$ denote the number of available sensors during data collection. For each trajectory $k\in[K]$, we choose a measurement matrix $C_k\in\bbR^{\barn\times r}$ with $\barn$ sensors, inject inputs $\calU_k=\{u_{k,t}\}_{t\in[0,T]}$, and observe $\calY_k=\{y_{k,t}\}_{t\in[0,T+1]}$ as specified by \Cref{eq:measurement}.

Our goal is to choose measurement matrices $\{C_k\}_{k\in[K]}$, collect the datasets $\bigcup_{k\in[K]} \b{\calY_k\cup\calU_k}$, and learn a single measurement matrix $\htC$
such that $(A,\htC)$ is observable, or equivalently, the observability matrix $O(\htC)$ is full-column-rank. To avoid confusion, we will refer to the final output $\htC$ as the \textit{sensor allocation} to differentiate it from the \textit{measurement matrices} $\{C_k\}_{k\in[K]}$ used only during data collection.



\subsection{Overview of Our Solution}
In the following sections, we propose a two-stage framework to solve the problem. \textit{In stage one (\Cref{sec:sysid})}, we develop a novel SYSID algorithm that dynamically selects different state coordinates to measure in different state trajectories and jointly analyzes the multi-trajectory dataset. It estimates system matrices $A$ and $B$ accurately with any number of sensors $\barn\geq 1$ and without observability requirement. \textit{In stage two (\Cref{sec:placement})}, we design a data-driven sensor allocation algorithm that operates on the estimated system model in the first stage and outputs a near-optimal sensor allocation. 

 
\section{Stage One: SYSID with Dynamic Measurement Selection}\label{sec:sysid}
In this section, we present the SYSID algorithm outlined in \Cref{sec:prelim}. We discuss the rationale of the algorithm design and provide its theoretical guarantee. 
Results in this section may be of independent interest for learning high-dimensional linear systems with a limited number of sensors.

\subsection{Rationale of the algorithm design}\label{subsc:rationale}
We first state our principle to choose the measurement matrices $\{C_k\}_{k\in[K]}$: \textit{measure every state coordinate in at least one trajectory}.
As we will show later in \Cref{subsc:theory}, this principle guarantees the accurate recovery of system matrices $A$ and $B$. 
One can easily follow this principle for any given number of sensors $\barn$ with measurement matrices
\begin{equation*}\begin{split}
    C_k = {}& \begin{bmatrix}
        e_{(k-1)\barn+1}\t & e_{(k-1)\barn+2}\t & \cdots & e_{k\barn}\t
    \end{bmatrix}\t, ~\forall k\in[K],
\end{split}\end{equation*}
where $K = \lceil sr/\barn\rceil$ for any chosen positive integer $s\geq 1$. Here one-hot vector $e_i$ is defined cyclically (\Cref{sec:intro}, Notations) such that $e_{r+1}=e_1$, $e_{r+2}=e_2$, and so on. Intuitively, $\{C_k\}_{k\in[K]}$ cyclically allocate sensors on state coordinates so that each one of them is measured in at least $s$ trajectories.

With the chosen measurement matrices and the corresponding dataset $\bigcup_{k\in[K]} \b{\calY_k\cup\calU_k}$, we aim to learn the system matrices $A,B$. To motivate the learning algorithm, we consider the simple example where $K=r$ and $C_k = e_k\t (\forall k\in[r])$, i.e., the $k$-th trajectory measures the $k$-th state coordinate $\bs{x_{k,t}}_k$. 
To learn the system matrices, one naive approach is to perform least-squares on pairs $({\bs{x_{k,t+1}+\eta_{k,t+1}}_k, \bs{x_{k,t}+\eta_{k,t}}_k, u_{k,t}})$.
However, from the system dynamics $x_{k,t+1} = A x_{k,t} + B u_{k,t} + w_{k,t}$, we notice that $[x_{k,t+1}]_k$ only depends on $[x_{k,t}]_k$ through the $(k,k)$-th element of $A$. Therefore, this naive approach can recover no more than the diagonal elements of $A$.

Our algorithm follows a different approach. By a recursive expansion of the state dynamics (\Cref{eq:sys}),
\begin{align*}
    {}&  x_{k,t+1} = \sum_{\tau=0}^d A^{\tau}Bu_{k,t-\tau}\\
    + {}& \underbrace{\sum_{\tau=d+1}^{t} A^{\tau}Bu_{k,t-\tau} + \sum_{\tau=0}^t A^{\tau}w_{k,t-\tau}}_{\calZ_{k,t}}.
\end{align*}
Here $d$ is a hyperparameter to be chosen, and the last two terms are residuals and noises, denoted by $\calZ_{k,t}$. We write the above equation in the matrix form
\vspace{-1em}\begin{equation}\begin{split}\label{eq:alg_1}
    x_{k,t+1} = \underbrace{\begin{bmatrix}
        B & AB & \cdots & A^{d}B
    \end{bmatrix}}_{G}\underbrace{\begin{bmatrix}
        u_{k,t}\\
        \vdots\\
        u_{k,t-d}\\
    \end{bmatrix}}_{U_{k,t}(d)} + \calZ_{k,t},
\end{split}\end{equation}
and observe that state $x_{k,t+1}$ is a linear combination of the past inputs $u_{k,t-d:t}$, denoted by $U_{k,t}(d)$, with coefficients $G$. The columns of $G$, i.e. $B, AB, \cdots, A^dB$, are often referred to as the ``Markov parameters''. Therefore
\begin{equation*}\begin{split}
    y_{k,t+1} ={}&  C_k\b{x_{k,t+1} + \eta_{k,t+1}}\\
    = {}& C_kG U_{k,t}(d) + C_k\b{\calZ_{k,t}+\eta_{k,t+1}}.    
\end{split}\end{equation*}
Since the input $U_{k,t}(d)$, observation $y_{k,t+1}$ and measurement matrix $C_k$ are known, we can perform least-squares and learn the $G$ matrix accurately. The system matrices $A,B$ can be subsequently recovered from $G$.

\subsection{Algorithm}
\vspace{-0.5em}\begin{algorithm}[h]
    \caption{SYSID with Dynamic Measurement Selection}\label{alg:alloc}
    \begin{algorithmic}[1]     
        \State \textbf{Init: } trajectory length $T$, estimation rank $d\geq r$, number of sensors $\barn$, repetition $s$;
        
        \State Select $K \gets \lceil sr/\barn\rceil$ and $C_k$, $\forall k\in[K]$\vspace{-0.5em}
        \begin{equation*}\begin{split}
            C_k \gets {}& \begin{bmatrix}
                e_{(k-1)\barn+1}\t & e_{(k-1)\barn+2}\t & \cdots & e_{k\barn}\t
            \end{bmatrix}\t;
        \end{split}\end{equation*}

        \State Collect samples from system $\calM$:
        \For{$k\in[K]$}
            \State Inject inputs $\{u_{k,t}\sim\calN(0,\sigma_u^2 I)\}_{t\in[0,T]}$ and observe $\{y_{k,t}\}_{t\in[0,T+1]}$ with observer $C_k$;
        \EndFor

        \State Approximate Markov parameters via least-squares:\vspace{-0.5em}
        \begin{equation*}\begin{split}
            \wh{G} \gets {}& \mathop{\arg\min}_{G'} \sum_{k\in[K]}\sum_{t=d}^{T}\norm{y_{k,t+1} - C_k G' U_{k,t}(d)}^2\\
            {}& \eqqcolon \begin{bmatrix}
                \wh{B} & \wh{AB} & \cdots & \wh{A^dB}
            \end{bmatrix};
        \end{split}\end{equation*}
        \State \textbf{Output: } $\wh{G}$.
    \end{algorithmic}
\end{algorithm}

\vspace{-1em}Following the above design idea, we now present \Cref{alg:alloc}. It starts by choosing a set of measurement matrices $\{C_k\}_{k\in[K]}$, each specifying the measured state coordinates in one trajectory (line 2). 
With the measurement matrices, $K$ observation trajectories $\{y_{k,t}\}_{k\in[K],t\in[0,T+1]}$ are collected with the corresponding Gaussian inputs $\{u_{k,t}\}_{k\in[K],t\in[0,T]}$ (lines 3-5). 
The $G$ matrix, or Markov parameters $B, AB, \cdots, A^dB$ as in \Cref{eq:alg_1}, is approximated via least-squares from this dataset (line 6). As will be justified below, we choose $d$ larger than $r$ so that the system matrices can be recovered from approximated Markov parameters $\wh{G}$. 

With output $\htG\eqqcolon \begin{bmatrix}
    \wh{B} & \wh{AB} & \cdots & \wh{A^dB}
\end{bmatrix}$, one can recover the system matrices by the following procedure:
\begin{subequations}\label{alg:recover}
\begin{empheq}[box=\fbox]{align}
    {}& \wh{G}^- = [\htG]^{1:md}, \quad \wh{G}^+ = [\htG]^{m+1:m(d+1)}.\\
    {}& \wh{A} = \wh{G}^+ (\wh{G}^-)^{\dagger}, \quad \wh{B} = \big[\wh{G}^-\big]^{1:m}.
\end{empheq}    
\end{subequations}
\noindent We select the first and last $md$ columns of $\wh{G}$ to construct $\wh{G}^-$ and $\wh{G}^+$, which approximate $G^- \coloneqq {\begin{bmatrix}
    B & AB & \cdots & A^{d-1}B
\end{bmatrix}}$ and $G^+ \coloneqq {\begin{bmatrix}
    AB & A^2B & \cdots & A^dB
\end{bmatrix}} = AG^-$, respectively. When the system is controllable (Assumption \ref{assmp:ctrb}) and $d\geq r$, $G^-$ is full-row-rank and thus right-invertible. Therefore, if the approximations are accurate enough, $\wh{G}^+(\wh{G}^-)^\dagger \approx G^+(G^-)^\dagger = AG^-(G^-)^\dagger = A$. The $B$ matrix is recovered by selecting the first $m$ columns of $\wh{G}^-$.

\subsection{Theoretical Guarantee}\label{subsc:theory}
We now establish the theoretical guarantee for \Cref{alg:alloc}. Although \Cref{alg:alloc} adopts a specific set of measurement matrices $\{C_k\}_{k\in[K]}$ in line 2, the theoretical result below holds for any measurement matrices.
\begin{theorem}[Error Bound for Markov Parameters]\label{thm:markov}
    Consider system $\calM=(A,B,\sigma_u^2,\sigma_w^2)$, which satisfies Assumption \ref{assmp:stable}, and observation noise covariance $\sigma_{\eta}^2I$. Let $\psi_B = \max\{\norm{B},1\}$. 
    Consider any measurement matrices $\{C_k\}_{k\in[K]}$.
    For matrix $\sum_{k\in[K]}C_k\t C_k$, let $\calI=\big\{i:[\sum_{k\in[K]}C_k\t C_k]_i^i \neq 0\big\}$ denote the indices of its non-zero diagonal elements, and let $s_{\max}$ and $s_{\min}$ denote the maximum and minimum of the non-zero diagonal elements, respectively.

    Consider any $\delta\in(0,1)$. If the estimation rank $d\geq r$ and the trajectory length $T$ satisfy\vspace{-0.5em}
    \begin{equation}\begin{split}\label{eq:markov_1}
        T \geq {}&  2d + 64c\cdot rmd\big(\log^4(2mdT)+\log(\frac{60K}{\delta})\big)
    \end{split}\end{equation}
    for absolute constant $c$ in \cite[Theorem 4.1 with $L=1$]{krahmer2014suprema},
    then with probability at least $1-\delta$, $\wh{G}$ from \Cref{alg:alloc} satisfies:
    \begin{equation*}\begin{split}
        \norm{\bs{\wh{G}-G}_{\calI}} \leq {}& \kappa_1\sqrt{\frac{s_{\max}}{s_{\min}}\frac{md}{s_{\min}T}}.
    \end{split}\end{equation*}
    Here $\kappa_1 = 24\sqrt{6\dfrac{(\sigma_u^2+\sigma_w^2+\sigma_{\eta}^2)\psi_A^2\psi_B^2}{\sigma_u^2\rho_A^2(1-\rho_A)^2}\log\b{\dfrac{120}{\delta}}}$, $G = \begin{bmatrix}
        B & \cdots & A^dB
    \end{bmatrix}$.\qed
\end{theorem}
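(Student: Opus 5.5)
The least-squares problem in line 6 of \Cref{alg:alloc} decouples row by row. Since $C_k$ selects coordinates, $\norm{y_{k,t+1}-C_kG'U_{k,t}(d)}^2=\sum_{i:\,e_i\t\text{ row of }C_k}\big([y_{k,t+1}]_i-[G']_iU_{k,t}(d)\big)^2$. For each $i\in\calI$, the row $[\wh G]_i$ is therefore an ordinary least-squares estimate over the index set $\calK_i=\{k:\ e_i\t\text{ is a row of }C_k\}$, with $|\calK_i|=s_i:=[\sum_kC_k\t C_k]_i^i\in[s_{\min},s_{\max}]$. Writing $[y_{k,t+1}]_i=[G]_iU_{k,t}(d)+[\calZ_{k,t}+\eta_{k,t+1}]_i$, we get the error representation
\begin{equation*}
[\wh G-G]_i = \Big(\sum_{k\in\calK_i}\sum_{t=d}^T [\calZ_{k,t}+\eta_{k,t+1}]_i U_{k,t}(d)\t\Big)\Big(\sum_{k\in\calK_i}\sum_{t=d}^T U_{k,t}(d)U_{k,t}(d)\t\Big)^{-1}.
\end{equation*}
I will bound the denominator from below and the numerator from above, using the same trajectory-level events for every $i$. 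Then I will stack rows, using the fact that each trajectory contributes to several rows.

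\textbf{Step 1 (lower bound on the design).} For each trajectory $k$, let $\mathbf U_k=\sum_{t=d}^TU_{k,t}(d)U_{k,t}(d)\t$. This is a Gaussian quadratic form with block-Toeplitz (Hankel) structure, so I will invoke \cite[Theorem 4.1]{krahmer2014suprema} on suprema of chaos processes, as in Oymak--Ozay style analyses. With $T$ satisfying \eqref{eq:markov_1}, this should give $\mathbf U_k\succeq \tfrac{T}{2}\sigma_u^2I$ (with $T-d\ge T/2$) with probability $1-\delta/(60K)$ per trajectory. A union bound over $k\in[K]$ then gives, for every $i$, $\sum_{k\in\calK_i}\mathbf U_k\succeq \tfrac{s_iT\sigma_u^2}{2}I$.

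\textbf{Step 2 (noise term).} I split $\calZ_{k,t}+\eta_{k,t+1}$ into the truncated input tail $\sum_{\tau>d}A^\tau Bu_{k,t-\tau}$, the process noise, and the measurement noise. Assumption \ref{assmp:stable} bounds the variance of each coordinate by roughly $(\sigma_u^2\psi_B^2+\sigma_w^2)\psi_A^2/(\rho_A^2(1-\rho_A^2))$ plus $\sigma_\eta^2$. This explains the factors $\psi_A^2\psi_B^2/(\rho_A^2(1-\rho_A)^2)$ in $\kappa_1$. The tail term is correlated with $U_{k,t}(d)$ only through older inputs, but it is not independent across $t$. The natural tool is a self-normalized martingale bound, but the noise at time $t$ depends on past inputs that also appear in $U_{k,t'}(d)$. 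I will therefore handle it as in the Ho--Kalman literature: treat the residual as a sum of finitely many independent Gaussian chaos terms, bound each with Hanson--Wright or the same \cite{krahmer2014suprema} bound, and show $\norm{\sum_t \calZ_{k,t}U_{k,t}(d)\t}\lesssim \sqrt{md\,T}$ times the noise scale. I expect this to be the main obstacle, namely obtaining a $\sqrt{T}$ rate despite the dependence and keeping the $\log(120/\delta)$ dependence. Summing over $k\in\calK_i$, where trajectories are independent, gives a numerator of order $\sqrt{s_i\,md\,T}$.

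\textbf{Step 3 (stacking rows).} Combining Steps 1 and 2, each row satisfies $\norm{[\wh G-G]_i}\lesssim \sqrt{md/(s_iT)}$. To bound the operator norm over $\calI$ without paying a factor $\sqrt{|\calI|}$, I write $[\wh G-G]_{\calI}=\sum_k \Pi_k N_k \mathbf U^{-1}_{(i)}$, where $\Pi_k$ is the coordinate selection. Alternatively, I bound $\norm{[\wh G-G]_\calI}$ by $\max_i\norm{(\sum_{k\in\calK_i}\mathbf U_k)^{-1}}$ times $\norm{\sum_k C_k\t C_kN_k}$. Each $N_k=\sum_t(\calZ_{k,t}+\eta_{k,t+1})U_{k,t}(d)\t$ has norm at most $\sqrt{mdT}$, and a coordinate is touched by at most $s_{\max}$ trajectories, so the stacked numerator is at most $\sqrt{s_{\max}\,md\,T}$. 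Dividing by the smallest design eigenvalue, $s_{\min}T\sigma_u^2/2$, gives $\sqrt{\tfrac{s_{\max}}{s_{\min}}\tfrac{md}{s_{\min}T}}$ up to $\kappa_1$. This accounts for the extra $\sqrt{s_{\max}/s_{\min}}$ factor. The final failure probability comes from union bounds over the $K$ design events and the noise events, which together total $\delta$.
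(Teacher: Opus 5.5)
Your architecture matches the paper's: row-wise decoupling of the least squares, per-trajectory design concentration via \cite{krahmer2014suprema} with a union bound over $K$, a cross-term bound, then stacking. However, Step 3 rests on an invalid inequality.

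\textbf{Row-dependent inverses.} Write $\Delta:=\sum_k C_k^\top C_k N_k$, so your numerator for row $i$ is $[\Delta]_i$. Row $i$ of $[\widehat G-G]_{\mathcal I}$ is $[\Delta]_i\widehat\Sigma_i^{-1}$ with $\widehat\Sigma_i=\sum_{k\in\mathcal K_i}\mathbf U_k$. This right factor changes from row to row. The matrix with these rows need not have norm at most $\max_i\|\widehat\Sigma_i^{-1}\|\,\|\Delta\|$, because only a row-dependent \emph{scalar} (a diagonal left factor) can be pulled out of an operator norm. For example, rows $e_1^\top,e_2^\top$ multiplied on the right by $I$ and by a rotation sending $e_2$ to $e_1$ become $e_1^\top,e_1^\top$, and the norm grows from $1$ to $\sqrt2$. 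In general you lose $\sqrt{|\mathcal I|}$, i.e.\ an extra $\sqrt r$ in the rate. Your expression $\sum_k\Pi_kN_k\mathbf U^{-1}_{(i)}$ is ill-defined for the same reason.

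The paper's fix is to split $\widehat\Sigma_i^{-1}=\Sigma_i^{-1}+(\widehat\Sigma_i^{-1}-\Sigma_i^{-1})$ with $\Sigma_i^{-1}=\frac{1}{|\mathcal K_i|(T-d+1)\sigma_u^2}I$.
\begin{itemize}
\item The first piece is a diagonal row scaling, giving $\|\Delta\|/(s_{\min}(T-d+1)\sigma_u^2)$.
\item The second piece is bounded crudely with the $\sqrt r$ loss. It is nevertheless lower order, because the \emph{two-sided} concentration $\|\mathbf U_k-(T-d+1)\sigma_u^2I\|\lesssim\sigma_u^2\sqrt{mdT}$ up to logarithms gives $\|\widehat\Sigma_i^{-1}-\Sigma_i^{-1}\|\lesssim\sqrt{md}/(\sigma_u^2 s_iT^{3/2})$. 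Your Step 1 only states the lower bound, which is not enough.
\end{itemize}
The ratio of the second piece to the first is at most $\sqrt{64c\,rmd(\log^4(2mdT)+\log(2K/\delta))/T}$. This is exactly where the $64c\cdot rmd$ in \eqref{eq:markov_1} comes from. Your argument never uses that factor $r$, which is a symptom of the gap.

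\textbf{The numerator bound.} The claim is also not justified as stated. Per-trajectory bounds $\|N_k\|\lesssim\sqrt{mdT}$, together with ``each coordinate is touched by at most $s_{\max}$ trajectories'', do not imply $\|\Delta\|\lesssim\sqrt{s_{\max}mdT}$. Take $K=r$, $C_k=e_k^\top$, and $N_k=\beta e_kw^\top$ for a fixed unit $w$. Then $s_{\max}=1$, but $\Delta=\beta\mathbf 1w^\top$ has norm $\beta\sqrt r$.

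The $\sqrt{s_{\max}}$ scaling needs a joint concentration argument over trajectories. For fixed unit $v$, $v^\top\Delta=\sum_k(C_k^\top C_kv)^\top N_k$ has a variance proxy proportional to $\sum_k\|C_kv\|^2\le s_{\max}$; then take a net over $\mathbb S^{r-1}$. The paper does exactly this inside the self-normalized bound of \cite{abbasi2011improved} (Lemma~\ref{lem:cross_all}).

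That same device resolves the dependence you leave open in Step 2. Stack the noise vectors backward in time, including $\Sigma_w^{-1}(Bu_{k,t}+w_{k,t})$ for the truncation tail. Each noise coordinate is then independent of all preceding regressor columns, which involve only later inputs, so the martingale route you set aside does go through. Your Hanson--Wright/chaos alternative is plausible, but it is not carried out, and it would have to be applied to $v^\top\Delta$ jointly over $k$, not per trajectory.
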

In the above theorem, $C_k\t C_k$ is a diagonal matrix because $C_k$ is a measurement matrix whose rows are distinct one-hot vectors. The $i$-th diagonal element $[C_k\t C_k]_i^i$ equals $1$ if the $k$-th trajectory measures state coordinate $i$, and $0$ otherwise. 
Consequently, $[\sum_{k\in[K]}C_k\t C_k]_i^i$ counts the number of trajectories that measure coordinate $i$.

Intuitively, the theorem states that if we measure state coordinates $\calI$ across the trajectories, we can accurately learn the corresponding rows $[G]_{\calI}$ of the Markov parameters.
Under the simple measurement strategy (line 2 in \Cref{alg:alloc}), every state coordinate is measured in at least $s$ and at most $s+1$ trajectories, leading to $\calI=[r]$ and $s\leq s_{\min}\leq s_{\max}\leq s+1$. The error of $\wh{G}$ then simplifies to $\tilde{\calO}\b{\sqrt{md/(sT)}}$, which decays to $0$ as the total number of samples $sT$ grows.
We defer the detailed proof to \Cref{proof:thm1}.

Now we take a step further to bound the errors of $\htA, \htB$ recovered from $\wh{G}$ (\Cref{alg:recover}):
\begin{lemma}[Error Bound for System Matrices]\label{lem:param}
    Consider the setting of \Cref{thm:markov}. Suppose the system satisfies Assumption \ref{assmp:ctrb}. Suppose $\sum_{k\in[K]}C_k\t C_k$ is invertible.
    If 
    \begin{equation}\begin{split}\label{eq:param_2}
        \norm{\wh{G}-G} \leq \sigma_{\min}\b{G^-}/2,
    \end{split}\end{equation}
    then the outputs $\wh{G}$ from \Cref{alg:alloc} and $\wh{A},\wh{B}$ from \Cref{alg:recover} satisfy 
    \begin{equation*}\begin{split}
        \norm{\wh{A} - A} \leq \frac{6\norm{G}}{\sigma_{\min}^2\big(G^-\big)}\norm{\wh{G}-G}, \quad 
        \norm{\wh{B}-B} \leq {}& \norm{\wh{G}-G}.
    \end{split}\end{equation*}
    Here $G = \begin{bmatrix}
        B & \cdots & A^{d}B
    \end{bmatrix}$, $G^- = \begin{bmatrix}
        B & \cdots & A^{d-1}B
    \end{bmatrix}$.\qed
\end{lemma}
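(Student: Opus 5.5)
The $\wh{B}$ bound is immediate. By \Cref{alg:recover}, $\wh{B}=[\wh{G}]^{1:m}$ and $B=[G]^{1:m}$, so $\wh{B}-B$ is a column submatrix of $\wh{G}-G$. Its spectral norm is therefore at most $\norm{\wh{G}-G}$. The invertibility of $\sum_k C_k\t C_k$ enters only to make $\calI=[r]$, so that \Cref{thm:markov} controls the full matrix $\wh{G}-G$; the lemma itself is deterministic given \Cref{eq:param_2}.

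For $\wh{A}$, I would write $\Delta^-=\wh{G}^--G^-$ and $\Delta^+=\wh{G}^+-G^+$. Both are column submatrices of $\wh{G}-G$, so their norms are at most $\epsilon\coloneqq\norm{\wh{G}-G}$.

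\textbf{Step 1: rank and conditioning of $\wh{G}^-$.} By Assumption~\ref{assmp:ctrb} and $d\geq r$, $G^-$ has full row rank $r$, so $\sigma_{\min}(G^-)=\sigma_r(G^-)$. Weyl's inequality gives $\sigma_r(\wh{G}^-)\geq \sigma_r(G^-)-\epsilon\geq \sigma_{\min}(G^-)/2>0$. Hence $\wh{G}^-$ also has full row rank, $\wh{G}^-(\wh{G}^-)^\dagger=I_r$, and $\norm{(\wh{G}^-)^\dagger}\leq 2/\sigma_{\min}(G^-)$.

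\textbf{Step 2: error decomposition.} Since $G^+=AG^-$ and $\wh{G}^-(\wh{G}^-)^\dagger=I$, we have $A=A\wh{G}^-(\wh{G}^-)^\dagger$. Therefore
\begin{equation*}
\wh{A}-A=\b{\wh{G}^+-A\wh{G}^-}(\wh{G}^-)^\dagger=\b{\Delta^+-A\Delta^-}(\wh{G}^-)^\dagger .
\end{equation*}
This gives $\norm{\wh{A}-A}\leq(1+\norm{A})\,\epsilon\cdot 2/\sigma_{\min}(G^-)$.

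\textbf{Step 3: convert to the stated constant.} I would bound $\norm{A}=\norm{G^+(G^-)^\dagger}\leq\norm{G}/\sigma_{\min}(G^-)$. I would also bound $1\leq\norm{G}/\sigma_{\min}(G^-)$, which holds because $G^-$ is a submatrix of $G$, so $\sigma_{\min}(G^-)\leq\norm{G^-}\leq\norm{G}$. Combining, $\norm{\wh{A}-A}\leq 4\norm{G}\epsilon/\sigma_{\min}^2(G^-)$, which is within the stated factor of $6$.

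If one prefers the route through a pseudoinverse perturbation bound, $\norm{(\wh{G}^-)^\dagger-(G^-)^\dagger}\leq c\,\epsilon/\sigma_{\min}^2$ (Wedin), the constants degrade, and this is presumably where the $6$ originates. The main point requiring care is Step 1: we need full row rank to be preserved, so that the identity $A=A\wh{G}^-(\wh{G}^-)^\dagger$ holds exactly. This is exactly what \Cref{eq:param_2} guarantees; the remainder of the argument is routine norm bookkeeping.
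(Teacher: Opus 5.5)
Your proof is correct, and for $\wh{A}$ it takes a different route from the paper.

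The paper splits $\wh{A}-A=(\wh{G}^+-G^+)(\wh{G}^-)^\dagger+G^+\big((\wh{G}^-)^\dagger-(G^-)^\dagger\big)$. It handles the second term with Wedin's pseudoinverse perturbation bound, $\|(\wh{G}^-)^\dagger-(G^-)^\dagger\|\le 2\|(\wh{G}^-)^\dagger\|\,\|(G^-)^\dagger\|\,\|\wh{G}^--G^-\|$. This gives $\big(2/\sigma_{\min}(G^-)+4\|G\|/\sigma_{\min}^2(G^-)\big)\|\wh{G}-G\|$. The constant $6$ then comes from absorbing the first term via $\sigma_{\min}(G^-)\le\|G\|$, exactly as you guessed.

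You instead use the right-invertibility of $\wh{G}^-$, which both arguments obtain from Weyl's inequality, to write $A=A\wh{G}^-(\wh{G}^-)^\dagger$ exactly. The whole error then collapses to the single product $(\Delta^+-A\Delta^-)(\wh{G}^-)^\dagger$. This dispenses with pseudoinverse perturbation theory entirely and yields the better constant $4$.

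It also buys more than a constant. Your intermediate bound $2(1+\|A\|)\|\wh{G}-G\|/\sigma_{\min}(G^-)$ is of order $1/\sigma_{\min}(G^-)$ rather than $\|G\|/\sigma_{\min}^2(G^-)$, because Assumption~\ref{assmp:stable} gives $\|A\|\le\psi_A$. That is a real improvement for weakly controllable systems. The squared dependence in the paper comes from its decision to compare the two pseudoinverses directly. The paper's decomposition is the more standard template but offers no advantage here.
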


By \Cref{thm:markov}, condition $\|\wh{G}-G\| \leq \sigma_{\min}\b{G^-}/2$ in the above lemma is readily satisfied when $s_{\min}T$ is large enough. Under this condition, $\wh{G}^-$ (\Cref{alg:recover}) has the same rank as $G^-$, which, together with the controllability assumption (Assumption \ref{assmp:ctrb}), ensures that $\wh{G}^-$ is right-invertible and $\wh{A}$ is well-defined. 

Combining the above result with \Cref{thm:markov}, we obtain $\norm{\wh{A}-A}, \norm{\wh{B}-B} \leq \tilde{\calO}\b{\sqrt{md/(sT)}}$ when $T$ is sufficiently large. 
This confirms that \Cref{alg:alloc} accurately recovers the system matrices.


\section{Stage Two: Data-Driven Sensor Allocation}\label{sec:placement}
With the approximated system matrix $\htA$ from Stage One, we now learn an efficient sensor allocation to make the system observable with a small number of sensors. 

We present a greedy algorithm (\Cref{alg:2}) that learns such a sensor allocation by iteratively maximizing the observability matrix rank. 
At each iteration, the algorithm first estimates the  observability matrix rank under the current sensor allocation $\htC$ using a rank estimation subroutine $\calE$.
If  $\htC$ does not make the system observable, adding a sensor on a new state coordinate (or equivalently, adding a new row to $\htC$) may increase the observability matrix rank. 
The algorithm estimates rank increase $h_i$ for every state coordinate $i$ not measured by the current $\htC$ (line 3-4) using the rank-estimation subroutine $\calE$, which estimates the observability matrix rank for any given measurement matrix $\htC$. It subsequently adds a sensor on the coordinate with the largest rank increase (line 5). The iteration ends when observability matrix rank reaches $r$.

\begin{algorithm}[ht]
    \caption{Sensor Allocation}
    \label{alg:2}
    \begin{algorithmic}[1]     
        \State \textbf{Init: } coordinate set $\calI = [r]$, initial allocation $\htC\in\bbR^{0\times r}$, rank estimation subroutine $\calE$;
        \While{$\calE(\htC)<r$}
            \State Estimate rank increase of observability matrix:
            \For{any state coordinate $i\in\calI$ that is not currently measured
            }
            \begin{equation*}\begin{split}
                h_{i} \gets \calE\b{
                \begin{bmatrix}
                    \htC\\
                    e_i\t
                \end{bmatrix}} - \calE\b{\htC};
            \end{split}\end{equation*}
            \EndFor
            \State Select state coordinate with the largest rank increase:
            \begin{equation*}\begin{split}
                i^* \gets {}& \arg\max_i h_{i}, \quad
                \htC \gets \begin{bmatrix}
                        \htC\\
                        e_{i^*}\t
                    \end{bmatrix}.
            \end{split}\end{equation*}
        \EndWhile

        \State \textbf{Output: } $\htC\in\bbR^{\htn\times r}$.
    \end{algorithmic}
\end{algorithm}

In \Cref{alg:2}, the key component is the rank estimation subroutine $\calE$. Unlike existing literature that either assumes this rank information as prior knowledge or requires the exact system matrix $A$ \cite{sensor_1knownA, sensor_2knownA, olshevsky2014minimal}, we design a subroutine $\calE$ (\Cref{alg:rank}) to estimate the rank from $\htA$, the data-driven estimate of $A$ from \Cref{alg:alloc}. 
\begin{algorithm}[ht]
    \caption{Rank Estimation Subroutine $\calE$}
    \label{alg:rank}
    \begin{algorithmic}[1]
    \State \textbf{Initialization:} receive $1/(s_{\min}T)$ and $\htA$.
    \State \textbf{Input:} measurement matrix $C$; 
    \State Form observability matrix:
    \begin{equation*}\begin{split}
        \wh{O}(C) \gets \begin{bmatrix}
            C\t & (C\htA)\t & \cdots & (C\htA^{r-1})\t
        \end{bmatrix}\t;
    \end{split}\end{equation*}
    \State Rank estimation:
    \begin{equation*}\begin{split}
        \htr\b{C} \gets \max \left\{i: \sigma_i\b{\wh{O}(C)} > \sqrt[4]{\frac{1}{s_{\min}T}}\right\};
    \end{split}\end{equation*}
    \State \textbf{Output:} $\htr\b{C}$.
    \end{algorithmic}
\end{algorithm}
Specifically, given $\htA$, the subroutine first forms an approximation of the true observability matrix $O(C)= \begin{bmatrix}
    C\t & (CA)\t & \cdots & (CA^{r-1})\t
\end{bmatrix}\t$ for the query matrix $C$, denoted by $\htO(C)$ (line 3). When $\htA$ is sufficiently accurate, the singular values of $\htO(C)$ naturally split into two groups: $\rank(O(C))$ large singular values corresponding to the non-zero singular values of $O(C)$, and $r-\rank(O(C))$ small singular values corresponding to the zero singular values of $O(C)$. The threshold $\sqrt[4]{1/(s_{\min}T)}$ is carefully chosen to separate the two groups of singular values. Here the constants $s_{\min},T$ are defined in \Cref{thm:markov}.

Now we state the performance guarantee for \Cref{alg:2}.
\begin{theorem}\label{thm:alloc}
    Consider system $(A,B,\sigma_u^2,\sigma_w^2)$ satisfying Assumption \ref{assmp:stable} with $\psi_A,\rho_A$ and Assumption \ref{assmp:ctrb}. Consider $\htA$ satisfying 
$\norm{\htA-A} \leq \kappa_2\sqrt{s_{\max}md/s_{\min}^2T}$ 
for some constant $\kappa_2\geq1$. Let $\sigma_O \coloneqq \min_{C\in\calC}\sigma_{\min}(O(C))$, where $\calC$ denotes the set of all non-zero measurement matrices.
If 
    \begin{equation}\begin{split}\label{eq:sub1}
        {}& T > \max\left\{\frac{16}{s_{\min}\sigma_O^4},\frac{10\psi_A^8\kappa_2^4}{\rho_A^8(1-\rho_A)^8}\cdot \frac{s_{\max}^2}{s_{\min}^2}\frac{m^2d^2}{s_{\min}}\right\},
    \end{split}\end{equation}
    then the output $\htr(C)$ of Algorithm \ref{alg:rank} is accurate for any measurement matrix $C$. Namely,
    \begin{equation}\begin{split}\label{eq:sub2}
        \htr(C) = {}& \rank\b{O(C)} = \rank(\begin{bmatrix}
            C\t & \cdots & (CA^{r-1})\t
        \end{bmatrix}\t).
    \end{split}\end{equation}
    Furthermore, with \Cref{alg:rank} as the rank estimation subroutine, the output $\htC\in\bbR^{\htn\times r}$ of \Cref{alg:2} satisfies
    \begin{equation}\begin{split}\label{eq:sub5}
        (A, \htC) \text{ is observable}, \quad \htn \leq \b{1+\log(r)}\cdot n^*.
    \end{split}\end{equation}
    Here $n^*$ is the least number of sensors needed to make the system observable. \qed
\end{theorem}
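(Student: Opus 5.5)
The proof has two parts: first show that the rank estimator $\htr(C)$ is exact for every measurement matrix $C$, and then run the classical greedy set-cover style analysis of \Cref{alg:2}, which holds once the rank oracle is exact.

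\textbf{Step 1: perturbation of the observability matrix.} For any measurement matrix $C$ (rows are distinct one-hot vectors, so $\norm{C}\le 1$), bound $\norm{\htO(C)-O(C)}$ by splitting into blocks:
\[
\norm{\htO(C)-O(C)} \le \sum_{j=0}^{r-1}\norm{\htA^j-A^j}.
\]
Use the telescoping identity $\htA^j-A^j=\sum_{\ell=0}^{j-1}\htA^{\ell}(\htA-A)A^{j-1-\ell}$. Assumption \ref{assmp:stable} gives $\norm{A^t}\le\psi_A\rho_A^{t-1}$. For $\htA$, a perturbation argument gives $\norm{\htA^t}\le\psi_A\bar\rho^{\,t-1}$ with, say, $\bar\rho=(1+\rho_A)/2$, provided $\norm{\htA-A}$ is small relative to $(1-\rho_A)/\psi_A$. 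This should follow by expanding $(A+\Delta)^t$ and bounding the sum. Summing the resulting geometric-type series gives
\[
\norm{\htO(C)-O(C)}\le \frac{c'\psi_A^2}{\rho_A^2(1-\rho_A)^2}\norm{\htA-A}\le \frac{c'\psi_A^2\kappa_2}{\rho_A^2(1-\rho_A)^2}\sqrt{\frac{s_{\max}md}{s_{\min}^2T}}.
\]
The second term in \eqref{eq:sub1} is designed so that this quantity is strictly less than $\tfrac12(s_{\min}T)^{-1/4}$. Raising to the fourth power shows the $\psi_A^8\kappa_2^4/(\rho_A^8(1-\rho_A)^8)\cdot s_{\max}^2m^2d^2/s_{\min}^3$ scaling matches. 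I expect this uniform power bound on $\htA$, together with tracking the constants, to be the main technical obstacle. If the bound $\norm{\htA^t}\le\psi_A\bar\rho^{t-1}$ is hard to get cleanly, I would fall back on a crude bound $(\psi_A+\text{small})$ valid for the finite range $j\le r-1$.

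\textbf{Step 2: rank exactness via Weyl.} By Weyl's inequality, $|\sigma_i(\htO(C))-\sigma_i(O(C))|\le\norm{\htO(C)-O(C)}<\tfrac12(s_{\min}T)^{-1/4}$. For $i\le\rank O(C)$, we have $\sigma_i(O(C))\ge\sigma_O$, and the first term in \eqref{eq:sub1} gives $\sigma_O>2(s_{\min}T)^{-1/4}$. Hence $\sigma_i(\htO(C))>\tfrac32(s_{\min}T)^{-1/4}$, which exceeds the threshold. For $i>\rank O(C)$, we have $\sigma_i(O(C))=0$, so $\sigma_i(\htO(C))<\tfrac12(s_{\min}T)^{-1/4}$, which is below the threshold. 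This proves \eqref{eq:sub2} for every nonzero $C$. For the empty $C$, both sides are $0$ by convention.

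\textbf{Step 3: greedy guarantee.} With an exact oracle, \Cref{alg:2} is exactly the greedy algorithm maximizing $f(S)=\rank O(C_S)=\dim\big(\sum_{i\in S}\mathrm{row}(O(e_i^\top))\big)$ over coordinate sets $S$. This $f$ is the rank function of a sum of subspaces, hence monotone and submodular, integer-valued, and with $f([r])=r$ by the observability of the full-state sensor. The loop stops exactly when $f=r$, so $(A,\htC)$ is observable. Wolsey's bound for submodular set cover then gives $\htn\le(1+\log\max_i f(\{i\}))\,n^*\le(1+\log r)\,n^*$, as in \cite{olshevsky2014minimal}. For completeness I would sketch the argument: if $S^*$ is optimal, then by submodularity some $i\in S^*$ raises $f$ by at least $(r-f(S))/n^*$, which gives the standard harmonic-sum bound.
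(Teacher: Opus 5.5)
Your route is the paper's: bound $\|\hat{O}(C)-O(C)\|$ with a power-perturbation estimate, separate the singular values by Weyl's inequality using $\sigma_O>2\tau$ with $\tau:=(s_{\min}T)^{-1/4}$, then apply Wolsey's bound to the greedy algorithm for the monotone submodular function $S\mapsto\operatorname{rank}O(I_S)$. Your subspace-sum argument for submodularity and your handling of the empty $C$ are fine. The one step that fails as written is the constant bookkeeping in Step~1. This matters because \eqref{eq:sub1} fixes the constant $10$.

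That condition is calibrated to make the error smaller than $\tau$, not $\tau/2$. The paper's bound is $\|\hat{O}(C)-O(C)\|\le\frac{(4/3)^2\psi_A^2}{\rho_A^2(1-\rho_A)^2}\|\hat{A}-A\|$, and \eqref{eq:sub1} pushes this below $\tau$ only because $(16/9)^4\approx 9.99<10$. A $\tau/2$ margin would need about $160$ in place of $10$. Even the best constant your block-wise sum $\sum_j\|\hat{A}^j-A^j\|$ allows does not rescue it. Taking $A=\rho_A I$ with $\psi_A=1$ and $\rho_A\to1$ shows $c'$ cannot be below about $1$, so the $\tau/2$ target needs at least $2^4=16>10$. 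Your choice $\bar\rho=(1+\rho_A)/2$ costs further constant factors. The claimed bound $\|\hat{A}^t\|\le\psi_A\bar\rho^{\,t-1}$ can also fail at $t=1$ when $\|A\|=\psi_A$. What the binomial expansion actually gives (Lemma~\ref{lem:power_pert}) is $\|\hat{A}^t\|\le\frac{\psi_A}{\rho_A}\bigl(\rho_A+\frac{\psi_A}{\rho_A}\|\hat{A}-A\|\bigr)^t$.

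The repair is exactly the paper's accounting. First, \eqref{eq:sub1} implies $\|\hat{A}-A\|<\rho_A(1-\rho_A)/(4\psi_A)$. The base above is then below $(3\rho_A+1)/4$, and the series sums to at most $(4/3)^2/(1-\rho_A)^2$. Second, drop the symmetric split, because an error strictly below $\tau$ already suffices. For $i\le\operatorname{rank}O(C)$, Weyl gives $\sigma_i(\hat{O}(C))>2\tau-\tau=\tau$. For $i>\operatorname{rank}O(C)$, it gives $\sigma_i(\hat{O}(C))<\tau$.
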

\noindent Combining this result with \Cref{thm:markov}, \Cref{alg:2} is guaranteed to learn an efficient sensor allocation that is optimal up to constant $1+\log(r)$. 
We remark that the constants in \eqref{eq:sub1} reflect worst-case analysis. Our numerical experiments in \Cref{sec:sim} confirm that the algorithm succeeds with a moderate trajectory length in practice.


\section[]{Proofs for \Cref{thm:markov}, \ref{thm:alloc}, and \Cref{lem:param}}\label{proof:thm1}
In this section, we provide proofs of the aforementioned main theoretical results. Proofs of auxiliary lemmas used in this section are deferred to the appendices.
\subsection[]{Proof of \Cref{thm:markov}}

\textit{Proof.} We begin by analyzing the least-squares problem (line 6 of \Cref{alg:alloc}) and deriving the optimality condition for its solution $\wh{G}$. For each row of $\wh{G}$, we express its error as the combination of a noise term and an estimated signal covariance term (\textit{Step 1}). We then derive an upper bound for the noise term and the error of the estimated signal covariance inverse term (\textit{Step 2}). Finally, we concatenate the rows of $\wh{G}$ and upper bound the error concatenation (\textit{Step 3}). 

\textit{Step 1.} Consider the least-squares objective in line 6 of \Cref{alg:alloc}. By taking the gradient with respect to $\wh{G}$ and setting it to zero, we get the optimality condition on $\wh{G}$:
\begin{equation}\begin{split}\label{eq:markov_text6}
    {}& \sum_{k\in[K], t\in[d,T]} C_k\t \b{C_k \wh{G} U_{k,t}(d)-y_{k,t+1}}U_{k,t}(d)\t = 0.
\end{split}\end{equation}
Here $U_{k,t}(d) = \begin{bmatrix}
    u_{k,t}\t & \cdots & u_{k,t-d}\t
\end{bmatrix}\t$.

Recall $y_{k,t+1} = C_k\b{x_{k,t+1} + \eta_{k,t+1}}$. 
We expand $x_{k,t+1}$ as $x_{k,t+1} =  \sum_{\tau=0}^{t} A^{\tau} \b{B u_{k,t-\tau} + w_{k,t-\tau}}$ by state dynamics \Cref{eq:sys} and obtain the following expansion for $y_{k,t+1}$

\begin{equation}\begin{split}\label{eq:markov_text7}
    y_{k,t+1} = {}& C_k\b{\sum_{\tau=0}^{t} A^{\tau} \b{B u_{k,t-\tau} + w_{k,t-\tau}}+\eta_{k,t+1}}\\
    = {}& C_kGU_{k,t}(d) + C_k\underbrace{\b{\eta_{k,t+1} + \sum_{\tau=0}^{d} A^{\tau} w_{k,t-\tau}}}_{\Delta_{k,t+1}^2}\\
    {}&  + C_k\underbrace{\sum_{\tau=d+1}^{t} A^{\tau} \b{B u_{k,t-\tau} + w_{k,t-\tau}}}_{\Delta_{k,t+1}^1}.
\end{split}\end{equation}

\noindent Here recall that $G=\begin{bmatrix}
    B & \cdots & A^dB
\end{bmatrix}$ is the true Markov parameters. 
Substituting into the optimality condition of $\wh{G}$ (\Cref{eq:markov_text6}) and rearranging the terms give
\begin{equation}\begin{split}\label{eq:markov_text}
    {}& \sum_{k\in[K], t\in[d,T]} C_k\t C_k \b{\wh{G} - G }U_{k,t}(d)U_{k,t}(d)\t\\
    = {}& \underbrace{\sum_{k\in[K], t\in[d,T]} C_k\t C_k \b{\Delta_{k,t+1}^1 + \Delta_{k,t+1}^2}U_{k,t}(d)\t}_{\Delta}.
\end{split}\end{equation}

Consider any coordinate $i\in\calI=\big\{i:[\sum_{k\in[K]}C_k\t C_k]_i^i \neq 0\big\}$ that is measured in at least one trajectory. To bound the error of $[\wh{G}]_i$, we take the $i$-th row of \Cref{eq:markov_text} on both sides and get
\begin{equation}\begin{split}\label{eq:text9}
    \sum_{k\in[K], t\in[d,T]} \bs{C_k\t C_k}_i\b{\wh{G} - G} U_{k,t}(d)U_{k,t}(d)\t = [\Delta]_i\\
\end{split}\end{equation}
Since $C_k$ is the measurement matrix for trajectory $k$, we know
\begin{equation*}\begin{split}
    [C_k\t C_k]_i = \left\{
    \begin{array}{ll}
        e_i\t, \quad & \text{state coordinate } i \text{ is observed}\\
        & \text{in trajectory } k,\\
        0, \quad & \text{otherwise}.
    \end{array}
    \right.
\end{split}\end{equation*}
Therefore, $[C_k\t C_k]_i(\wh{G}-G)$ equals $[\wh{G}-G]_i$ if state coordinate $i$ is measured in trajectory $k$, and equals $0$ otherwise. It then follows by \Cref{eq:text9} that
\begin{equation}\begin{split}\label{eq:markov_text3}
    {}& \bs{\wh{G}-G}_i \underbrace{\sum_{k\in\calK_i, t\in[d,T]} U_{k,t}(d)U_{k,t}(d)\t}_{\wh{\Sigma}_i} = [\Delta]_i.
\end{split}\end{equation}
Here we use $\calK_i\coloneqq \{k: [C_k\t C_k]_i^i\neq 0\}$ to denote the trajectories where the $i$-th state coordinate is measured, and use 
 $\wh{\Sigma}_i$ to denote the estimated signal covariance.

\textit{Step 2.} To bound the error $\bs{\wh{G}-G}_i$ , we first upper bound norm of the noise term $\Delta$ defined in \Cref{eq:markov_text}. This is challenging because $\Delta_{k,t}^1$, $\Delta_{k,t}^2$ and $U_{k,t}(d)$ are dependent across time steps $t$. 
To tackle this issue, we carefully manipulate the martingale formed by the terms and utilize the self-normalizing martingale analysis from previous work \cite{abbasi2011improved,sysid_2}, yielding the following lemma:
\begin{lemma}\label{lem:cross_all}
    Consider the setting of \Cref{thm:markov}. With probability at least $1-\delta/2$,
    \begin{equation*}\begin{split}
        \norm{\Delta} \leq \kappa_4\cdot \sqrt{mds_{\max}(T-d+1)},
    \end{split}\end{equation*}
    where $\kappa_4 = \kappa_1\cdot \sigma_u^2/4$.\qed     
\end{lemma}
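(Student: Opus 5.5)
We have to bound $\norm{\Delta}$, where
\[
\Delta=\sum_{k,t} C_k\t C_k(\Delta^1_{k,t+1}+\Delta^2_{k,t+1})U_{k,t}(d)\t .
\]
The plan is to split $\Delta=\Delta^{(1)}+\Delta^{(2)}$ according to the two noise pieces and handle them separately.

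\textbf{Step 1: the row structure of $\Delta$.}
By the diagonal structure of $C_k\t C_k$, the $i$-th row of $\Delta$ is
\[
\sum_{k\in\calK_i}\sum_{t}[\Delta^1_{k,t+1}+\Delta^2_{k,t+1}]_i\,U_{k,t}(d)\t .
\]
Each row is therefore a sum over at most $s_{\max}$ independent trajectories. For any unit vectors $a\in\bbR^r$ and $b\in\bbR^{m(d+1)}$ we have
\[
a\t\Delta b=\sum_k\sum_t a\t C_k\t C_k(\Delta^1+\Delta^2)\,U_{k,t}(d)\t b .
\]
We will bound this scalar uniformly over $a,b$ by an $\epsilon$-net, using a $1/4$-net on each sphere. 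This costs a factor $9^{r+m(d+1)}$ in the failure probability. That factor becomes $\log$ terms of order $r+md$ inside the square root, which is where the $md$ (and implicitly $r\le d$) scaling comes from.

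\textbf{Step 2: the $\Delta^2$ term (self-normalized martingale).}
$\Delta^2_{k,t+1}=\eta_{k,t+1}+\sum_{\tau=0}^d A^\tau w_{k,t-\tau}$ is Gaussian, with covariance bounded by $(\sigma_\eta^2+\sigma_w^2\psi_A^2/(\rho_A^2(1-\rho_A^2)))I$ by Assumption~\ref{assmp:stable}. It is independent of $U_{k,t}(d)$ because inputs and noises are independent. I would condition on all inputs. Then $a\t\Delta^{(2)}b$ is a Gaussian linear form in the noises, and its variance is at most
\[
(d+1)\cdot\text{(noise variance)}\cdot\sum_{k,t}\norm{C_k a}^2 (b\t U_{k,t})^2 .
\]
The factor $(d+1)$ accounts for the overlap of the $\Delta^2$ windows across $t$. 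We then bound $\sum_t (b\t U_{k,t})^2\lesssim \sigma_u^2 (T-d+1)$ with high probability; this uses the same concentration as the covariance lemma, namely the Krahmer et al.\ bound on Toeplitz/block-Hankel Gaussian chaos together with condition \eqref{eq:markov_1}. Finally, $\sum_k\norm{C_ka}^2=a\t(\sum_k C_k\t C_k)a\le s_{\max}$. Alternatively, the self-normalized bound of \cite{abbasi2011improved} applied to the filtration over $t$ gives the same order. The result is $\sqrt{md\,s_{\max}(T-d+1)}$ times constants and a $\sqrt{\log(1/\delta)}$ factor.

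\textbf{Step 3: the $\Delta^1$ term (main obstacle).}
The term $\Delta^1_{k,t+1}=\sum_{\tau>d}A^\tau(Bu_{k,t-\tau}+w_{k,t-\tau})$ depends only on inputs and noise strictly before time $t-d$, so it is independent of $U_{k,t}(d)$. However, it is correlated across $t$ and is not adapted to a filtration in which $U_{k,t}$ is fresh in a simple way. My approach is to split the time index into $d+1$ interleaved subsequences $t\equiv j \pmod{d+1}$. Within one subsequence, $U_{k,t}(d)$ is independent of all past $\Delta^1$ and $U$ from earlier indices. Hence $\sum_t (a\t C_k\t C_k\Delta^1_{k,t+1})(U_{k,t}\t b)$ is a martingale with Gaussian increments conditional on the past. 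I apply the self-normalized inequality on each subsequence and union bound over the $d+1$ subsequences, which costs a factor $\sqrt{d}$ by Cauchy--Schwarz. The conditional variance involves $\norm{\Delta^1}^2\lesssim \psi_A^2\psi_B^2\rho_A^{2d}(\sigma_u^2+\sigma_w^2)/(\rho_A^2(1-\rho_A)^2)$. The factor $\rho_A^{2d}$ is $\le 1$, so we simply drop it, which explains the constants in $\kappa_1$. We bound $\sum_t\norm{\Delta^1_{k,t+1}}^2$ by Gaussian concentration (Hanson--Wright) at order $(T-d+1)$.

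Collecting the two pieces, taking the union over the net and over the $K$ trajectories (accounting for the $\log(60K/\delta)$ term), and splitting $\delta/2$ between the two pieces gives $\norm{\Delta}\le\kappa_4\sqrt{md\,s_{\max}(T-d+1)}$. The step I expect to be most delicate is Step 3, because of the temporal dependence and the need to make the $d$-dependence only $\sqrt{md}$ rather than $d\sqrt{m}$. If the interleaving loses too much, I would instead use the net over $b$ restricted to the $m(d+1)$-dimensional input space, so that the factor is absorbed into the net dimension.
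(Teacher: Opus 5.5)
\textbf{There is a gap in the rate: your proposal, as written, does not deliver $\sqrt{md}$.} Your overall architecture is sound: split $\Delta$ into the $\Delta^1$ and $\Delta^2$ pieces, use a net and Gaussian/martingale tail bounds, and use $\sum_k\|C_ka\|^2\le s_{\max}$. However, Steps 2 and 3 each lose a multiplicative factor $d+1$ inside the square root. What you actually prove, up to constants and $\log(1/\delta)$, is $\|\Delta\|\lesssim\sqrt{(d+1)(r+md)\,s_{\max}(T-d+1)}$. That is of order $d\sqrt{m\,s_{\max}T}$, not $\sqrt{md\,s_{\max}T}$.

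\textbf{Where the factor enters.}
In Step 2 it comes from your own variance bound. With the overlap factor $(d+1)$ in front, the union bound over the $9^{r+m(d+1)}$ net pairs gives $\sqrt{(d+1)\, s_{\max}(T-d+1)\,(r+md)}$. So the sentence ``the result is $\sqrt{md\,s_{\max}(T-d+1)}$'' does not follow.
In Step 3, Cauchy--Schwarz over the $d+1$ interleaved subsequences multiplies the predictable quadratic variation $\sum_t X_t^2$ by $d+1$. You then discard $\rho_A^{2d}$, which is exactly the factor that could have compensated.
Your fallback does not repair this. You are already netting over $b\in\mathbb{R}^{m(d+1)}$, and the net dimension enters additively inside the logarithm, whereas the interleaving loss is multiplicative.
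Separately, the ``alternative'' forward self-normalized argument for $\Delta^2$ fails as stated. Consecutive $\Delta^2_{k,t+1}$ share the noises $w_{k,t-d:t}$, so $\Delta^2_{k,t+1}U_{k,t}(d)^\top$ is not a martingale difference in $t$.

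\textbf{The missing idea and the fix.} Pay for temporal overlap with the geometric decay of $\|A^\tau\|$, not with the window length.
In Step 2, condition on the inputs and set $c_{k,t}=U_{k,t}(d)^\top b$ and $\alpha_k=C_k^\top C_k a$. The $w$-part then has variance $\sigma_w^2\sum_{k,s}\|\sum_{\tau=0}^d c_{k,s+\tau}(A^\tau)^\top\alpha_k\|^2\le\sigma_w^2\big(\sum_{\tau}\|A^\tau\|\big)^2\sum_k\|\alpha_k\|^2\sum_t c_{k,t}^2$ by Young's inequality, with $\sum_\tau\|A^\tau\|\le\psi_A/(\rho_A(1-\rho_A))$. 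The $\eta$-part has no overlap at all, so no factor of $d$ appears.
In Step 3, keep $\rho_A^{2d}$. Since $(d+1)\rho_A^{d}\le\sum_{j=0}^{d}\rho_A^j\le 1/(1-\rho_A)$, the interleaving factor is absorbed: $(d+1)\rho_A^{2d}/(1-\rho_A^2)\le 1/(1-\rho_A)^2$. This recovers the paper's order $\psi_A^2\psi_B^2/(\rho_A^2(1-\rho_A)^2)$.
With these two changes your route works, though only up to absolute constants. Your two-sided $1/4$-net and $\delta$-split will not reproduce the exact $\kappa_4=\kappa_1\sigma_u^2/4$.

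\textbf{Comparison with the paper.} The paper uses a one-sided net over $v\in\mathbb{S}^{r-1}$. For each of the three pieces it writes $v^\top(\cdot)$ as $\mathcal{Z}\Xi$. Here $\Xi$ stacks the noise vectors; for the tail term these are the whitened innovations $\Sigma_w^{-1}(Bu_{k,s}+w_{k,s})$, ordered backward in time. $\mathcal{Z}$ consists of future input windows times a block-Toeplitz matrix $\mathcal{T}$. The vector self-normalized bound handles the $b$-direction through the log-determinant. The Toeplitz bound $\|\mathcal{T}\|\le\sum_\tau\|A^{d+1+\tau}\Sigma_w\|\le\psi_A\psi_\Sigma\rho_A^d/(1-\rho_A)$ is what removes any $d$-dependence, with no interleaving needed.
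Your conditioning-on-inputs treatment of $\Delta^2$ is genuinely simpler, since it is exactly Gaussian. Your forward interleaving with swapped noise and regressor roles is a valid alternative for $\Delta^1$, provided the $\rho_A^{2d}$ decay is retained.
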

\noindent Importantly, $\norm{\Delta}$ scales sublinearly with ${s_{\max}(T-d+1)}$, where $s_{\max} = \max_i|\calK_i|$ is the maximum diagonal element of $\sum_{k\in[K]}C_k\t C_k$. 

We then show that the estimated signal covariance $\wh{\Sigma}_i$ is invertible and upper bound the deviation of its inverse $\wh{\Sigma}_i^{-1}$ from the true covariance inverse $\Sigma_i^{-1} \coloneqq 1/\b{|\calK_i|(T-d+1)\sigma_u^2}I$.

By a covariance concentration bound (\Cref{lem:input_conc} in Appendix \ref{proof:thm}), we know that for $T$ satisfying \Cref{eq:markov_1}, with probability at least $1-\delta/(2K)$, the following concentration holds for any trajectory $k\in[K]$,
\begin{equation*}\begin{split}
    {}& \frac{1}{T-d+1} \sum_{t=d}^T U_{k,t}(d)U_{k,t}(d)\t \succeq \frac{\sigma_u^2}{2}I,\\
    {}& \norm{\frac{1}{T-d+1} \sum_{t=d}^T U_{k,t}(d)U_{k,t}(d)\t - \sigma_u^2I} \leq \kappa_5 \sqrt{\frac{md}{T}},
\end{split}\end{equation*}
where $\kappa_5 = 2\sigma_u^2\sqrt{c\b{\log^4(2mdT)+\log\b{2K/\delta}}}$ and $c$ is an absolute constant from previous work \cite[Theorem 4.1 with $L=1$]{krahmer2014suprema}.
Thus, with a union bound, the above holds for all trajectories $k\in[K]$ with probability at least $1-\delta/2$.
When the above events hold, it naturally follows by the definition of $\wh{\Sigma}_i$ that for all $i\in[r]$,
\begin{equation*}\begin{split}
    {}& \wh{\Sigma}_i = \sum_{k\in\calK_i, t\in[d,T]} U_{k,t}(d)U_{k,t}(d)\t \succeq \frac{\sigma_u^2}{2}|\calK_i|(T-d+1)I,\\
\end{split}\end{equation*}
This confirms the invertibility of $\wh{\Sigma}_i$. Moreover,
\begin{equation*}\begin{split}
        {}& \norm{\wh{\Sigma}_i - \Sigma_i} \leq \sum_{k\in\calK_i} \norm{\sum_{t=d}^T U_{k,t}(d)U_{k,t}(d)\t - \sigma_u^2(T-d+1)I}\\
    {}& \hspace{4.5em}\leq  |\calK_i|T\cdot \kappa_5 \sqrt{\frac{md}{T}}.
\end{split}\end{equation*}
Thus, with probability at least $1-\delta/2$, the following holds for all $i\in[r]$
\begin{equation}\begin{split}\label{eq:markov_10}
    {}& \norm{\wh{\Sigma}_i^{-1} - \Sigma_i^{-1}} \leq \norm{\wh{\Sigma}_i^{-1}}\norm{\Sigma_i - \wh{\Sigma}_i}\norm{\Sigma_i^{-1}}\\
    \leq {}& \b{\frac{2}{\sigma_u^2|\calK_i|(T-d+1)}} \cdot \b{|\calK_i|T\kappa_5 \sqrt{\frac{md}{T}}}\\
    {}& \hspace{2em} \cdot \b{\frac{1}{\sigma_u^2|\calK_i|(T-d+1)}}\\
    \leq {}& \frac{2\kappa_5}{\sigma_u^4(T-d+1)^2}\sqrt{\frac{mdT}{s_{\min}^2}}\\
\end{split}\end{equation}
Here in the last inequality, we have used $s_{\min}=\min_{i\in\calI}|\calK_i|$, which is the minimum nonzero diagonal element of $\sum_{k\in[K]}C_k\t C_k$.
This inequality shows that the distance between $\wh{\Sigma}_i^{-1}$ and $\Sigma_i^{-1}$ decays superlinearly with $T$.

\textit{Step 3.} 
Finally, we concatenate $[\wh{G}-G]_i$ for all $i\in\calI\coloneqq\{i_1,\cdots,i_{|\calI|}\}$ and upper bound the norm of the concatenation. 
\begin{equation}\begin{split}\label{eq:markov_text5}
    {}& \norm{\left[\wh{G}-G\right]_{\calI}} = \norm{\begin{bmatrix}
        [\Delta]_{i_1} \wh{\Sigma}_{i_1}^{-1}\\
        \vdots\\
        [\Delta]_{i_{|\calI|}} \wh{\Sigma}_{i_{|\calI|}}^{-1}\\
    \end{bmatrix}}\\
    \leq {}& \norm{\begin{bmatrix}
        [\Delta]_{i_1}\Sigma_{i_1}^{-1}\\
        \vdots\\
        [\Delta]_{i_{|\calI|}} \Sigma_{i_{|\calI|}}^{-1}\\
    \end{bmatrix}} + \norm{\begin{bmatrix}
        [\Delta]_{i_1} \b{\wh{\Sigma}_{i_1}^{-1}-\Sigma_{i_1}^{-1}}\\
        \vdots\\
        [\Delta]_{i_{|\calI|}} \b{\wh{\Sigma}_{i_{|\calI|}}^{-1}-\Sigma_{i_{|\calI|}}^{-1}}\\
    \end{bmatrix}}.
\end{split}\end{equation}

In the last line of \Cref{eq:markov_text5}, the first term equals the following matrix by the definition $\Sigma_i^{-1}=1/\b{|\calK_i|(T-d+1)\sigma_u^2}I$:
\begingroup\makeatletter\def\f@size{9}\check@mathfonts\makeatother
\begin{equation*}\begin{split}
    \norm{\begin{bmatrix}
        \dfrac{1}{|\calK_{i_1}|(T-d+1)\sigma_u^2}[\Delta]_{i_1}\\
        \vdots\\
        \dfrac{1}{|\calK_{i_{|\calI|}}|(T-d+1)\sigma_u^2}[\Delta]_{i_{|\calI|}}\\
    \end{bmatrix}} \leq \dfrac{1}{s_{\min}(T-d+1)\sigma_u^2}\norm{\Delta}.
\end{split}\end{equation*}
\endgroup
The upper bound holds because $s_{\min}= \min_{i\in\calI}|\calK_i|$.

In the last line of \Cref{eq:markov_text5}, the second term is upper bounded by 
\begingroup\makeatletter\def\f@size{9}\check@mathfonts\makeatother
\begin{equation*}\begin{split}
    {}& \sqrt{|\calI|}\cdot \max_{i\in\calI}\norm{[\Delta]_i \b{\wh{\Sigma}_i^{-1}-\Sigma_i^{-1}}}\\
    \leq {}& \sqrt{r}\norm{\Delta}\max_{i\in\calI}\norm{\wh{\Sigma}_i^{-1}-\Sigma_i^{-1}} \leq \frac{2\kappa_5}{\sigma_u^4}\sqrt{\dfrac{rmdT}{s_{\min}^2(T-d+1)^4}}\norm{\Delta}.
\end{split}\end{equation*}
\endgroup
Here the last inequality is by \Cref{eq:markov_10}. 

Substitute the two bounds into \Cref{eq:markov_text5} and we get 
\begingroup\makeatletter\def\f@size{9}\check@mathfonts\makeatother
\begin{equation*}\begin{split}
    {}& \norm{\left[\wh{G}-G\right]_{\calI}}\\
    \leq {}& \dfrac{1}{s_{\min}(T-d+1)\sigma_u^2}\norm{\Delta} + \frac{2\kappa_5}{\sigma_u^4}\sqrt{\dfrac{rmdT}{s_{\min}^2(T-d+1)^4}}\norm{\Delta}.
\end{split}\end{equation*}
\endgroup
When \Cref{eq:markov_1} holds, the second term is smaller than the first term. Therefore, we conclude that
\begin{equation*}\begin{split}
    \norm{\left[\wh{G}-G\right]_{\calI}} \leq {}& \dfrac{2}{s_{\min}(T-d+1)\sigma_u^2}\kappa_4\sqrt{mds_{\max}T}\\
    \leq {}& \kappa_1 \sqrt{\frac{s_{\max}}{s_{\min}}\frac{md}{s_{\min}T}}.
\end{split}\end{equation*}
In the last line we have used $T-d+1\geq T/2$.\qed

\subsection[]{Proof of \Cref{lem:param}}
\begin{proof}
    For notational simplicity, we first define
    \begin{equation*}\begin{split}
        G \coloneqq {}& \begin{bmatrix}
            B & \cdots & A^{d}B
        \end{bmatrix}, G^- \coloneqq \begin{bmatrix}
            B & \cdots & A^{d-1}B
        \end{bmatrix},\\
        G^+ \coloneqq {}& \begin{bmatrix}
            AB & \cdots & A^{d}B
        \end{bmatrix} = AG^-.\\
    \end{split}\end{equation*}
    Because the system is controllable and $d\geq r$, $G^-$ is full-row-rank and has right pseudo-inverse, denoted as $(G^-)^\dagger$.

    We will also use the following notations where $\htG$ is the output of \Cref{alg:alloc}:
    \begin{equation*}\begin{split}
        \wh{G} \coloneqq {}& \begin{bmatrix}
            \wh{B} & \cdots & \wh{A^{d}B}
        \end{bmatrix},
        \wh{G}^- \coloneqq \begin{bmatrix}
            \wh{B} & \cdots & \wh{A^{d-1}B}
        \end{bmatrix},\\
        \wh{G}^+ \coloneqq {}& \begin{bmatrix}
            \wh{AB} & \cdots & \wh{A^{d}B}
        \end{bmatrix}.
    \end{split}\end{equation*}
    Since $\htG^--G^-$ is a submatrix of $\htG-G$, we know that
    \begin{equation*}\begin{split}
        \norm{\htG^--G^-} \leq \norm{\htG-G} \leq \frac{\sigma_{\min}(G^-)}{2},
    \end{split}\end{equation*}
    where the last inequality holds by \Cref{eq:param_2}.
    Therefore, $\sigma_r(\htG^-)$ is lower bounded as follows 
    \begin{equation}\begin{split}\label{eq:text10}
        {}& \sigma_{r}\b{\wh{G}^-} \geq \sigma_r\b{G^-} - \norm{\wh{G}^- - G^-}\\
        = {}& \sigma_{\min}\b{G^-} - \norm{\wh{G}^- - G^-} \geq \frac{\sigma_{\min}\b{G^-}}{2}. 
    \end{split}\end{equation}
    We then conclude that $\wh{G}^-$ is full-row-rank and has right pseudo-inverse, denoted by $(\wh{G}^-)^\dagger$.

    We now upper bound $\|\wh{A}-A\|$. 
    Since $\wh{A}=\wh{G}^+(\wh{G}^-)^{\dagger}$
    and $A=G^+(G^-)^\dagger$, we know that
    \begin{equation}\begin{split}\label{eq:text11}
        {}& \norm{\wh{A} - A} = \norm{\wh{G}^+\big(\wh{G}^-)^{\dagger} - G^+\b{G^-}^\dagger}\\
        \leq {}& \norm{\big(\wh{G}^+ - G^+)\big(\wh{G}^-\big)^\dagger} + \norm{G^+\b{\big(\wh{G}^-\big)^\dagger - \b{G^-}^\dagger}}\\
        \leq {}& \norm{\wh{G}^+ - G^+}\norm{\big(\wh{G}^-\big)^\dagger} + \norm{G^+}\norm{\big(\wh{G}^-\big)^\dagger - \b{G^-}^\dagger}\\
    \end{split}\end{equation}
    For the first term, we have
    \begin{equation*}\begin{split}
        {}& \norm{\wh{G}^+ - G^+} \leq \norm{\wh{G}-G},\\
        {}& \norm{\big(\wh{G}^-\big)^\dagger} = \frac{1}{\sigma_r(\htG^-)} \leq \frac{2}{\sigma_{\min}\b{G^-}}.
    \end{split}\end{equation*}
    Here the first inequality is because $\wh{G}^+-G^+$ is a submatrix of $\wh{G}-G$, and the second inequality is by \Cref{eq:text10}.
    For the second term, we have
    \begin{equation*}\begin{split}
        \norm{\big(\wh{G}^-\big)^\dagger - \big(G^-\big)^\dagger}
        \overset{(i)}{\leq} {}& 2\norm{\big(\wh{G}^-\big)^\dagger}\norm{\big(G^-\big)^\dagger}\norm{\wh{G}^- - G^-}\\
        \leq {}& 2\frac{2}{\sigma_{\min}(G^-)}\frac{1}{\sigma_{\min}(G^-)}\norm{\wh{G}^- - G^-}\\
        \leq {}& \frac{4}{\sigma_{\min}^2\b{G^-}}\norm{\wh{G}-G}.
    \end{split}\end{equation*}
    Here $(i)$ is by \cite[Theorem 4.1]{proof_3wedin1973perturbation}.
    Substituting upper bounds of the two terms back into \Cref{eq:text11} gives
    \begin{equation*}\begin{split}
        \norm{\wh{A} - A} \leq {}& \b{\frac{2}{\sigma_{\min}\big(G^-\big)} + \frac{4\norm{G}}{\sigma_{\min}^2\b{G^-}}}\norm{\wh{G}-G}\\
        \leq {}& \frac{6\norm{G}}{\sigma_{\min}^2\big(G^-\big)}\norm{\wh{G}-G}.
    \end{split}\end{equation*}
    In the last line, we have used $\sigma_{\min}\b{G^-} \leq \norm{G^-} \leq \norm{G}$.

    For $\wh{B}$, we note that $\wh{B}-B = [\wh{G}]^{1:m} - [G]^{1:m} = [\wh{G}-G]^{1:m}$ is a submatrix of $\wh{G}-G$. Therefore $\|\wh{B}-B\|\leq \|\wh{G}-G\|$.
\end{proof}

\subsection{Proof of \Cref{thm:alloc}}
\begin{proof}
    For simplicity, we let $c_1 =s_{\max}md/s_{\min}$ and $\rho_{\Delta} \coloneqq \norm{\htA-A} \leq \kappa_2\sqrt{c_1/(s_{\min}T)}$.

    \textbf{Step 1: We first prove \Cref{eq:sub2}.} To do this, we first upper bound error of the estimated observability matrix $\htO(C)$ for any non-zero measurement matrix $C$.
    \begin{equation*}\begin{split}
        \norm{\htO(C)-O(C)} \leq \sum_{i=1}^{r-1} \norm{C(A^{i}-\htA^i)} \leq \sum_{i=1}^{r-1} \norm{A^{i}-\htA^i}.
    \end{split}\end{equation*}    
    \noindent We relate error in $\htA^i$ to error in $\htA$ with the following lemma:
    \begin{lemma}\label{lem:power_pert}
        Consider any matrix $A$ satisfying Assumption \ref{assmp:stable} with constants $\psi_A,\rho_A$. Consider any perturbation $\Delta$ satisfying $\norm{\Delta} \leq \rho_{\Delta}$. The following holds for any positive integer $n$:
        \begin{equation}\begin{split}
            \norm{(A+\Delta)^n - A^n} \leq n\b{\frac{\psi_A}{\rho_A}}^2\b{\rho_A+\frac{\psi_A}{\rho_A}\rho_{\Delta}}^{n-1}\rho_{\Delta}.\qedhere
        \end{split}\end{equation}
    \end{lemma}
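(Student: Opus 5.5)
We will prove \Cref{lem:power_pert} by the standard telescoping identity for differences of matrix powers. Writing $\tilde A \coloneqq A+\Delta$, we have
\begin{equation*}
    \tilde A^n - A^n = \sum_{j=0}^{n-1} \tilde A^{\,n-1-j}\,\Delta\, A^{j},
\end{equation*}
which is checked by telescoping $\tilde A^{n-j}A^{j}-\tilde A^{n-1-j}A^{j+1}=\tilde A^{n-1-j}\Delta A^{j}$. Taking norms gives $\norm{\tilde A^n - A^n}\leq \rho_{\Delta}\sum_{j=0}^{n-1}\norm{\tilde A^{n-1-j}}\norm{A^j}$. It therefore suffices to bound each summand $\norm{\tilde A^{k}}\norm{A^j}$ (with $k+j=n-1$) by $(\psi_A/\rho_A)^2(\rho_A+\frac{\psi_A}{\rho_A}\rho_\Delta)^{n-1}$. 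Summing the $n$ terms then produces the claimed factor $n$.

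For the unperturbed factor we use Assumption \ref{assmp:stable}: $\norm{A^j}\leq \psi_A\rho_A^{j-1} = \frac{\psi_A}{\rho_A}\rho_A^{j}$ for all $j\geq 0$. The case $j=0$ is also fine, since $\psi_A/\rho_A\geq 1$. For the perturbed factor we expand $(A+\Delta)^k$ as a sum over words in $\{A,\Delta\}$. Each word with $\ell$ copies of $\Delta$ consists of at most $\ell+1$ maximal blocks of powers of $A$, and each block $A^{p}$ is bounded by $\frac{\psi_A}{\rho_A}\rho_A^{p}$. This gives the bound $(\frac{\psi_A}{\rho_A})^{\ell+1}\rho_A^{k-\ell}\rho_\Delta^{\ell}$ for every word with $\ell$ perturbations. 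There are $\binom{k}{\ell}$ such words, so
\begin{equation*}
    \norm{\tilde A^k}\leq \frac{\psi_A}{\rho_A}\sum_{\ell=0}^{k}\binom{k}{\ell}\rho_A^{k-\ell}\Big(\frac{\psi_A}{\rho_A}\rho_\Delta\Big)^{\ell} = \frac{\psi_A}{\rho_A}\Big(\rho_A+\frac{\psi_A}{\rho_A}\rho_\Delta\Big)^{k}.
\end{equation*}
Multiplying by $\norm{A^j}\leq\frac{\psi_A}{\rho_A}\rho_A^j\leq \frac{\psi_A}{\rho_A}(\rho_A+\frac{\psi_A}{\rho_A}\rho_\Delta)^j$ yields exactly the per-term bound needed.

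The only delicate step is the word-counting bound on $\norm{\tilde A^k}$: the bound must carry one factor $\psi_A/\rho_A$ per block of $A$'s rather than per $A$. Attaching the extra $\psi_A/\rho_A$ to each $\Delta$ absorbs it into the binomial base, so we avoid a blow-up like $\psi_A^k$. Empty blocks, such as adjacent $\Delta$'s or a $\Delta$ at either end, are handled by the convention $\norm{A^0}=1\leq \psi_A/\rho_A$. An equivalent route is induction on $k$ via $\tilde A^{k}=A^{k}+\sum_{j}\tilde A^{j}\Delta A^{k-1-j}$, but the direct expansion is cleaner. Everything else is routine.
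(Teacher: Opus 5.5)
Your proposal is correct and matches the paper's proof essentially step for step. The paper likewise first bounds $\|(A+\Delta)^k\|\le\frac{\psi_A}{\rho_A}\bigl(\rho_A+\frac{\psi_A}{\rho_A}\rho_\Delta\bigr)^k$ by expanding into words, charging one factor $\psi_A/\rho_A$ per maximal block of $A$'s (at most one more block than the number of $\Delta$'s). It then applies the telescoping identity $M_1^n-M_2^n=\sum_{i=0}^{n-1}M_1^{n-1-i}(M_1-M_2)M_2^i$ and bounds each of the $n$ summands by $(\psi_A/\rho_A)^2\bigl(\rho_A+\frac{\psi_A}{\rho_A}\rho_\Delta\bigr)^{n-1}\rho_\Delta$.
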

    \noindent We apply the above lemma and get
    \begingroup\makeatletter\def\f@size{8}\check@mathfonts\makeatother
    \begin{equation*}\begin{split}
        {}& \norm{\htO(C)-O(C)} \leq \sum_{i=1}^{r-1} i\b{\frac{\psi_A}{\rho_A}}^2\b{\rho_A+\frac{\psi_A}{\rho_A}\rho_{\Delta}}^{i-1}\rho_{\Delta}\\
        = {}& \frac{\psi_A^2}{\rho_A^2}\rho_{\Delta}\frac{1-r(\rho_A+\psi_A\rho_{\Delta}/\rho_A)^{r-1}+(r-1)(\rho_A+\psi_A\rho_{\Delta}/\rho_A)^r}{(1-(\rho_A+\psi_A\rho_{\Delta}/\rho_A))^2}\\
        \leq {}& \frac{\psi_A^2}{\rho_A^2}\rho_{\Delta}\frac{1-(r-1)(\rho_A+\psi_A\rho_{\Delta}/\rho_A)^{r-1}\b{1-\rho_A-\psi_A\rho_{\Delta}/\rho_A}}{(1-(\rho_A+\psi_A\rho_{\Delta}/\rho_A))^2}.
    \end{split}\end{equation*}
    \endgroup
    
    \noindent By \Cref{eq:sub1}, $T > 16\psi_A^2\kappa_2^2c_1/(\rho_A(1-\rho_A))^2$ and therefore $\rho_A+\psi_A\rho_{\Delta}/\rho_A < (3\rho_A+1)/4 < 1$. It then follows that,
    \begin{equation*}\begin{split}
        {}& \norm{\htO(C)-O(C)} \leq \frac{(4/3)^2\psi_A^2}{\rho_A^2(1-\rho_A)^2}\rho_{\Delta}\\
        \leq {}& \frac{(4/3)^2\psi_A^2}{\rho_A^2(1-\rho_A)^2}\sqrt{\frac{\kappa_2^2c_1}{s_{\min}T}}< \sqrt[4]{\frac{1}{s_{\min}T}}.
    \end{split}\end{equation*}
    Here the last inequality is also by \Cref{eq:sub1}.

    Now we define $r_c\coloneqq \rank\b{O(C)}$ and get
    \begin{equation*}\begin{split}
        \sigma_{r_c}\b{\wh{O}(C)} \geq {}& \sigma_{r_c}\b{O(C)} - \norm{\wh{O}(C)-O(C)}\\
        \geq {}& \sigma_O - \sqrt[4]{\frac{1}{s_{\min}T}} > \sqrt[4]{\frac{1}{s_{\min}T}}.
    \end{split}\end{equation*}
    The last inequality is by \Cref{eq:sub1}. On the other hand, for positive integer $i > r_c$, we have $\sigma_i(O(C)) = 0$ and therefore
    \begin{equation*}\begin{split}
        \sigma_i\b{\wh{O}(C)} \leq {}& \sigma_i\b{O(C)} + \norm{\wh{O}(C)-O(C)} < \sqrt[4]{\frac{1}{s_{\min}T}}.
    \end{split}\end{equation*}
    
    On the other hand, $\htr(C)$ is defined as the largest $i$ with $\sigma_i\b{\wh{O}(C)} > \sqrt[4]{1/(s_{\min}T)}$ (\Cref{alg:rank}). Therefore, for any measurement matrix $C$, $\htr(C) = r_c = \rank\b{O(C)}$.

    \textbf{Step 2. We now prove \Cref{eq:sub5}.} This part of the proof is largely based on Theorem 1 in \cite{wolsey1982analysis}, or Section ``Submodular Optimization Algorithms'' in \cite{clark2017submodularity}. It is clear \Cref{alg:2} is the greedy algorithm for the following problem
    \begin{equation*}\begin{split}
        \min_{\calS\subseteq [r]} |\calS| \quad \text{s.t.}\quad g(\calS) \geq r,
    \end{split}\end{equation*}
    where $g\b{\calS} = \rank\b{O\b{I_{\calS}}}$ with $I_{\calS} = [I_r]_{\calS}$ representing the measurement matrix with sensors on state coordinates $\calS$.
    By Step 1, we know that $\htr(I_{\calS}) = \rank\b{O\b{I_{\calS}}} = g\b{\calS}$ for any $\calS\subseteq [r]$. This is equivalent to knowing function $g(\cdot)$.

    By Theorem 7 of \cite{summers2015submodularity}, $g(\calS)$ is submodular and monotone increasing. This satisfies the condition of Theorem 1 in \cite{wolsey1982analysis}
    We then conclude by Theorem 1 in \cite{wolsey1982analysis} that the output $\htC\in\bbR^{\htn\times r}$ of \Cref{alg:2} satisfies
    \begin{equation*}\begin{split}
        \htn \leq \b{1+\log(r)} n^*, \quad \rank(O(\htC)) \geq r.
    \end{split}\end{equation*}
    The second inequality implies $\rank(O(\htC)) = r$. 
    \qed
\end{proof}

\section[]{Extension to Systems with Inaccessible State Coordinates}\label{sec:extension}
The results in previous sections, i.e., \Cref{thm:markov}, \Cref{lem:param}, and \Cref{thm:alloc}, assume that sensors can be allocated on any state coordinate. In practice, however, certain state coordinates may be inaccessible. In this section, we extend the algorithms and analysis to handle such settings. 

Let $\calJ=\{j_1, \cdots, j_{J}\}\subset[r]$ denote the set of accessible state coordinates with cardinality $J=|\calJ|$. We make the following assumption throughout this section:
\begin{assumption}\label{assmp:obs}
    Let $I_{\calJ} \coloneqq \begin{bmatrix}
        e_{j_1}\t & \cdots & e_{j_{J}}\t
    \end{bmatrix}\t = [I_r]_{\calJ}$. Suppose $(A,I_{\calJ})$ is observable.\qed
\end{assumption}
\noindent Under this assumption, there exists at least one sensor allocation that ensures observability by placing sensors only on state coordinates in $\calJ$. Our goal is to approximate the system parameters, and find a more efficient allocation that achieves observability with fewer sensors.

\subsection{SYSID with Inaccessible State Coordinates}
As in \Cref{sec:sysid}, we follow the principle that every accessible state
coordinate in $\calJ$ is measured in at least $s$ trajectories for a chosen integer $s\geq 1$. To this end, we let $f_a \coloneqq e_{j_a}\in\bbR^r$ be the one-hot vector with $1$ on the $j_a$-th coordinate for $a\in[J]$. We also extend $f$ cyclically by $f_0\coloneqq f_J$ and $f_a\coloneqq f_{a\bmod J}$ for any integer $a>J$, so that $f_{J+1}=f_1$, $f_{J+2}=f_2$, and so on. 
We then choose the number of trajectories $K = \lceil sJ/\barn\rceil$, where $\barn$ is the number of sensors, and define the measurement matrices
\begin{equation}\label{eq:C_unacc}
    C_k = \begin{bmatrix}
        f_{(k-1)\barn+1}\t & \cdots & f_{k\barn}\t
    \end{bmatrix}\t, \quad \forall k\in[K].
\end{equation}
Intuitively, $\{C_k\}_{k\in[K]}$ cyclically allocates sensors over the accessible coordinates $j_1, \ldots, j_J$ so that each accessible coordinate is measured in at least $s$ trajectories.

We then run \Cref{alg:alloc}, with line 2 replaced by the above measurement matrices (\Cref{eq:C_unacc}). As a result, by \Cref{thm:markov}, the rows of $G$ indexed by $\calJ$, denoted $[G]_{\calJ}$, can be learned up to an error of order $\kappa_1\sqrt{md/(sT)}$.

\begin{remark}[Recovery of $A,B$ via Ho-Kalman]\label{rem:ho_kalman}
Beyond the Markov parameters $[\htG]_{\calJ}$, the system matrices $A,B$ can also be recovered, up to a similarity transformation, via the Ho-Kalman algorithm \cite[Theorem 5.3]{sysid_5}. Specifically, consider the estimated Hankel matrix $\htH(I_{\calJ})$
(defined in \Cref{eq:hankel} in the following subsection), constructed from $[\htG]_{\calJ}$. With $d\geq 2r-1$ and Assumption \ref{assmp:obs}, we can recover $\htA$ and $\htB$ by taking the rank-$r$ SVD $\htH(I_{\calJ}) = U\Sigma V\t$ and setting
\begingroup\makeatletter\def\f@size{9}\check@mathfonts\makeatother
\begin{subequations}\label{alg:recover_unacc}
\begin{empheq}[box=\fbox]{align*}
    \htR \coloneqq {}& \Sigma^{1/2}V\t,\htR^- \coloneqq [\htR]^{1:rm},\htR^+ \coloneqq [\htR]^{m+1:(r+1)m}.\\
    \htA \coloneqq {}& \htR^+ \b{\htR^-}^\dagger, ~~\htB \coloneqq [\htR^-]^{1:m}.
\end{empheq}
\end{subequations}
\endgroup
\end{remark}

\subsection{Sensor Allocation with Inaccessible State Coordinates}
We now focus on learning an efficient sensor allocation with the
estimated Markov parameters $[\htG]_{\calJ}$. The
key ingredient is the following estimated Hankel matrix
\begin{align}\label{eq:hankel}
    \htH(C) \coloneqq {}& \begin{bmatrix}
        C\wh{B} & C\wh{AB} & \cdots & C\wh{A^{r}B}\\
        C\wh{AB} & C\wh{A^2B} & \cdots & C\wh{A^{r+1}B}\\
        \vdots & \vdots & \ddots & \vdots\\
        C\wh{A^{r-1}B} & C\wh{A^{r}B} & \cdots & C\wh{A^{2r-1}B}
    \end{bmatrix}.
\end{align}
For the rest of this section, we will evaluate $\htH(C)$ at measurement matrix $C$ that only measures coordinates in $\calJ$. For such $C$, each row is $e_j\t$ for some $j \in \calJ$. Consequently, each row of $C\wh{A^iB}$ is a row of $[\wh{A^iB}]_{\calJ}$, and therefore a submatrix of $[\htG]_{\calJ}$. Therefore, $\htH(C)$ can be formed by reorganizing $[\htG]_{\calJ}$. 

We also write $H(C)$ for the corresponding true Hankel matrix, defined identically as $\htH(C)$ (\Cref{eq:hankel}) with true Markov parameters $B, \cdots, A^{2r-1}B$. When $[\wh{G}]_{\calJ}$ is close to the true Markov parameters $[G]_{\calJ}$, $\htH(C)$ is also close to $H(C)$. On the other hand, because the system is controllable under Assumption \ref{assmp:ctrb}, we have $\rank(H(C))=\rank(O(C))$ via the factorization $H(C)=O(C)\begin{bmatrix}
    B & \cdots & A^{r}B
\end{bmatrix}$. Utilizing this intuition, we will design an algorithm that estimates the rank of $O(C)$ using $\htH(C)$ as a proxy for $H(C)$.

Concretely, we adapt \Cref{alg:2} by making two modifications.
\textit{First}, we restrict the search space to measurement matrices $\htC$ that only measure state coordinates in $\calJ$ by setting the coordinate set to $\calI=\calJ$.
\begin{algorithm}[ht]
    \caption{Rank Estimation Subroutine $\calE$ with Inaccessible State Coordinates}
    \label{alg:rank_unacc}
    \begin{algorithmic}[1]
    \State \textbf{Init: } receive $[\htG]_{\calJ}=[I_{\calJ}\wh{B} ~ \cdots ~ I_{\calJ}\wh{A^dB}]$ with $d\geq 2r-1$, and threshold $\sqrt[4]{1/(s_{\min}T)}$;
    \State \textbf{Input:} query measurement matrix $C$;
    \State Form Hankel matrix $\htH(C)$ (\Cref{eq:hankel}) using $[\htG]_{\calJ}$;
    \State Rank Estimation:\vspace{-0.5em}
    \begin{equation*}\begin{split}
        \htr\b{C} \gets \max \left\{i: \sigma_i\b{\wh{H}(C)} > \sqrt[4]{\frac{1}{s_{\min}T}}\right\};
    \end{split}\end{equation*}
    \State \textbf{Output:} $\htr\b{C}$.
    \end{algorithmic}
\end{algorithm}
\textit{Second}, we replace the rank estimation subroutine with \Cref{alg:rank_unacc}. The subroutine forms the estimated Hankel matrix $\htH(C)$ (line 3) and thresholds its singular values (line 4). When $\htH(C)$ is sufficiently close to the true Hankel matrix $H(C)$, the singular values of $\htH(C)$ split into two groups: $\rank(H(C))$ large values corresponding to the nonzero singular values of $H(C)$, and the remaining small values corresponding to the zero singular values of $H(C)$. The threshold $\sqrt[4]{\frac{1}{s_{\min}T}}$ is chosen to separate these two groups so that $\htr(C)=\rank(H(C))= \rank(O(C))$.
\begin{remark}
    Alternatively, one could form $\htO(C) := U\Sigma^{1/2}$ from the rank-$r$ SVD $\htH(C) = U\Sigma V\t$ and threshold its singular values, as in the balanced realization framework of \cite{proof_2tsiamis2019finite}. Doing so requires an SVD perturbation bound (e.g., \cite[Theorem 4]{proof_2tsiamis2019finite}) that introduces an additional factor of $\sqrt{r/\sigma_H}$ in the perturbation analysis, yielding a worse polynomial dependence on $r$ in \Cref{eq:sub3}.\qed
\end{remark}

We now state the guarantee for \Cref{alg:2} with coordinate set $\calI=\calJ$ and rank estimation subroutine \Cref{alg:rank_unacc}. 
\begin{theorem}\label{thm:rank2}
    Consider system $(A,B,\sigma_u^2,\sigma_w^2)$ satisfying Assumptions \ref{assmp:ctrb} and \ref{assmp:obs}. Consider Markov parameter estimate $\htG$ satisfying $\norm{[\htG-G]_{\calJ}} \leq \kappa_3\sqrt{s_{\max}md/(s_{\min}^2T)}$ for $G=\begin{bmatrix}
        B & \cdots & A^{d}B
    \end{bmatrix}$ with $d\geq 2r-1$ and some constant $\kappa_3\geq 1$. Define $\sigma_H \coloneqq \min_{C\in\calC_{\calJ}}\sigma_{\min}(H(C))$ where $\calC_{\calJ}$ denotes the set of all non-zero measurement matrices that only measure state coordinates in $\calJ$. If 
    \begingroup\makeatletter\def\f@size{9}\check@mathfonts\makeatother
    \begin{equation}\begin{split}\label{eq:sub3}
        T >\max \left\{\frac{\kappa_3^4r^2m^2d^2}{s_{\min}}\frac{s_{\max}^2}{s_{\min}^2},\frac{16}{\sigma_H^4s_{\min}}\right\},
    \end{split}\end{equation}
    \endgroup
    then $\htr(C)$ from Algorithm \ref{alg:rank_unacc} satisfies the following for any measurement matrix $C$ that only measures coordinate in $\calJ$,
    \begingroup\makeatletter\def\f@size{9}\check@mathfonts\makeatother
    \begin{equation}\begin{split}\label{eq:sub4}
        \htr(C) = {}& \rank(O(C)) = \rank\b{\begin{bmatrix}
            C\t & \cdots & (CA^{r-1})\t
    \end{bmatrix}\t}.
    \end{split}\end{equation}
    \endgroup
    Furthermore, the output $\htC\in\bbR^{\htn\times r}$ of \Cref{alg:2}, with rank-estimation subroutine \Cref{alg:rank_unacc}, satisfies
    \begin{equation}\begin{split}\label{eq:sub6}
        (A, \htC) \text{ is observable}, \quad \htn \leq \b{1+\log(r)}\cdot \tiln^*.
    \end{split}\end{equation}
    Here $\tiln^*$ is the minimum number of sensors needed to make the system observable when sensors can only be placed on state coordinates in $\calJ$.\qed
\end{theorem}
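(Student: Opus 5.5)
The proof follows the template of \Cref{thm:alloc}, with the observability matrix replaced by the Hankel matrix $\htH(C)$. It has three parts: bound $\norm{\htH(C)-H(C)}$ for every $C\in\calC_{\calJ}$; use the singular-value gap to show that thresholding recovers $\rank(H(C))$; and identify $\rank(H(C))$ with $\rank(O(C))$, after which the submodular greedy argument applies on the ground set $\calJ$.

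\textbf{Step 1 (perturbation).} Take $C\in\calC_{\calJ}$. Every block $C\wh{A^{i}B}$ in $\htH(C)$ is made of rows of $[\htG]_{\calJ}$, since $2r-1\le d$. So $\htH(C)-H(C)$ is a block Hankel matrix with $r$ block rows. The $\ell$-th block row is $C$ times a contiguous window of $[\htG-G]_{\calJ}$, namely $C[\htG-G]^{(\ell-1)m+1:(\ell+r)m}$. Its norm is at most $\norm{C[\htG-G]}$, and because $C$ selects distinct rows inside $\calJ$, this is at most $\norm{[\htG-G]_{\calJ}}$. Stacking the $r$ block rows gives
\begin{equation*}
\norm{\htH(C)-H(C)}\le \sqrt{r}\,\norm{[\htG-G]_{\calJ}}\le \kappa_3\sqrt{\frac{r s_{\max}md}{s_{\min}^2T}}.
\end{equation*}
By the first term of \Cref{eq:sub3}, $T^{1/2}>\kappa_3^2 rmd\,s_{\max}/s_{\min}^{3/2}$. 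A short rearrangement then shows the right-hand side is strictly below $(s_{\min}T)^{-1/4}$; this uses $r\le r^2$, $md\ge1$, and $s_{\max}\ge s_{\min}$.

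\textbf{Step 2 (thresholding).} Let $r_c=\rank(H(C))$. Weyl's inequality gives $\sigma_{r_c}(\htH(C))\ge\sigma_H-(s_{\min}T)^{-1/4}$. The second term of \Cref{eq:sub3} gives $\sigma_H> 2(s_{\min}T)^{-1/4}$, so this exceeds the threshold. For $i>r_c$, Weyl gives $\sigma_i(\htH(C))\le\norm{\htH(C)-H(C)}<(s_{\min}T)^{-1/4}$. Hence $\htr(C)=r_c$.

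\textbf{Step 3 (rank identity).} We have the factorization $H(C)=O(C)\,[B~AB~\cdots~A^{r}B]$. The right factor has full row rank $r$ by \Cref{assmp:ctrb}, so $\rank(H(C))=\rank(O(C))$. This proves \Cref{eq:sub4}.

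\textbf{Step 4 (greedy guarantee).} On the ground set $\calJ$, define $g(\calS)=\rank(O(I_{\calS}))$ for $\calS\subseteq\calJ$. By Step 3, \Cref{alg:2} with $\calI=\calJ$ is exactly the greedy algorithm for $\min|\calS|$ subject to $g(\calS)\ge r$. The function $g$ is monotone and submodular by Theorem 7 of \cite{summers2015submodularity}; restricting the ground set does not change this. By \Cref{assmp:obs}, $g(\calJ)=r$, so the problem is feasible and its optimum is $\tiln^*$. Theorem 1 of \cite{wolsey1982analysis} then gives observability and $\htn\le(1+\log r)\tiln^*$.

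The main obstacle is Step 1. One has to check carefully that every Hankel block lies within the estimated rows and columns, which requires $d\ge 2r-1$ and $C$ supported on $\calJ$. One also has to get the $\sqrt r$ stacking factor, and then verify that \Cref{eq:sub3} dominates the resulting constant.
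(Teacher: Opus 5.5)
Your proposal is correct and follows the paper's proof step for step. The common steps are the $\sqrt{r}$ block-row stacking bound on $\norm{\htH(C)-H(C)}$, Weyl-based thresholding with $\sigma_H>2\sqrt[4]{1/(s_{\min}T)}$, the rank identity via $H(C)=O(C)\begin{bmatrix}B & \cdots & A^{r}B\end{bmatrix}$ and controllability, and the submodular greedy bound of \cite{wolsey1982analysis} on the ground set $\calJ$. One small correction: the first term of \eqref{eq:sub3} is exactly equivalent to $\kappa_3\sqrt{r s_{\max}md/(s_{\min}^2T)}<\sqrt[4]{1/(s_{\min}T)}$, so the auxiliary inequalities $r\le r^2$, $md\ge 1$, $s_{\max}\ge s_{\min}$ that you cite are not needed.
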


\section{Numerical Experiments}\label{sec:sim}
We evaluate our main algorithms (\Cref{alg:alloc,alg:2}) on the following two models.
\textbf{Model 1: block-cyclic system, all coordinates accessible.} We let $A=0.9I_5\otimes A_{\text{blk}}$ and $B=I_{20}$, where $A_{\text{blk}} = [e_2\ e_3\ e_4\ e_1]\in\bbR^{4\times 4}$ is the cyclic-shift permutation.
The block-cyclic structure of $A$ implies observability requires at least $5$ sensors (one per block).
We run \Cref{alg:alloc} with $\barn=5$ sensors, $K\in\{4,8,16\}$ trajectories, and varying lengths $T$. The estimation error in \Cref{fig:results}(a) decays as $K$ and $T$ increase. We then run \Cref{alg:2} with subroutine \Cref{alg:rank} using Markov parameters learned at $K=16$, $T=20000$. The learned $\htC$ (\Cref{fig:results}(b)) achieves observability with an optimal number of $\htn=5$ sensors.

\textbf{Model 2: thermal dynamics with inaccessible coordinates.} We consider a $20$-zone HVAC system \cite{li2021distributed} with $20$-dimensional states (zone temperature) and inputs (zone air-flow rate).
The zones are organized into $5$ separate two-by-two blocks; within each block, only neighboring zones are considered connected. 
Let $x\in\bbR^{20}$ and $u\in\bbR^{20}$ denote the concatenation of temperatures (states) and air flow rates (inputs) of the $20$ zones. Dynamics of zone $i$ is described by 
    \begingroup\makeatletter\def\f@size{9}\check@mathfonts\makeatother
\begin{equation*}\begin{split}
    [x_{t+1}]_i ={}&  [x_t]_i + \underbrace{\frac{\Delta}{v_i\xi_i}\b{\theta - [x_t]_i}}_{\text{heat transfer with env.}} + \underbrace{\frac{\Delta}{v_i}[u_t]_i}_{\text{air flow input}}\\
    {}& + \underbrace{\sum_{j\in[20]} \frac{\Delta}{v_i\xi_{ij}}\b{[x_t]_j - [x_t]_i}}_{\text{heat transfer among zones}} + \underbrace{\frac{\sqrt{\Delta}}{v_i}[w_t]_i}_{\text{noise}}.
\end{split}\end{equation*}
\endgroup
Here we choose time resolution $\Delta=35s$, environment temperature $\theta=0^\circ\text{C}$, inter-zone thermal resistance $\xi_{ij}=1^\circ\text{C/kW}$ for connected zones (else $\infty$), thermal resistance $\xi_i=1^\circ\text{C/kW}$, thermal capacity $v_i=100\text{kJ}/^\circ\text{C}$, and variances $\sigma_w^2 = \sigma_{\eta}^2 = 1$, and $\sigma_{u}^2 = 10$. 

Only the first three coordinates in each block are accessible, giving $\calJ = [20] \setminus\{4,8,12,16,20\}$. Observability requires at least $10$ sensors, two in each block. We run \Cref{alg:alloc} with $\barn=5$ sensors. The estimation error $\|[\wh{G}-G]_{\calJ}\|$ in \Cref{fig:results}(c) decays rapidly with $K$ and $T$. We then run \Cref{alg:2} with coordinate set $\calJ$ and subroutine \Cref{alg:rank_unacc} using Markov parameters learned at $K=12$, $T=20000$. The learned $\htC$ (\Cref{fig:results}(d)) achieves observability with an optimal number of $\htn=10$ sensors.

\begin{figure}[!t]\centering
\setlength{\tabcolsep}{2pt}
\newlength{\plotH}\settoheight{\plotH}{\includegraphics[width=0.45\linewidth]{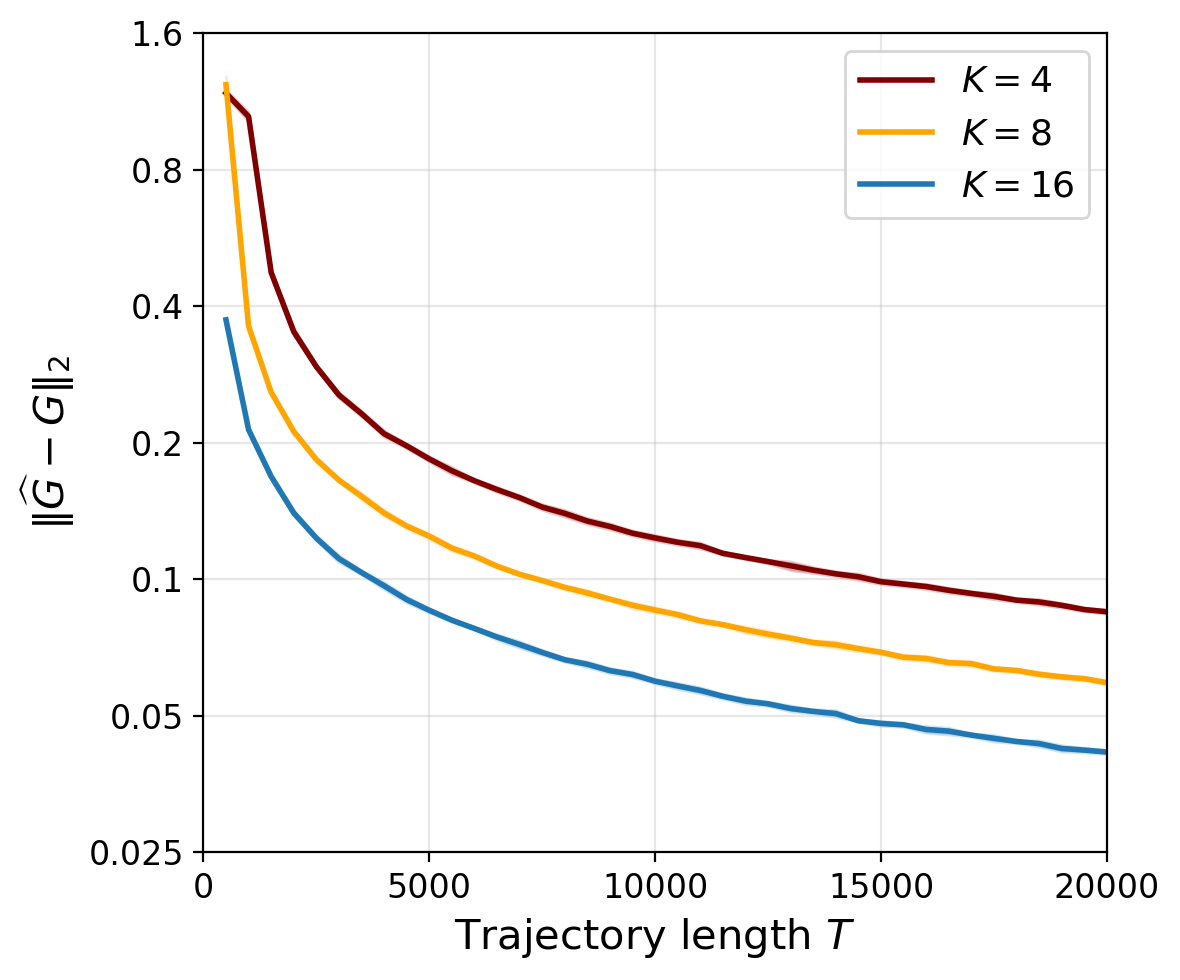}}
\newlength{\allocH}\settoheight{\allocH}{\includegraphics[width=0.4\linewidth]{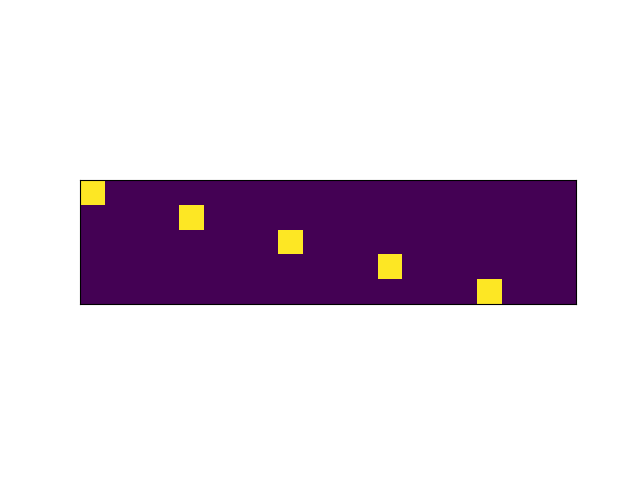}}
\begin{tabular}{c@{~}c@{\hspace{6pt}}c@{~}c}
\raisebox{\dimexpr \plotH-20pt\relax}{(a)} & \includegraphics[width=0.4\linewidth]{figs/Gerr_logy_model1.png} &
\raisebox{\dimexpr \plotH-20pt\relax}{(b)} & \includegraphics[width=0.4\linewidth]{figs/C.png} \\[2pt]
\raisebox{\dimexpr \plotH-20pt\relax}{(c)} & \includegraphics[width=0.4\linewidth]{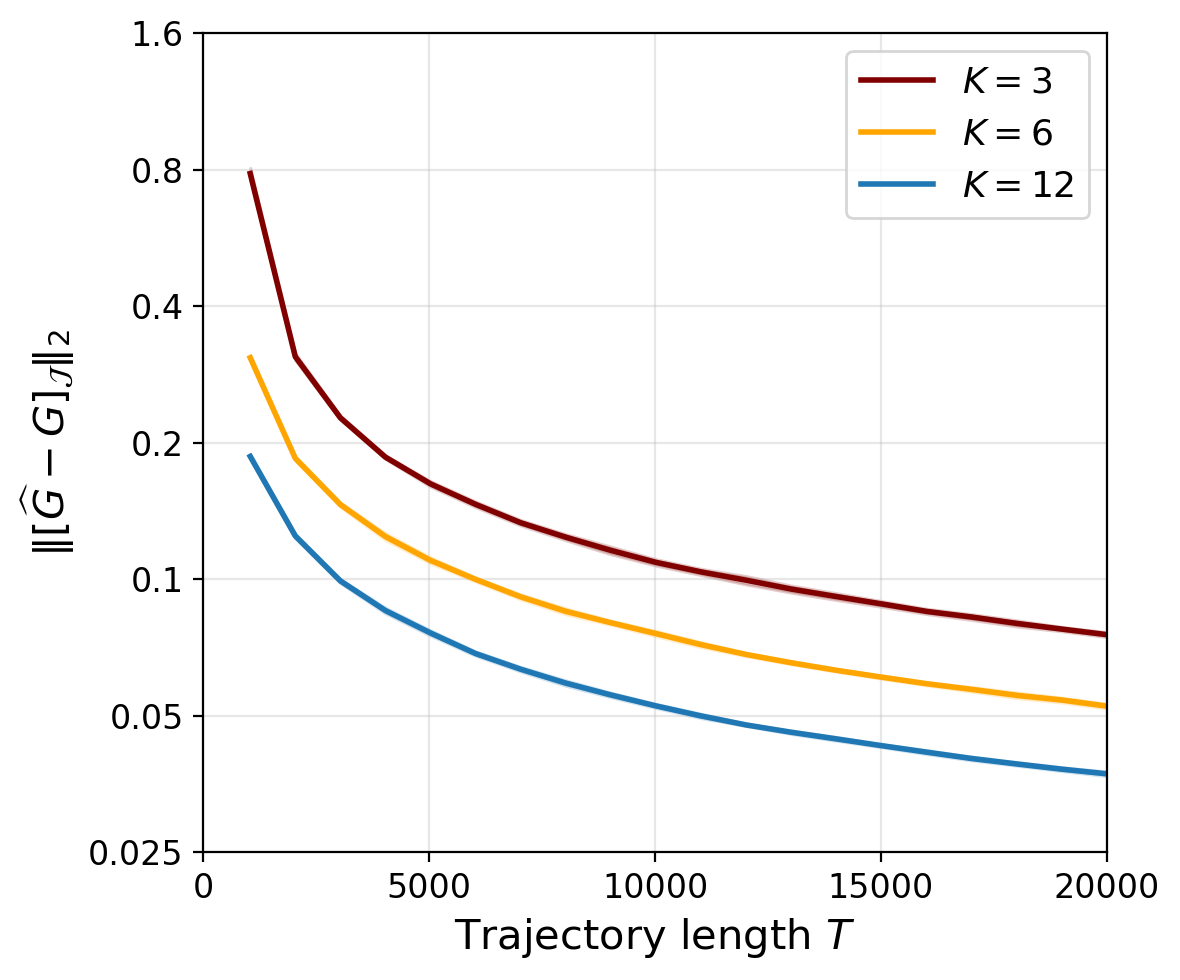} &
\raisebox{\dimexpr \plotH-20pt\relax}{(d)} & \includegraphics[width=0.4\linewidth]{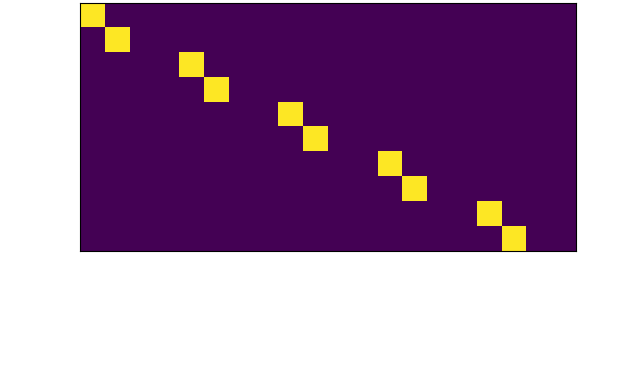}
\end{tabular}
\caption{Simulation results for Model~1 (top row, block-cyclic system) and Model~2 (bottom row, thermal dynamics). Panels (a,c) plot Markov parameter estimation errors against trajectory length $T$ on a logarithmic $y$-axis. Panels (b) and (d) visualize the learned sensor allocation matrix when $K=16$ or $K=12$, respectively. The bright squares mark active (sensor, state-coordinate) pairs.}
\label{fig:results}
\vspace{-15pt}
\end{figure}

\vspace{-0.5em}\section{Conclusions}
In this paper, we address the problem of learning efficient sensor allocations that guarantee observability with a small number of sensors for unknown high-dimensional linear systems.
Our two-stage solution consists of a novel SYSID algorithm that integrates information from multiple data trajectories, each potentially observing a different subset of the state coordinates, and a sensor allocation algorithm that operates on approximated system parameters. When sensors can be allocated on any state coordinate, the SYSID algorithm is guaranteed to learn the system parameters up to an error of $\tilde{\calO}(\sqrt{md/(s_{\min}T)})$, where $m$ is the input dimension, $d\geq r$ is a hyperparameter with $r$ being the state dimension, $s_{\min}$ is the least number of trajectories used to measure one state coordinate, and $T$ is the trajectory length. Furthermore, the sensor allocation algorithm is guaranteed to output an allocation with at most $(1+\log(r))n^*$ sensors, where $n^*$ is the minimum number of sensors required to ensure observability. We also extend our algorithms to scenarios where certain state coordinates are inaccessible. The efficiency of our algorithms is validated through numerical experiments.

\appendices

\section{Proofs for \Cref{proof:thm1}}\label{proof:thm}
\begin{proof}[Proof of \Cref{lem:cross_all}]
    Recall $\Delta$ (\Cref{eq:markov_text}) is defined as
    \begin{align}
        \Delta = {}& \sum_{k=1, t=d}^{K,T} C_k\t C_k \b{\Delta_{k,t+1}^1 + \Delta_{k,t+1}^2}U_{k,t}(d)\t,
    \end{align}
    where $\Delta_{k,t}^1$ and $\Delta_{k,t}^2$ (in \Cref{eq:markov_text7}) are defined as
    \begin{equation}\begin{split}
        \Delta_{k,t+1}^1 = {}& \sum_{\tau=d+1}^{t} A^{\tau}\b{Bu_{k,t-\tau} + w_{k,t-\tau}},\\
        \Delta_{k,t+1}^2 = {}& \eta_{k,t+1} + \sum_{\tau=0}^d A^{\tau}w_{k,t-\tau}.
    \end{split}\end{equation}

    We now upper bound the first term in the definition of $\Delta$. For simplicity, we first let $\Sigma_w \coloneqq \b{\sigma_u^2 BB\t + \sigma_w^2 I_r}^{1/2}$ and define auxiliary variable $\xi_{k,t} \coloneqq \Sigma_w^{-1}\b{Bu_{k,t} + w_{k,t}}\overset{\text{i.i.d.}}{\sim}\calN(0,I_r)$. Then rearranging the term gives
    \begingroup\makeatletter\def\f@size{8}\check@mathfonts\makeatother
    \begin{align*}
        {}&  J_1 \coloneqq \sum_{k=1,t=d}^{K,T} C_k\t C_k \Big(\sum_{\tau=d+1}^{t} A^{\tau}\b{Bu_{k,t-\tau} + w_{k,t-\tau}}\Big) U_{k,t}(d)\t\\
        = {}& \sum_{k=1, t=d+1}^{K,T} C_k\t C_k \begin{bmatrix}
            A^{d+1}\Sigma_w & \cdots & A^{t}\Sigma_w
        \end{bmatrix}\begin{bmatrix}
            \xi_{k,t-d-1}\\
            \xi_{k,t-d-2}\\
            \vdots\\
            \xi_{k,0}
        \end{bmatrix} U_{k,t}(d)\t.
    \end{align*}
    \endgroup
    We apply \Cref{lem:cross1_conc} with $t_0 = d+1$, $\Sigma = \Sigma_w$, and $\psi_\Sigma^2=\psi_B^2\sigma_u^2 + \sigma_w^2$. Since $\xi_{k,t}$ is a linear function of $(u_{k,t}, w_{k,t})$, it is independent of $u_{k,t+1:T}$ and of all $u_{k',t'}$ with $k'\neq k$. This fact, together with \Cref{eq:markov_1}, satisfies the condition of \Cref{lem:cross1_conc}. Thus, with probability at least $1-\delta/6$,
    \begingroup\makeatletter\def\f@size{8}\check@mathfonts\makeatother
    \begin{equation*}\begin{split}
        \norm{J_1} \leq {}& \sqrt{\frac{24\sigma_u^2\psi_A^2(\psi_B^2\sigma_u^2+\sigma_w^2)}{(1-\rho_A)^2}\rho_A^{2d}}\sqrt{mds_{\max}(T-d)\log\b{\frac{120}{\delta}}}.
    \end{split}\end{equation*}
    \endgroup

    Now consider the second term in $\Delta$. Rearraging gives
    \begingroup\makeatletter\def\f@size{8}\check@mathfonts\makeatother
    \begin{align*}
        {}& \sum_{k=1,t=d}^{K,T} C_k\t C_k \Delta_{k,t+1}^2U_{k,t}(d)\t\\
        = {}& \sum_{k=1,t=d}^{K,T} C_k\t C_k \b{\sum_{\tau=0}^d A^{\tau}w_{k,t-\tau}} U_{k,t}(d)\t\\
        {}& + \sum_{k=1,t=d}^{K,T} C_k\t C_k \eta_{k,t+1} U_{k,t}(d)\t\\
        = {}& \underbrace{\sum_{k=1,t=d}^{K,T} C_k\t C_k \begin{bmatrix}
            I & A & \cdots & A^d\\
        \end{bmatrix} \begin{bmatrix}
            w_{k,t}\\
            w_{k,t-1}\\
            \vdots\\
            w_{k,t-d}
        \end{bmatrix} U_{k,t}(d)\t}_{J_2}\\
        {}&  + \underbrace{\sum_{k=1,t=d}^{K,T} C_k\t C_k \eta_{k,t+1} U_{k,t}(d)\t}_{J_3}
    \end{align*}
    \endgroup
    To upper bound $J_2$ and $J_3$, we resort to \Cref{lem:cross2_conc}.
    We apply \Cref{lem:cross2_conc} with $t_0 = d, \quad \xi_{k,t} =w_{k,t}$ and get the following with probability at least $1-\delta/6$
    \begin{equation*}\begin{split}
        \norm{J_2} \leq \sqrt{\frac{24\sigma_u^2\sigma_w^2\psi_A^2}{\rho_A^2(1-\rho_A)^2} mds_{\max}(T-d+1)\log{\b{\frac{120}{\delta}}}}.
    \end{split}\end{equation*}
    For $J_3$, we apply \Cref{lem:cross2_conc} with $t_0 = 0, \quad \xi_{k,t} = \eta_{k,t+1}$.
    Therefore, with probability at least $1-\delta/6$, 
    \begin{equation*}\begin{split}
        \norm{J_3} \leq \sqrt{\frac{24\sigma_u^2\sigma_{\eta}^2\psi_A^2}{\rho_A^2(1-\rho_A)^2} mds_{\max}(T-d+1)\log\b{\frac{120}{\delta}}}.
    \end{split}\end{equation*}

    Finally, with a union bound, we have the following with probability at least $1-\delta/2$
    \begin{equation*}\begin{split}
        \norm{\Delta} = {}& \norm{J_1+J_2+J_3}\leq 3\sqrt{\frac{24\sigma_u^2(\sigma_u^2+\sigma_w^2+\sigma_{\eta}^2)\psi_A^2\psi_B^2}{\rho_A^2(1-\rho_A)^2}}\\
        {}& \cdot \sqrt{mds_{\max}(T-d+1)\log\b{\frac{120}{\delta}}}.
    \end{split}\end{equation*}
\end{proof}

\begin{proof}[Proof of \Cref{lem:power_pert}]
    We first prove the following equation.
    \begin{equation}\begin{split}\label{eq:power_pert}
        \norm{\b{A+\Delta}^n} \leq \frac{\psi_A}{\rho_A}\b{\rho_A+\frac{\psi_A}{\rho_A}\rho_\Delta}^n.
    \end{split}\end{equation}
    In the binomial expansion of $(A+\Delta)^n$, consider any term with $n-i$ occurrences of $A$ and $i$ occurrences of $\Delta$. It must take the form of $A^{\alpha_0}\Delta^{\beta_0}A^{\alpha_1}\cdots A^{\alpha_l}\Delta^{\beta_l}A^{\alpha_{l+1}}$. Here $l$, $\alpha_0$, $\alpha_{l+1}$ are non-negative integers, while other coeffcients are strictly positive integers. Moreover, we have $l+1\leq i$, $\sum_{j=0}^{l}\beta_j = i$ and $\sum_{j=0}^{l+1}\alpha_j = n-i$. Taking its norm gives that 
    \begin{equation*}\begin{split}
        {}& \norm{A^{\alpha_0}\cdots A^{\alpha_{l+1}}} \leq \norm{A^{\alpha_{l+1}}}\prod_{j=0}^{l}\norm{A^{\alpha_j}}\norm{\Delta^{\beta_j}}\\
        \leq {}& \b{\frac{\psi_A}{\rho_A}}^{l+2}\rho_A^{\sum_{j=0}^{l+1}\alpha_j}\rho_{\Delta}^{\sum_{j=0}^{l}\beta_j} \leq \b{\frac{\psi_A}{\rho_A}}^{i+1}\rho_A^{n-i}\rho_{\Delta}^{i}.
    \end{split}\end{equation*}
    Notice that this upper bound holds for all terms in $(A+\Delta)^n$ with $i$ occurrences of $\Delta$. Then, 
    \begin{equation*}\begin{split}
        {}& \norm{\b{A+\Delta}^n} \leq \sum_{i=0}^{n} {n\choose i} \b{\frac{\psi_A}{\rho_A}}^{i+1}\rho_A^{n-i}\rho_{\Delta}^i\\
        = {}& \frac{\psi_A}{\rho_A}\sum_{i=0}^n {n\choose i} \rho_A^{n-i}\b{\frac{\psi_A}{\rho_A}\rho_{\Delta}}^i= \frac{\psi_A}{\rho_A}\b{\rho_A+\frac{\psi_A}{\rho_A}\rho_{\Delta}}^n.
    \end{split}\end{equation*} 

    Note that for any matrices $M_1,M_2$: $M_1^n - M_2^n = \sum_{i=0}^{n-1} M_1^{n-1-i}(M_1-M_2)M_2^i$.
    Applying the equation with $M_1=A+\Delta$ and $M_2=A$ gives
    \begingroup\makeatletter\def\f@size{9}\check@mathfonts\makeatother
    \begin{equation*}\begin{split}
        \norm{(A+\Delta)^n - A^n} \leq {}& \sum_{i=0}^{n-1} \norm{(A+\Delta)^{n-1-i}(\Delta)A^i}\\
        \leq {}& \sum_{i=0}^{n-1} \rho_{\Delta}\b{\frac{\psi_A}{\rho_A}}^2\rho_A^i\b{\rho_A+\frac{\psi_A}{\rho_A}\rho_{\Delta}}^{n-1-i}\\
        \leq {}& n\b{\frac{\psi_A}{\rho_A}}^2\b{\rho_A+\frac{\psi_A}{\rho_A}\rho_{\Delta}}^{n-1}\rho_{\Delta}.
    \end{split}\end{equation*}
    \endgroup
\end{proof}

\vspace{-2em}
\subsection{Supporting Details}
\begin{lemma}[Concentration of Covariance]\label{lem:input_conc}
    Consider i.i.d. random vectors $\{u_{k,t}\in\bbR^{m}\}_{k=1,t=0}^{K,T}$ with $u_{k,t}\sim\calN(0,\sigma_u^2I_m)$ for positive integers $m,K,T$. Consider positive integers $t_0, d$ with $t_0\geq d$, and consider any $\delta\in(0,1)$. 
    If
    \begin{equation}\begin{split}
        KT \geq 2Kt_0 + 2cmd\cdot \b{\log^4(2mdKT)+\log\b{\frac{1}{\delta}}},
    \end{split}\end{equation}
    where $c$ is the absolute constant in \cite[Theorem 4.1 with $L=1$]{krahmer2014suprema},
    then with probability at least $1-\delta$ for any $\delta\in(0,1)$,
    \begin{equation}\begin{split}\label{eq:input_conc_1}
        {}& \norm{\frac{1}{K(T-t_0+1)}\sum_{k=1,t=t_0}^{K,T}U_{k,t}(d)U_{k,t}(d)\t - \sigma_u^2 I}\\
        \leq {}& \sigma_u^2\sqrt{\frac{4md}{KT}}\cdot \sqrt{c\b{\log^4(2mdKT)+\log\b{\frac{1}{\delta}}}}.
    \end{split}\end{equation}
    Here $U_{k,t}(d) = \begin{bmatrix}
        u_{k,t}\t & \cdots & u_{k,t-d}\t
    \end{bmatrix}\t$. Moreover, if $KT \geq 2Kt_0 + 16cmd\cdot (\log^4(2mdKT)+\log(1/\delta))$,
    \begingroup\makeatletter\def\f@size{9}\check@mathfonts\makeatother    \begin{equation}\begin{split}\label{eq:input_conc_2}
        \frac{\sigma_u^2}{2}I \preceq \frac{1}{K(T-t_0+1)}\sum_{k=1,t=t_0}^{K,T}U_{k,t}(d)U_{k,t}(d)\t \preceq \frac{3\sigma_u^2}{2}I.
    \end{split}\end{equation}
    
    \endgroup
\end{lemma}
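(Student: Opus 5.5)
The plan is to stack all $N\coloneqq K(T-t_0+1)$ lagged input vectors into one Gaussian chaos. I then apply the suprema-of-chaos bound of \cite[Theorem 4.1]{krahmer2014suprema} for a random partial-circulant/Toeplitz structure with $L=1$.

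\emph{Rewriting the empirical covariance.} Each $U_{k,t}(d)$ is a linear image of the i.i.d.\ Gaussian sequence $\{u_{k,t}\}$: $U_{k,t}(d) = P_{k,t}\,\xi$. Here $\xi\in\bbR^{mK(T+1)}$ stacks all $u_{k,t}/\sigma_u$, and $P_{k,t}$ is a selection matrix picking the $d+1$ consecutive blocks ending at time $t$ in trajectory $k$. Then
\begin{equation*}
\frac{1}{N}\sum_{k,t}U_{k,t}(d)U_{k,t}(d)\t = \frac{\sigma_u^2}{N}\,\mathbf{V}_\xi\t\mathbf{V}_\xi,
\end{equation*}
where $\mathbf{V}_\xi$ is the $N\times m(d+1)$ block-Hankel (Toeplitz-type) matrix whose rows are $U_{k,t}(d)\t/\sigma_u$. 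Since $\bbE[U U\t]=\sigma_u^2 I$ for each $(k,t)$, the deviation equals $\sigma_u^2\sup_{\norm{v}=1}\big|\norm{\mathbf{V}_\xi v}^2/N - \bbE\norm{\mathbf{V}_\xi v}^2/N\big|$. This is exactly a supremum of a second-order Gaussian chaos $\sup_{X\in\mathcal{A}}|\norm{X\xi}^2-\bbE\norm{X\xi}^2|$, with $\mathcal{A}=\{X_v\}$ the linear maps $\xi\mapsto \mathbf{V}_\xi v/\sqrt N$.

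\emph{Bounding the complexity parameters.} Theorem 4.1 of \cite{krahmer2014suprema} controls this supremum in terms of three quantities: $d_F(\mathcal{A})$, $d_{2\to2}(\mathcal{A})$, and Talagrand's $\gamma_2(\mathcal{A},\norm{\cdot}_{2\to2})$. As in their partial random circulant analysis, $X_v$ is a shifted-block convolution of the unit vector $v$. Hence $\norm{X_v}_F=1$, so $d_F=1$. Also $\norm{X_v}_{2\to2}\le\sqrt{m(d+1)/N}$, since each entry of $\xi$ appears in at most $d+1$ rows. The same argument gives the bound $\norm{X_v}_{2\to 2}\lesssim \norm{\hat v}_\infty/\sqrt N$ in a Fourier-type norm. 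The key estimate, which I expect to be the main obstacle, is
\begin{equation*}
\gamma_2\lesssim \sqrt{md/N}\,\log^2(md N),
\end{equation*}
obtained by Dudley's integral with covering numbers of the unit ball in that Fourier-type norm (their Lemma on covering numbers). I will invoke it essentially verbatim by embedding the block structure into a partial circulant matrix of size $\approx 2mdKT$. This embedding is where the $\log^4(2mdKT)$ comes from. Handling the $K$ independent trajectories needs only the observation that concatenation preserves the at-most-$(d+1)$-appearances property. A boundary issue remains because $t_0\ge d$ guarantees all indices exist; I will use $T-t_0+1\ge T/2$, implied by the first requirement $KT\ge 2Kt_0$, to convert $N$ to $KT/2$.

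\emph{Tail bound and conclusion.} The tail form of the theorem gives, with probability $1-\delta$, deviation $\lesssim E+\sqrt{\log(1/\delta)}\,V+\log(1/\delta)\,U$, with $E\approx\gamma_2^2+\gamma_2 d_F$, $V\approx d_{2\to2}(\gamma_2+d_F)$, and $U\approx d_{2\to2}^2$. Under $KT\gtrsim cmd(\log^4+\log(1/\delta))$, the quantity $\sqrt{md/N}\cdot\text{polylog}$ is at most one. The linear terms then dominate, yielding \eqref{eq:input_conc_1} with the stated $\sqrt{4md/(KT)}$ after absorbing constants into $c$. Finally, the stronger condition with $16c$ makes the right side of \eqref{eq:input_conc_1} at most $\sigma_u^2/2$. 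Weyl's inequality then gives the two-sided sandwich \eqref{eq:input_conc_2}.
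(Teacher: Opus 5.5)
Your proposal is correct and is essentially the paper's argument. Both realize the stacked lagged-input matrix as a row/column submatrix of a partial random circulant matrix, rely on Krahmer--Mendelson--Rauhut, and then use $T-t_0+1\ge T/2$ and Weyl's inequality to get \eqref{eq:input_conc_2}. Your choice of generating vector, the i.i.d.\ input sequence $\xi$, is the right one. The paper's text instead generates the circulant from the concatenated $U_{k,t}(d)$'s, whose overlapping entries are not i.i.d.

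The only real difference is that you unpack the RIP into the chaos bound and estimate $d_F$, $d_{2\to2}$ and $\gamma_2$ yourself. The paper simply quotes Theorem~4.1 (Lemma~\ref{lem:circ}) on the fixed $m(d+1)$-column support, where the restricted isometry constant dominates the operator-norm deviation. That route spares you the $\gamma_2$ estimate.
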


\begin{proof}
    \textbf{Step 1: We first prove \Cref{eq:input_conc_1}.}
    The input covariance can be rewritten as
    \begin{equation*}\begin{split}
        {}& \sum_{k=1,t=t_0}^{K,T}U_{k,t}(d)U_{k,t}(d)\t = \Phi\t\Phi.
    \end{split}\end{equation*}
    
    \vspace{-\parskip}\noindent Here we have defined $\Phi\t \coloneqq \left[
    U_{1,t_0}(d)\cdots U_{1,T}(d) ~U_{2,t_0}(d) \right.$ $ \left. \cdots U_{2,T}(d) \cdots U_{K,t_0}(d) \cdots U_{K,T}(d)\right]\in\bbR^{m(d+1)\times K(T-t_0+1)}$. 
    Observe $\Phi$ is a $K(T-t_0+1)\times m(d+1)$ submatrix of the circulant matrix with first row $\left[U_{1,t_0}(d)\t \cdots U_{1,T}(d)\t U_{2,t_0}(d)\t\cdots \right.$
    $\left.U_{2,T}(d)\t \cdots U_{K,t_0}(d)\t  \cdots U_{K,T}(d)\t\right]\in\bbR^{m(d+1)K(T-t_0+1)}$. Now by \Cref{lem:circ}, if 
    \begingroup\makeatletter\def\f@size{9}\check@mathfonts\makeatother
    \begin{equation*}\begin{split}
        {}&  K(T-t_0+1) \geq cm(d+1)\\
        {}& \cdot\b{\log^2(m(d+1))\log^2(m(d+1)K(T-t_0+1))+\log\b{\frac{1}{\delta}}},
    \end{split}\end{equation*}
    \endgroup
    with probability at least $1-\delta$,
    \begingroup\makeatletter\def\f@size{9}\check@mathfonts\makeatother
    \begin{equation*}\begin{split}
        {}& \norm{\frac{1}{K(T-t_0+1)}\Phi\t\Phi - \sigma_u^2 I} \leq \sigma_u^2\sqrt{c\frac{m(d+1)}{K(T-t_0+1)}}\\
        {}& \sqrt{\b{\log^2(m(d+1))\log^2(m(d+1)K(T-t_0+1))+\log\b{\frac{1}{\delta}}}}.
    \end{split}\end{equation*}
    \endgroup
    for constant $c$ in \cite[Theorem 4.1 with $L=1$]{krahmer2014suprema}.
    As a simplified result, if 
    \begin{equation}\begin{split}
        KT \geq 2Kt_0 + 2cmd\cdot \b{\log^4(2mdKT)+\log\b{\frac{1}{\delta}}}
    \end{split}\end{equation}
    then the following holds with probability at least $1-\delta$
    \begin{equation}\begin{split}
        {}& \norm{\frac{1}{K(T-t_0+1)}\sum_{k=1,t=t_0}^{K,T}U_{k,t}(d)U_{k,t}(d)\t - \sigma_u^2 I} \\
        \leq {}& \sigma_u^2\sqrt{\frac{4md}{KT}}\cdot \sqrt{c\b{\log^4(2mdKT)+\log\b{\frac{1}{\delta}}}}.
    \end{split}\end{equation}
    \textbf{Step 2: We now prove \Cref{eq:input_conc_2}.} For $KT \geq 2Kt_0 + 16cmd\cdot \b{\log^4(2mdKT)+\log\b{\frac{1}{\delta}}}$, the following holds with probability at least $1-\delta$
    \begin{equation*}\begin{split}
        {}& \norm{\frac{1}{K(T-t_0+1)}\sum_{k=1,t=t_0}^{K,T}U_{k,t}(d)U_{k,t}(d)\t - \sigma_u^2I} \leq \frac{\sigma_u^2}{2}.
    \end{split}\end{equation*}
    Therefore, it naturally follows that
    \begin{equation*}\begin{split}
        {}& \frac{\sigma_u^2}{2}I \preceq \frac{1}{K(T-t_0+1)}\sum_{k=1,t=t_0}^{K,T}U_{k,t}(d)U_{k,t}(d)\t \preceq \frac{3\sigma_u^2}{2} I.
    \end{split}\end{equation*}
\end{proof}

\begin{lemma}[Concentration of Crossterms with Growing Dimensions]\label{lem:cross1_conc}
    Consider system $(A,B,\sigma_u^2,\sigma_w^2)$ and measurement matrices $\{C_k\}_{k\in[K]}$ as in \Cref{thm:markov}. Consider trajectory length $T$ and estimation rank $d$ as in \Cref{alg:alloc}. 
    Consider positive integer $t_0 \in [d,T]$, matrix $\Sigma\in\bbR^{r\times r}$ with $\norm{\Sigma}\leq \psi_{\Sigma}$ for some $\psi_{\Sigma}\geq 1$, and $\{\xi_{k,t}\in\bbR^r\}_{k=1,t=0}^{K,T-t_0}$ with $\xi_{k,t}\overset{\text{i.i.d.}}{\sim}\calN(0,I_r)$.
    Consider any $\delta\in(0,1)$. If $\xi_{k,t}$ is independent of $u_{k,t+t_0-d:T}$ and $u_{k',t_0-d:T}$ ($\forall k'\neq k$) and if
    \begingroup\makeatletter\def\f@size{9}\check@mathfonts\makeatother
    \begin{equation}\begin{split}\label{eq:cross_conc_4}
        T \geq 2t_0 + 16c\cdot rmd\cdot \b{\log^4(2mdT) + \log\b{\frac{10K}{\delta}}},
    \end{split}\end{equation}
    \endgroup
    where $c$ is the absolute constant in \cite[Theorem 4.1 with $L=1$]{krahmer2014suprema}, then with probability at least $1-\delta$,
    \begingroup\makeatletter\def\f@size{9}\check@mathfonts\makeatother
    \begin{equation*}\begin{split}
        {}& \norm{\sum_{k=1,t=t_0}^{K,T} C_k\t C_k \underline{G}_{t-t_0} \Xi_{k,t-t_0}(t-t_0) U_{k,t}(d)\t}\\
        \leq {}& \sqrt{\frac{24\sigma_u^2\psi_A^2\psi_{\Sigma}^2}{(1-\rho_A)^2} }\sqrt{\rho_A^{2d}\max\{md,r\}s_{\max}(T-t_0+1) \log\b{\frac{20}{\delta}}}.
    \end{split}\end{equation*}
    \endgroup
    \begingroup\makeatletter\def\f@size{9}\check@mathfonts\makeatother
    Here $\underline{G}_{t-t_0} = \begin{bmatrix}
        A^{d+1}\Sigma & \cdots & A^{d+1+t-t_0}\Sigma
    \end{bmatrix}$, $\Xi_{k,t-t_0}(t-t_0) = \begin{bmatrix}
        \xi_{k,t-t_0}\t & \cdots & \xi_{k,0}\t
    \end{bmatrix}\t$, $U_{k,t}(d) = \begin{bmatrix}
        u_{k,t}\t & \cdots & u_{k,t-d}\t
    \end{bmatrix}\t$.
    \endgroup
\end{lemma}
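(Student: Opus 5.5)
The plan is to treat the sum as a vector-valued self-normalized martingale, after first stacking the time-varying regressors into a single fixed-dimension Gaussian. Write $M\coloneqq\sum_{k,t} C_k\t C_k\,\underline{G}_{t-t_0}\Xi_{k,t-t_0}(t-t_0)U_{k,t}(d)\t$. The term $\underline{G}_{t-t_0}\Xi_{k,t-t_0}(t-t_0)=\sum_{j=0}^{t-t_0}A^{d+1+j}\Sigma\,\xi_{k,t-t_0-j}$ is a Gaussian vector $z_{k,t}\in\bbR^r$ with covariance $\Gamma_{t}\preceq \psi_A^2\psi_\Sigma^2\rho_A^{2d}/(1-\rho_A^2)\,I$; this bound comes from Assumption~\ref{assmp:stable}, since $\norm{A^{d+1+j}}\le\psi_A\rho_A^{d+j}$. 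This supplies the factor $\rho_A^{2d}\psi_A^2\psi_\Sigma^2/(1-\rho_A)^2$ in the claimed bound.

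The key structural fact is that $z_{k,t}$ depends only on $\xi_{k,0:t-t_0}$, and by hypothesis these are independent of $u_{k,t-d:T}$ and of all inputs of other trajectories. I will therefore condition on the full $\xi$ sequence, i.e.\ on all $z_{k,t}$. Conditionally, $M$ is a bilinear form in the Gaussian inputs $u$, with fixed coefficients $C_k\t C_k z_{k,t}$.

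I will bound $\norm{M}$ with an $\epsilon$-net argument. Take unit vectors $a\in\bbR^r$ and $b\in\bbR^{m(d+1)}$, so that $a\t M b=\sum_{k,t}(a\t C_k\t C_k z_{k,t})(U_{k,t}(d)\t b)$. Conditioned on $z$, this is a linear form in the stacked inputs $\{u_{k,\cdot}\}$ across $k$. It is Gaussian with variance $\sigma_u^2\norm{\sum_t (a\t C_k\t C_k z_{k,t}) S_t b}^2$ summed over $k$, where $S_t$ is the shift selecting $U_{k,t}(d)$. Each input appears in at most $d+1$ windows, so this variance is at most $\sigma_u^2(d+1)\sum_{k,t}(a\t C_k\t C_k z_{k,t})^2$. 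Interchanging the sums gives $\sum_{k,t}(a\t C_k\t C_kz_{k,t})^2=\sum_{i}a_i^2\sum_{k\in\calK_i}\sum_t [z_{k,t}]_i^2$.

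This spatial decoupling is where $s_{\max}$ enters. Each coordinate $i$ collects at most $s_{\max}(T-t_0+1)$ squared Gaussian terms of variance $O(\psi_A^2\psi_\Sigma^2\rho_A^{2d}/(1-\rho_A)^2)$. A Hanson–Wright / Gaussian chaos bound, union bounded over $i\in[r]$, controls these sums with high probability; this costs the $\log(10K/\delta)$-type terms and uses \eqref{eq:cross_conc_4}. A cleaner route, which I would prefer, is to treat the windowed inputs with the self-normalized martingale bound of \cite{abbasi2011improved} in the ordering of time, using filtration $\calF_t=\sigma(\xi,u_{\cdot,0:t})$. The obstacle is that $U_{k,t}(d)$ overlaps with future windows, and I would handle this by splitting $t$ into $d+1$ residue classes modulo $d+1$. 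If that loses too much, I would fall back to the net argument above.

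After conditioning, $a\t Mb$ is sub-Gaussian with variance proxy $\lesssim \sigma_u^2(d+1)\psi_A^2\psi_\Sigma^2\rho_A^{2d}s_{\max}(T-t_0+1)/(1-\rho_A)^2$. The main obstacle is the extra factor of $d$: the target bound has $\max\{md,r\}$ rather than $d\cdot\max\{md,r\}$. To avoid it I would not bound the variance crudely. Instead I would write $\sum_t c_t S_t b$ as a Toeplitz-structured operator applied to $c$ and use $\norm{\sum_t c_tU_{k,t}(d)}^2\le \tfrac32\sigma_u^2\norm{c}^2$ type bounds. These follow from Lemma~\ref{lem:input_conc}, which is applicable because of \eqref{eq:cross_conc_4}. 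With that in place, a union bound over $1/4$-nets of the spheres (sizes $9^{r}$ and $9^{m(d+1)}$) gives a $\sqrt{\max\{md,r\}\log(20/\delta)}$ factor. Combining this with the coordinatewise bound on the $z$ sums, with the probability split as $\delta/2+\delta/2$, yields the claimed constant $24$ up to bookkeeping.
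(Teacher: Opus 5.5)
The proposal has a genuine gap at the step ``I will therefore condition on the full $\xi$ sequence.''

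\textbf{Why conditioning on all of $\xi$ fails.} The hypothesis only makes $\xi_{k,t}$ independent of the \emph{later} inputs $u_{k,t+t_0-d:T}$. It allows $\xi_{k,t}$ to depend on earlier inputs, and this is exactly the case the lemma is built for. In the proof of Lemma~\ref{lem:cross_all} it is applied with $t_0=d+1$ and $\xi_{k,t}=\Sigma_w^{-1}(Bu_{k,t}+w_{k,t})$, so $\xi_{k,t}$ is a function of $u_{k,t}$.
\begin{itemize}
\item The input $u_{k,t}$ lies in the windows $U_{k,t'}(d)$ for $t'\in[t,t+d]$.
\item $\xi_{k,t}$ enters every $z_{k,t''}$ with $t''\ge t+d+1$.
\end{itemize}
Each $z_{k,t}$ is indeed independent of its own window, but once you condition on \emph{all} $z_{k,t}$, the inputs are no longer i.i.d.\ $\mathcal{N}(0,\sigma_u^2 I)$. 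They acquire a $\xi$-dependent mean and a reduced covariance. So $a^\top M b$ is not conditionally Gaussian with the variance you compute, and the net argument has no footing. Your ``cleaner route'' has the same defect: with $\mathcal{F}_t=\sigma(\xi,u_{\cdot,0:t})$, the input $u_{k,t}$ is not independent of $\mathcal{F}_{t-1}$. What is needed is a sequential argument whose revelation order respects the one-sided independence.

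\textbf{How the paper does it.} It takes a net only over the $r$-dimensional side (factor $5^r$). For a fixed unit $v$ it writes $(v^\top M)^\top=\mathcal{Z}\Xi$:
\begin{itemize}
\item The scalar coordinates of the $\xi$'s are the martingale noise.
\item The regressors are the columns of $\mathcal{Z}=[\mathcal{U}_1\mathcal{T}_1,\dots,\mathcal{U}_K\mathcal{T}_K]$, where $\mathcal{U}_k=[U_{k,t_0}(d),\dots,U_{k,T}(d)]$ and $\mathcal{T}_k$ is the block-Toeplitz matrix of $v^\top C_k^\top C_kA^{d+1+\tau}\Sigma$.
\item The $\xi$'s are revealed in \emph{reverse} time, $\xi_{k,T-t_0},\dots,\xi_{k,0}$, trajectory by trajectory.
\end{itemize}
With this ordering, the column multiplying $\xi_{k,t}$ involves only $u_{k,t+t_0-d:T}$, and everything revealed earlier is independent of $\xi_{k,t}$. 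That is precisely the hypothesis. Theorem~1 of \cite{abbasi2011improved} with $V\propto I$ then reduces the problem to bounding
$$\mathcal{Z}\mathcal{Z}^\top\preceq\|\mathcal{T}\|^2\sum_k\|[v]_{\mathcal{I}_k}\|^2\,\mathcal{U}_k\mathcal{U}_k^\top,$$
using three facts:
\begin{itemize}
\item $\|\mathcal{T}\|\le\psi_A\psi_\Sigma\rho_A^d/(1-\rho_A)$;
\item $\sum_k\|[v]_{\mathcal{I}_k}\|^2\le s_{\max}$;
\item $\|\mathcal{U}_k\mathcal{U}_k^\top\|\le\tfrac32\sigma_u^2(T-t_0+1)$ by Lemma~\ref{lem:input_conc}.
\end{itemize}
The log-determinant lives in dimension $m(d+1)$, which yields $\max\{md,r\}$ with no extra factor of $d$.

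\textbf{Why your fix for the extra $d$ does not work.} Lemma~\ref{lem:input_conc} is useful only in the configuration above, as a bound on the regressor Gram matrix when $\xi$ is the noise. In your setup it does not remove the extra $d$.
\begin{itemize}
\item The lemma gives $\|\mathcal{U}_kc\|^2\le\tfrac32\sigma_u^2(T-t_0+1)\|c\|^2$; you dropped the factor $T-t_0+1$.
\item This holds on a fixed event and discards all cancellation in $u$, so used directly it gives only an $O(T)$ bound.
\item It says nothing about the variance $\sigma_u^2\|\sum_t c_tS_t^\top b\|^2$ of a linear form in $u$. For fixed $b$ and slowly varying $c$, that variance genuinely carries the $\sqrt{d+1}$ loss.
\end{itemize}

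A forward-time martingale with $u$ as the noise is possible in principle. It would need a filtration containing only $\xi_{k,0:s+d-t_0}$, not all of $\xi$. It would also need a separate concentration bound for the lagged Gram matrix $\sum_s\gamma_s\gamma_s^\top$ with $\gamma_s=(c_s,\dots,c_{s+d})$, not merely for $\sum_t c_t^2$.
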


\begin{proof}
    From \Cref{lem:gen_norm2Net}, there exists a unit vector $v\in\bbS^{r-1}$ 
    \begingroup\makeatletter\def\f@size{8}\check@mathfonts\makeatother
    \begin{equation}\begin{split}\label{eq:cross_conc_9}
        {}& \bbP\b{\norm{\sum_{k=1,t=t_0}^{K,T} C_k\t C_k \underline{G}_{t-t_0} \Xi_{k,t-t_0}(t-t_0) U_{k,t}(d)\t} > z}\\
        \leq {}& 5^{r} \bbP\b{\norm{\sum_{k=1,t=t_0}^{K,T} v\t C_k\t C_k \underline{G}_{t-t_0} \Xi_{k,t-t_0}(t-t_0) U_{k,t}(d)\t} > \frac{1}{2} z}.
    \end{split}\end{equation}
    \endgroup

    \textbf{Step 1: We first upper bound $\|\bm{\sum_{k=1,t=t_0}^{K,T} v\t C_k\t C_k \underline{G}_{t-t_0} \Xi_{k,t-t_0}(t-t_0) U_{k,t}(d)\t}\|$ using the well-established self-normalizing bound \cite[Theorem 1]{abbasi2011improved}. } For the above unit vector $v\in\bbS^{r-1}$, we define $g_{k,\tau}\t \coloneqq v\t C_k\t C_k A^{d+1+\tau}\Sigma$ for $\tau\in[0,T-t_0]$, and get $v\t C_k\t C_k \underline{G}_{t-t_0}
        = \begin{bmatrix}
            g_{k,0}\t & \cdots & g_{k,t-t_0}\t
        \end{bmatrix}.$
    We then rewrite the objective term as follows
    \begingroup\makeatletter\def\f@size{9}\check@mathfonts\makeatother
    \begin{equation}\begin{split}\label{eq:cross_conc_13}
        {}& \sum_{k=1,t=t_0}^{K,T} v\t C_k\t C_k \underline{G}_{t-t_0} \Xi_{k,t-t_0}(t-t_0) U_{k,t}(d)\t\\
        = {}& \sum_{k=1,t=t_0}^{K,T} \b{\begin{bmatrix}
            g_{k,0}\t & \cdots & g_{k,t-t_0}\t
        \end{bmatrix}\begin{bmatrix}
            \xi_{k,t-t_0}\\
            \vdots\\
            \xi_{k,0}
        \end{bmatrix}} U_{k,t}(d)\t.
    \end{split}\end{equation}
    \endgroup
    Taking the transpose on both sides gives
    \begin{align*}
        {}& \b{\sum_{k=1,t=t_0}^{K,T} v\t C_k\t C_k \underline{G}_{t-t_0} \Xi_{k,t-t_0}(t-t_0) U_{k,t}(d)\t}\t\\
        = {}& \sum_{k=1,t=t_0}^{K,T} U_{k,t}(d)\b{\sum_{\tau=0}^{t-t_0}g_{k,\tau}\t\xi_{k,t-t_0-\tau}} \\
        = {}& \sum_{k=1}^K \begin{bmatrix}
            U_{k,t_0}(d) & U_{k,t_0+1}(d) & \cdots & U_{k,T}(d)
        \end{bmatrix}\\
        {}& \hspace{-1em}\begin{bmatrix}
            0 & \cdots & 0 & g_{k,0}\t\\
            \vdots & \iddots & g_{k,0}\t & g_{k,1}\t\\
            0 & \iddots & \iddots & \vdots\\
            g_{k,0}\t & g_{k,1}\t & \cdots & g_{k,T-t_0}\t\\
        \end{bmatrix} \begin{bmatrix}
            \xi_{k,T-t_0}\\
            \xi_{k,T-t_0-1}\\
            \vdots\\
            \xi_{k,0}
        \end{bmatrix}\\
        = {}& \sum_{k=1}^K\underbrace{ \begin{bmatrix}
            U_{k,t_0}(d) & \cdots & U_{k,T}(d)
        \end{bmatrix}}_{\calU_k}\bbI\b{v\t C_k\t C_k} \\
        {}& \hspace{-1em}\underbrace{\begin{bmatrix}
            0 & \cdots & 0 & A^{d+1}\Sigma\\
            \vdots & \iddots & A^{d+1}\Sigma & A^{d+2}\Sigma\\
            0 & \iddots & \iddots & \vdots\\
            A^{d+1}\Sigma & A^{d+2}\Sigma & \cdots & A^{d+T-t_0+1}\Sigma\\
        \end{bmatrix}}_{\calT}\begin{bmatrix}
            \xi_{k,T-t_0}\\
            \xi_{k,T-t_0-1}\\
            \vdots\\
            \xi_{k,0}
        \end{bmatrix}.
    \end{align*}
    In the last line, we have used that $g_{k,\tau}\t = v\t C_k\t C_k A^{d+1+\tau}\Sigma$ and that $\bbI\b{M} = \diag(\underbrace{M,M,\cdots,M}_{T-t_0+1})$.
    We further define $\calT_k = \bbI\b{v\t C_k\t C_k}\calT$ and rewrite the objective as
    \begin{equation*}\begin{split}
        {}& \b{\sum_{k=1,t=t_0}^{K,T} v\t C_k\t C_k \underline{G}_{t-t_0} \Xi_{k,t-t_0}(t-t_0) U_{k,t}(d)\t}\t\\
        = {}& \sum_{k=1}^K \calU_k\calT_k \Xi_{k,T-t_0}\b{T-t_0}\\
        = {}& \underbrace{\begin{bmatrix}
            \calU_1\calT_1 & \cdots & \calU_K\calT_K
        \end{bmatrix}}_{\calZ} \underbrace{\begin{bmatrix}
            \Xi_{1,T-t_0}\b{T-t_0}\\
            \vdots\\
            \Xi_{K,T-t_0}\b{T-t_0}\\
        \end{bmatrix}}_{\Xi}.
    \end{split}\end{equation*}

    We highlight that the $i$-th coordinate of $\Xi$ is independent of the first $i$ columns of $\calZ$. To see this, suppose the $i$-th coordinate of $\Xi$ is a coordinate of vector $\xi_{k,t}$ for some $k,t$. Then the $i$-th column of $\calZ$ must be composed of $U_{k,t+t_0}(d), \cdots, U_{k,T}(d)$. This is by the definition of the $(t+1)$-th column of $\calU_k\calT_k$ counted from the right. Since $\xi_{k,t}$ is independent of $u_{k,t+t_0-d:T}$, we first conclude that \textit{$\xi_{k,t}$ is independent of the $i$-th column of $\calZ$}. Now observe that all columns $i'$ of $\calZ$ with $i'<i$ only depend on inputs $u_{k',t'}$ with either $k'\neq k$ or $k'=k \cap t'>t+t_0-d$. These inputs are all independent of $\xi_{k,t}$. Therefore, \textit{$\xi_{k,t}$ is independent of the first $i$ columns of $\calZ$}.
    
    Now we resort to the self-normalizing bound \cite[Theorem 1]{abbasi2011improved} to bound $\calZ\Xi$. To apply the theorem, we construct a filtration $\{\calF_i\}_{i=0}^{r(T-t_0+1)K}$ such that 1) the $i$-th column of $\calZ$ is $\calF_{i-1}$-measurable, 2) the $i$-th coordinate of $\Xi$ is $\calF_{i}$-measurable and 3) the $i$-th coordinate of $\Xi$ is subGaussian conditioned on $\calF_{i-1}$. To achieve this, we let $\calF_i$ be the sigma algebra generated by the inputs involved in the first $(i+1)$ columns of $\calZ$ and by the first $i$ coordinates of $\Xi$. The first two conditions hold naturally by the definition of the filtration. For the third condition, we note that the $i$-th coordinate of $\Xi$ is independent of the first $i$ columns of $\calZ$ (by the previous paragraph) and the first $i-1$ coordinates of $\Xi$ (by the definition of $\Xi$), which generate $\calF_{i-1}$. Then condition 3) holds since the $i$-th coordinate of $\Xi$ has marginal distribution $\calN(0,1)$.

    Now we can apply \cite[Theorem 1]{abbasi2011improved} with $V=\frac{3\sigma_u^2\psi_A^2\psi_\Sigma^2}{2(1-\rho_A)^2} \cdot \rho_A^{2d} s_{\max}(T-t_0+1) \cdot I$ and get the following with probability at least $1-\delta/(2\cdot 5^{r})$
    \begingroup\makeatletter\def\f@size{9}\check@mathfonts\makeatother
    \begin{equation}\begin{split}\label{eq:cross_conc_3}
        \norm{\calZ\Xi}^2\overset{(i)}{\leq} {}& \sigma_{\max}\b{V+\calZ\calZ\t}\\
        {}& \cdot 2\log\b{\frac{\det\b{V+\calZ\calZ\t}^{1/2}\det\b{V}^{-1/2}}{\delta/10^r}}\\
    \end{split}\end{equation}
    \endgroup
    Here $(i)$ is by \cite[Theorem 1]{abbasi2011improved} with $\eta_s$ being the $s$-th coordinate of $\Xi$ and $X_s$ being the $s$-th column of $\calZ$.

    \textbf{Step 2: We now upper bound $\bm{\norm{\calZ}}$.} 
    By its definition,
    \begin{equation}\begin{split}\label{eq:cross_conc_21}
        {}& \calZ\calZ\t = \sum_{k\in[K]} \calU_k\calT_k\calT_k\t \calU_k\t\\
        = {}& \sum_{k\in[K]} \calU_k\bbI\b{v\t C_k\t C_k}\calT\calT\t  \bbI\b{C_k\t C_kv}\calU_k\t
    \end{split}\end{equation}
    For simplicity, we let $\calI_k = \{i: [C_k]^i \neq 0\}$. Namely, $\calI_k$ denotes the columns of $C_k$ that are non-zero, or intuitively, the state coordinates that are measured in trajectory $k$. Then $v\t C_k\t = [v\t]^{\calI_k}[C_k\t]_{\calI_k}$. 
    Therefore,
    \begingroup\makeatletter\def\f@size{8}\check@mathfonts\makeatother
    \begin{equation*}\begin{split}
        \calZ\calZ\t = {}& \sum_{k\in[K]} \calU_k\bbI\b{[v\t]^{\calI_k}[C_k\t]_{\calI_k} C_k}\b{\calT\calT\t}  \bbI\b{C_k\t[C_k]^{\calI_k}[v]_{\calI_k}}\calU_k\t\\
        \preceq {}& \norm{\calT\calT\t}\sum_{k\in[K]} \calU_k\bbI\b{[v\t]^{\calI_k}[C_k\t]_{\calI_k} C_kC_k\t[C_k]^{\calI_k}[v]_{\calI_k}}\calU_k\t\\
        = {}& \norm{\calT\calT\t}\sum_{k\in[K]} \calU_k\bbI\b{[v\t]^{\calI_k}[v]_{\calI_k}}\calU_k\t\\
        = {}& \norm{\calT\calT\t}\sum_{k\in[K]} \norm{[v]_{\calI_k}}^2\calU_k\calU_k\t.
    \end{split}\end{equation*}
    \endgroup
    Here the second last line is because $C_kC_k\t = I$, and because $[C_k\t]_{\calI_k}[C_k]^{\calI_k} = I$ for any measurement matrix $C_k$.

    Now taking the norm of both sides gives
    {\begin{equation}\begin{split}\label{eq:cross_conc_11}
        \norm{\calZ\calZ\t} \leq {}& \norm{\calT\calT\t }\sum_{k\in[K]} \norm{[v]_{\calI_k}}^2\norm{\calU_k\calU_k\t}.    
    \end{split}\end{equation}}

    When \Cref{eq:cross_conc_4} holds, we know by \Cref{lem:input_conc} that $\norm{\calU_k\calU_k\t} = \norm{\sum_{t=t_0}^TU_{k,t}(d)U_{k,t}(d)\t} \leq 3\sigma_u^2 (T-t_0+1)/2$ for every $k\in[K]$ with probability at least $1-\delta/(2\cdot 5^r)$. Substituting back gives the following with probability at least $1-\delta/(2\cdot 5^r)$
    \begin{equation}\begin{split}
        \norm{\calZ\calZ\t} \leq {}& \frac{3\sigma_u^2}{2} \norm{\calT\calT\t } (T-t_0+1) \cdot \sum_{k\in[K]} \norm{[v]_{\calI_k}}^2\\
    \end{split}\end{equation}
    Since every coordinate is measured at most $s_{\max}$ times in the $K$ trajectories, we know that $\sum_{k\in[K]} \norm{[v]_{\calI_k}}^2 \leq s_{\max}\norm{v}^2 = s_{\max}$. Therefore, with probability at least $1-\delta/(2\cdot 5^r)$,
    \begin{equation}\begin{split}\label{eq:cross_conc_5}
        \norm{\calZ\calZ\t} \leq \frac{3\sigma_u^2}{2} \norm{\calT\calT\t } \cdot s_{\max}(T-t_0+1).
    \end{split}\end{equation}

    Finally, we apply \cite[Lemma E.1]{proof_2tsiamis2019finite} on $\calT$, which bounds the norm of such Toeplitz matrices
    \begin{equation*}\begin{split}
        \norm{\calT} \leq {}& \sum_{\tau=0}^{T-t_0}\norm{A^{d+1+\tau}\Sigma} \leq \norm{\Sigma}\sum_{\tau=0}^{\infty}\psi_A\rho_A^{d+\tau}\leq \frac{\psi_A\psi_\Sigma}{1-\rho_A}\rho_A^d.
    \end{split}\end{equation*}
    Here the second inequality is by Assumption \ref{assmp:stable} and $\norm{\Sigma}\leq \psi_\Sigma$.
    Substituting back into \Cref{eq:cross_conc_5} and we get the following with probability at least $1-\delta/(2\cdot 5^r)$
    \begin{equation}\begin{split}\label{eq:cross_conc_10}
        \norm{\calZ}^2 \leq \frac{3\sigma_u^2\psi_A^2\psi_\Sigma^2}{2(1-\rho_A)^2} \rho_A^{2d}s_{\max}(T-t_0+1).
    \end{split}\end{equation}

    \textbf{Step 3.} With a union bound, combining \Cref{eq:cross_conc_3,eq:cross_conc_10} gives the following with probability at least $1-\delta/5^r$
    \begingroup\makeatletter\def\f@size{9}\check@mathfonts\makeatother
    \begin{equation*}\begin{split}
        {}& \norm{\calZ\Xi}^2 = \norm{\sum_{k=1,t=t_0}^{K,T} v\t C_k\t C_k \underline{G}_{t-t_0} \Xi_{k,t-t_0}(t-t_0) U_{k,t}(d)\t}^2\\
        \leq {}& \sigma_{\max}\b{V+\calZ\calZ\t} \cdot 2\log\b{\frac{\det\b{V+\calZ\calZ\t}^{\frac{1}{2}}\det\b{V}^{-\frac{1}{2}}}{\delta/10^r}}\\
        \leq {}&\sigma_{\max}\b{V+\calZ\calZ\t}\cdot 2\log\b{10^r\cdot 2^{m(d+1)/2}\cdot \frac{1}{\delta}}\\
        \leq {}& \frac{6\sigma_u^2\psi_A^2\psi_\Sigma^2}{(1-\rho_A)^2} \rho_A^{2d}s_{\max}(T-t_0+1)\cdot \max\left\{md,r\right\} \log\b{\frac{20}{\delta}}.
    \end{split}\end{equation*}
    \endgroup
    Here we use the definition $V=\frac{3\sigma_u^2\psi_A^2\psi_\Sigma^2}{2(1-\rho_A)^2} \cdot \rho_A^{2d} s_{\max}(T-t_0+1) \cdot I$.
    Finally, by \Cref{eq:cross_conc_9}, with probability at least $1-\delta$,
    \begingroup\makeatletter\def\f@size{9}\check@mathfonts\makeatother
    \begin{equation*}\begin{split}
        {}& \norm{\sum_{k=1,t=t_0}^{K,T} C_k\t C_k \underline{G}_{t-\tau} \Xi_{k,t-\tau}(t-\tau) U_{k,t}(d)\t}^2\\
        \leq {}& \frac{24\sigma_u^2\psi_A^2\psi_\Sigma^2}{(1-\rho_A)^2} \rho_A^{2d}\cdot \max\{md,r\}\cdot s_{\max}(T-t_0+1)\log\b{\frac{20}{\delta}}.
    \end{split}\end{equation*}
    \endgroup
\end{proof}

\begin{lemma}[Concentration of Crossterms with Fixed Dimensions]\label{lem:cross2_conc}  
    Consider system $(A,B,\sigma_u^2,\sigma_w^2)$ and measurement matrices $\{C_k\}_{k\in[K]}$ as in \Cref{thm:markov}. Consider trajectory length $T$ and estimation rank $d$ as in \Cref{alg:alloc}. Consider integer $t_0\in[0,d]$. Let $\{\xi_{k,t}\in\bbR^r\}_{k=1,t=d-t_0}^{K,T}$ be i.i.d. random vectors with $\xi_{k,t}\sim\calN(0,\sigma_{\xi}^2I_r)$. 
    Consider any $\delta\in(0,1)$. If $\xi_{k,t}$ is independent of $u_{k,\max\{t-d,0\}:T}$ and $u_{k',0:T}$ ($\forall k'\neq k$), and if $T \geq 2d + 16c\cdot rmd\cdot (\log^4(2mdT) + \log(10K/\delta))$ for absolute constant $c$ in \cite[Theorem 4.1, $L=1$]{krahmer2014suprema}, then the following holds with probability at least $1-\delta$,
    \begingroup\makeatletter\def\f@size{9}\check@mathfonts\makeatother
    \begin{equation*}\begin{split}
        {}& \norm{\sum_{k=1,t=d}^{K,T} C_k\t C_k \underline{G}_{t_0} \Xi_{k,t}(t_0) U_{k,t}(d)\t}\\
        \leq {}& \sqrt{\frac{24\sigma_u^2\sigma_{\xi}^2\psi_A^2}{\rho_A^2(1-\rho_A)^2}} \sqrt{\max\left\{md, r\right\} s_{\max}(T-d+1) \log\b{\frac{20}{\delta}}}.
    \end{split}\end{equation*}
    \endgroup
    Here we have defined $\underline{G}_{t_0} \coloneqq \begin{bmatrix}
        I & A & \cdots & A^{t_0}
    \end{bmatrix}$, $\Xi_{k,t}(t_0) \coloneqq \begin{bmatrix}
        \xi_{k,t}\t & \cdots & \xi_{k,t-t_0}\t
    \end{bmatrix}\t$. Moreover, recall $U_{k,t}(d) = \begin{bmatrix}
        u_{k,t}\t & \cdots & u_{k,t-d}\t
    \end{bmatrix}\t$.

\end{lemma}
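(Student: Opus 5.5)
The proof will mirror that of \Cref{lem:cross1_conc}: an $\epsilon$-net reduction to a fixed direction, a self-normalizing martingale bound, and a norm bound on the associated design matrix. By \Cref{lem:gen_norm2Net}, it suffices to bound, for a fixed unit vector $v\in\bbS^{r-1}$, the row vector
\[
\sum_{k=1,t=d}^{K,T} v\t C_k\t C_k \underline{G}_{t_0}\Xi_{k,t}(t_0)U_{k,t}(d)\t
\]
at level $z/2$, and then pay a factor $5^r$ in probability. Define $g_{k,\tau}\t \coloneqq v\t C_k\t C_kA^{\tau}$ for $\tau\in[0,t_0]$. The scalar multiplying $U_{k,t}(d)$ is then $\sum_{\tau=0}^{t_0} g_{k,\tau}\t\xi_{k,t-\tau}$.

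\textbf{Rewriting as a martingale sum.} Regroup the sum by noise vector rather than by time. Stacking all $\xi_{k,s}$ for $s\in[d-t_0,T]$ into $\Xi$, the transposed quantity becomes $\calZ\Xi$ with $\calZ=[\calU_1\calT_1~\cdots~\calU_K\calT_K]$. Here $\calU_k=[U_{k,d}(d)~\cdots~U_{k,T}(d)]$, and $\calT_k=\bbI(v\t C_k\t C_k)\calT$, where $\calT$ is a banded block-Toeplitz matrix with blocks $I,A,\dots,A^{t_0}$. The band has width $t_0+1$ instead of infinite length, which is why no $\rho_A^d$ decay appears.

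\textbf{Filtration.} The column of $\calZ$ paired with $\xi_{k,s}$ involves only $U_{k,t}(d)$ for $t\in[s,s+t_0]$, hence inputs $u_{k,\max\{s-d,0\}:T}$. Order the coordinates of $\Xi$ by decreasing $s$ within each trajectory, and order the trajectories one after another. With this ordering, each coordinate of $\Xi$ is independent of the current and earlier columns of $\calZ$ and of the earlier coordinates of $\Xi$, by the stated independence hypothesis. Build the filtration exactly as in \Cref{lem:cross1_conc}. Each coordinate is then conditionally $\sigma_\xi$-subGaussian, and \cite[Theorem 1]{abbasi2011improved} applies with $V=\frac{3\sigma_u^2\sigma_\xi^2\psi_A^2}{2\rho_A^2(1-\rho_A)^2}s_{\max}(T-d+1)I$.

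\textbf{Bounding $\norm{\calZ}$.} As in \Cref{eq:cross_conc_21}, use $C_kC_k\t=I$ to get
\[
\calZ\calZ\t\preceq\norm{\calT}^2\sum_k\norm{[v]_{\calI_k}}^2\calU_k\calU_k\t,
\]
and note that $\sum_k\norm{[v]_{\calI_k}}^2\le s_{\max}$. \Cref{lem:input_conc}, applied per trajectory with a union bound over $k$ (the hypothesis on $T$ provides the required length), gives $\norm{\calU_k\calU_k\t}\le\frac32\sigma_u^2(T-d+1)$. The Toeplitz bound \cite[Lemma E.1]{proof_2tsiamis2019finite} gives
\[
\norm{\calT}\le\sum_{\tau=0}^{t_0}\norm{A^\tau}\le 1+\sum_{\tau\ge1}\psi_A\rho_A^{\tau-1}\le\frac{\psi_A}{\rho_A(1-\rho_A)}.
\]
Hence $\norm{\calZ}^2$ is at most the scalar appearing in $V$.

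\textbf{Combining.} Plug into the self-normalized bound. Since $\calZ\calZ\t\preceq V$, we have $\sigma_{\max}(V+\calZ\calZ\t)\le 2\sigma_{\max}(V)$. The log-determinant ratio is at most $\frac{m(d+1)}{2}\log 2$, so after adding $\log(10^r/\delta)$ the logarithmic term is $O(\max\{md,r\}\log(20/\delta))$. A union bound over the two events, followed by multiplication by $4$ from the net step, yields the stated constant $24$.

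\textbf{Main obstacle.} The hardest step is the ordering and independence argument. $\xi_{k,t}$ enters both $U$-columns at times $t,\dots,t+t_0$, and the hypothesis only guarantees independence from $u_{k,\max\{t-d,0\}:T}$. So I will check carefully that every column preceding $\xi_{k,s}$ in the chosen order uses only inputs with index at least $s-d$. That holds because $t\ge s$ and $U_{k,t}(d)$ reaches back only to $t-d$.
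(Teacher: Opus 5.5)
Your proposal follows the paper's proof essentially step for step: the $5^r$ net reduction via \Cref{lem:gen_norm2Net}, the rewriting as $\calZ\Xi$ with the banded block-Toeplitz $\calT$, the decreasing-time ordering and filtration, \cite[Theorem 1]{abbasi2011improved}, and the bound on $\norm{\calZ}$ through $C_kC_k\t=I$, \Cref{lem:input_conc} and $\norm{\calT}\le\psi_A/(\rho_A(1-\rho_A))$. The one slip is in the constants. The paper takes $V=\frac{3\sigma_u^2\psi_A^2}{2\rho_A^2(1-\rho_A)^2}s_{\max}(T-d+1)I$ without the factor $\sigma_\xi^2$, because $\sigma_\xi$ already enters as the subGaussian parameter in $2\sigma_\xi^2\log(\cdot)$. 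With your $V$, your bound on $\norm{\calZ}^2$ does not involve $\sigma_\xi$, so it no longer gives $\calZ\calZ\t\preceq V$ when $\sigma_\xi<1$, and when it does the final bound carries $\sigma_\xi^4$. The fix is to drop $\sigma_\xi^2$ from $V$, or to normalize $\xi$ to unit variance and absorb $\sigma_\xi$ into $\calZ$.
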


\begin{proof}
    By \Cref{lem:gen_norm2Net}, there exists a unit vector $v\in\bbS^{r-1}$ s.t.
    \begingroup\makeatletter\def\f@size{8}\check@mathfonts\makeatother
\begin{equation}\begin{split}\label{eq:cross_conc_8}
        {}& \bbP\b{\norm{\sum_{k=1,t=d}^{K,T} C_k\t C_k \underline{G}_{t_0} \Xi_{k,t}(t_0) U_{k,t}(d)\t} > z}\\
        \leq {}& 5^r \bbP\b{\norm{\sum_{k=1,t=d}^{K,T} v\t C_k\t C_k \underline{G}_{t_0} \Xi_{k,t}(t_0) U_{k,t}(d)\t} > \frac{1}{2} z}.
    \end{split}\end{equation}    
    \endgroup

    \textbf{Step 1: We first upper bound $\norm{\bm{\sum_{k=1,t=d}^{K,T} v\t C_k\t C_k \underline{G}_{t_0} \Xi_{k,t}(t_0) U_{k,t}(d)\t}}$ with self-normalizing bound \cite[Theorem 1]{abbasi2011improved}. } For the unit vector $v\in\bbS^{r-1}$, define $g_{k,\tau}\t \coloneqq v\t C_k\t C_kA^{\tau}$, $\forall\tau\in[0,t_0]$. Then,
    \begin{equation}\begin{split}
        v\t C_k\t C_k\underline{G}_{t_0} = {}& v\t C_k\t C_k\begin{bmatrix}
            I & A & \cdots & A^{t_0}
        \end{bmatrix}\\
        = {}& \begin{bmatrix}
            g_{k,0}\t & g_{k,1}\t & \cdots & g_{k,t_0}\t
        \end{bmatrix}.
    \end{split}\end{equation}
    We then rewrite the objective as follows
    \begin{equation}\begin{split}\label{eq:cross_conc_14}
        {}& \sum_{k=1,t=d}^{K,T} v\t C_k\t C_k \underline{G}_{t_0} \Xi_{k,t}(t_0) U_{k,t}(d)\t\\
        = {}& \sum_{k=1,t=d}^{K,T} \b{\begin{bmatrix}
            g_{k,0}\t & \cdots & g_{k,t_0}\t
        \end{bmatrix} \begin{bmatrix}
            \xi_{k,t}\\
            \vdots\\
            \xi_{k,t-t_0}
        \end{bmatrix}} U_{k,t}(d)\t
    \end{split}\end{equation}
    Taking the transpose on both sides gives
    \begingroup\makeatletter\def\f@size{9}\check@mathfonts\makeatother
    \begin{align*}
        {}& \b{\sum_{k=1,t=d}^{K,T} v\t C_k\t C_k \underline{G}_{t_0} \Xi_{k,t}(t_0) U_{k,t}(d)\t}\t\\
        = {}& \sum_{k=1,t=d}^{K,T} U_{k,t}(d)\b{\sum_{\tau=0}^{t_0}g_{k,\tau}\t\xi_{k,t-\tau}}\\
        = {}& \sum_{k=1}^K\begin{bmatrix}
            U_{k,d}(d) & U_{k,d+1}(d) & \cdots & U_{k,T}(d)
        \end{bmatrix}\\
        {}& \begin{bmatrix}
            0 & \cdots & 0 & 0 & g_{k,0}\t & \cdots & g_{k,t_0}\t\\
            0 & \cdots & 0 & g_{k,0}\t & \cdots & g_{k,t_0}\t & 0\\
            \vdots & \iddots & \iddots & \iddots & \iddots & \iddots & \vdots\\
            0 & g_{k,0}\t & \cdots & g_{k,t_0}\t & 0 & \cdots & 0\\
            g_{k,0}\t & \cdots & g_{k,t_0}\t & 0 & \cdots & 0 & 0\\
        \end{bmatrix}\begin{bmatrix}
            \xi_{k,T}\\
            \xi_{k,T-1}\\
            \vdots\\
            \xi_{k,d-t_0}
        \end{bmatrix}\\
        = {}& \sum_{k=1}^K\underbrace{\begin{bmatrix}
            U_{k,d}(d) & \cdots & U_{k,T}(d)
        \end{bmatrix}}_{\calU_k} \bbI\b{v\t C_k\t C_k}\\
        {}& \underbrace{\begin{bmatrix}
            0 & \cdots & 0 & 0 & A^{0} & A^{1} & \cdots & A^{t_0}\\
            0 & \cdots & 0 & A^{0} & A^{1} & \cdots & A^{t_0} & 0\\
            \vdots & \iddots & \iddots & \iddots & \iddots & \iddots & \iddots & \vdots\\
            0 & A^{0} & A^{1} & \cdots & A^{t_0} & 0 & \cdots & 0\\
            A^{0} & A^{1} & \cdots & A^{t_0} & 0 & \cdots & 0 & 0\\
        \end{bmatrix}}_{\calT}\begin{bmatrix}
            \xi_{k,T}\\
            \xi_{k,T-1}\\
            \vdots\\
            \xi_{k,d-t_0}
        \end{bmatrix}.
    \end{align*}    
    \endgroup
    In the last line, we have used that $g_{k,\tau}\t = v\t C_k\t C_kA^{\tau}$ and that $\bbI(M) = \diag(\underbrace{M, M, \cdots, M}_{T-d+1})$. 
    We further define $\calT_k=\bbI\b{v\t C_k\t C_k}\calT$ and rewrite the objective as 
    \begin{equation*}\begin{split}
        {}& \b{\sum_{k=1,t=d}^{K,T} v\t C_k\t C_k \underline{G}_{t_0} \Xi_{k,t}(t_0) U_{k,t}(d)\t}\t\\
        = {}& \sum_{k=1}^K \calU_k\calT_k \Xi_{k,T}\b{T-d+t_0}\\
        = {}& \underbrace{\begin{bmatrix}
            \calU_1\calT_1 & \cdots & \calU_K\calT_K
        \end{bmatrix}}_{\calZ} \underbrace{\begin{bmatrix}
            \Xi_{1,T}\b{T-d+t_0}\\
            \vdots\\
            \Xi_{K,T}\b{T-d+t_0}\\
        \end{bmatrix}}_{\Xi}.
    \end{split}\end{equation*}

    We highlight that the $i$-th coordinate of $\Xi$ is independent of the first $i$ columns of $\calZ$. To see this, suppose the $i$-th coordinate of $\Xi$ is a coordinate of $\xi_{k,t}$ for some $k,t$. Then the $i$-th column of $\calZ$ must be composed of $U_{k,\max\{t,d\}}(d), \cdots, U_{k,\min\{t+t_0,T\}}(d)$. 
    Since $\xi_{k,t}$ is independent of $u_{k,\max\{t-d,0\}:T}$, we first conclude that \textit{$\xi_{k,t}$ is independent of the $i$-th column of $\calZ$}. Now observe that all columns $i'$ of $\calZ$ with $i'<i$ is only dependent on inputs $u_{k',t'}$ with either $k'\neq k$ or $k'=k \cap t'>\max\{t-d,0\}$. These inputs are all independent of $\xi_{k,t}$. Therefore, \textit{$\xi_{k,t}$ is independent of the first $i$ columns of $\calZ$}.
    
    Now we resort to the self-normalizing bound \cite[Theorem 1]{abbasi2011improved} to bound $\calZ\Xi$. To apply the theorem, we construct a filtration $\{\calF_i\}_{i=0}^{r(T-d+t_0+1)K}$ such that 1) the $i$-th column of $\calZ$ is $\calF_{i-1}$-measurable, 2) the $i$-th coordinate of $\Xi$ is $\calF_{i}$-measurable and 3) the $i$-th coordinate of $\Xi$ is subGaussian conditioned on $\calF_{i-1}$. To achieve this, we let $\calF_i$ be the sigma algebra generated by the inputs involved in the first $(i+1)$ columns of $\calZ$ and by the first $i$ coordinates of $\Xi$. The first two conditions hold naturally by the definition of the filtration. For the third condition, we note that the $i$-th coordinate of $\Xi$ is independent of the first $i$ columns of $\calZ$ (by the previous paragraph) and the first $i-1$ coordinates of $\Xi$ (by the definition of $\Xi$), which generate $\calF_{i-1}$. Then condition 3) holds since the $i$-th coordinate of $\Xi$ has marginal distribution $\calN(0,\sigma_{\xi}^2)$.

    Now we can apply \cite[Theorem 1]{abbasi2011improved} with $V=\frac{3\sigma_u^2\psi_A^2}{2\rho_A^2(1-\rho_A)^2}\cdot s_{\max}(T-d+1) I$ and get the following with probability at least $1-\delta/(2\cdot 5^r)$
    \begingroup\makeatletter\def\f@size{9}\check@mathfonts\makeatother
        \begin{equation}\begin{split}\label{eq:cross_conc_7}
        \norm{\calZ\Xi}^2
        \overset{(i)}{\leq} {}& \sigma_{\max}\b{V+\calZ\calZ\t}\\
        {}& \cdot 2\sigma_{\xi}^2\log\b{\frac{\det\b{V+\calZ\calZ\t}^{1/2}\det\b{V}^{-1/2}}{\delta/10^r}}\\
    \end{split}\end{equation}
    \endgroup

    Here $(i)$ is by \cite[Theorem 1]{abbasi2011improved} with $\eta_s$ being the $s$-th coordinate of $\Xi$ and $X_s$ being the $s$-th column of $\calZ$.
    
    \textbf{Step 2: We now upper bound $\bm{\norm{\calZ}}$.} 
    Following exactly the same reasoning as how we get \Cref{eq:cross_conc_5} from \Cref{eq:cross_conc_21}, the following holds with probability at least $1-\delta/(2\cdot 5^r)$,
    \begin{equation}\begin{split}\label{eq:cross_conc_12}
        \norm{\calZ\calZ\t} \leq \frac{3\sigma_u^2}{2} \norm{\calT\calT\t } \cdot s_{\max}(T-d+1).
    \end{split}\end{equation}

    Now we apply \cite[Lemma E.1]{proof_2tsiamis2019finite} on $\calT$, which bounds the norm of such Toeplitz matrices
    \begin{equation}\begin{split}
        \norm{\calT} \leq \sum_{\tau=0}^{t_0} \norm{A^{\tau}} \leq \sum_{\tau=0}^{t_0}\frac{\psi_A}{\rho_A}\rho_A^{\tau} = \frac{\psi_A}{\rho_A(1-\rho_A)}.
    \end{split}\end{equation}
    Here the second inequality is by Assumption \ref{assmp:stable}.
    Substituting into \Cref{eq:cross_conc_12} gives the following with probability at least $1-\delta/(2\cdot 5^r)$
    \begin{equation}\begin{split}\label{eq:cross_conc_22}
        \norm{\calZ}^2 \leq \frac{3\sigma_u^2\psi_A^2}{2\rho_A^2(1-\rho_A)^2} s_{\max}(T-d+1).
    \end{split}\end{equation}

    \textbf{Step 3.}
    With a union bound, combining \Cref{eq:cross_conc_7,eq:cross_conc_22} gives the following with probability at least $1-\delta/5^r$
    \begingroup\makeatletter\def\f@size{9}\check@mathfonts\makeatother
    \begin{equation*}\begin{split}
        {}& \norm{\calZ\Xi}^2 \leq \norm{\sum_{k=1,t=d}^{K,T} v\t C_k\t C_k \underline{G}_{t_0} \Xi_{k,t}(t_0) U_{k,t}(d)\t}^2\\
        \leq {}&\sigma_{\max}\b{V+\calZ\calZ\t}\cdot 2\sigma_{\xi}^2\log\b{\frac{\det\b{V+\calZ\calZ\t}^{\frac{1}{2}}\det\b{V}^{-\frac{1}{2}}}{\delta/10^r}}\\
        \leq {}&\sigma_{\max}\b{V+\calZ\calZ\t}\cdot 2\sigma_{\xi}^2\log\b{10^r\cdot 2^{m(d+1)/2}\cdot \frac{1}{\delta}}\\
        \leq {}& \frac{6\sigma_u^2\sigma_{\xi}^2\psi_A^2}{\rho_A^2(1-\rho_A)^2} s_{\max}(T-d+1) \cdot \max\left\{md, r\right\} \log\b{\frac{20}{\delta}}.
    \end{split}\end{equation*}
    \endgroup
    Here we have used $V=\frac{3\sigma_u^2\psi_A^2}{2\rho_A^2(1-\rho_A)^2} s_{\max}(T-d+1) I$.

    Finally, by \Cref{eq:cross_conc_8}, with probability at least $1-\delta$,
    \begingroup\makeatletter\def\f@size{9}\check@mathfonts\makeatother
    \begin{equation*}\begin{split}
        {}& \norm{\sum_{k=1,t=d}^{K,T} C_k\t C_k \underline{G}_{t_0} \Xi_{k,t}(t_0) U_{k,t}(d)\t}^2\\
        \leq {}& \frac{24\sigma_u^2\sigma_{\xi}^2\psi_A^2}{\rho_A^2(1-\rho_A)^2} \cdot \max\left\{md, r\right\} \cdot s_{\max}(T-d+1) \log\b{\frac{20}{\delta}}.
    \end{split}\end{equation*}
    \endgroup
\end{proof}

\vspace{-1em}\subsection{General Lemmas}
\begin{lemma}\label{lem:gen_norm2Net}
    For any random matrix $M\in\bbR^{a\times b}$, there exists a unit vector $x\in\mathbb{S}^{b-1}$ such that for all $\epsilon < 1$
    \begin{equation}\begin{split}
        \bbP(\norm{M} > z) \leq \b{1+\frac{2}{\epsilon}}^{b} \bbP\b{\norm{Mx} > (1-\epsilon)z}.
    \end{split}\end{equation}
    Here $\mathbb{S}^{b-1}$ denotes the unit sphere in $\mathbb{R}^b$.
\end{lemma}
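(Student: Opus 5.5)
The plan is a standard $\epsilon$-net argument on the sphere $\mathbb{S}^{b-1}$, taking $x$ to be the net point that maximizes the probability $\bbP(\norm{Mx}>(1-\epsilon)z)$. Fix $\epsilon\in(0,1)$ and let $\calN_\epsilon\subset\mathbb{S}^{b-1}$ be an $\epsilon$-net of the unit sphere. By the standard volumetric argument (disjoint balls of radius $\epsilon/2$ around net points all lie in the ball of radius $1+\epsilon/2$), the net can be chosen with $|\calN_\epsilon|\leq(1+2/\epsilon)^b$.

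The deterministic step is to show $\norm{M}\leq \frac{1}{1-\epsilon}\max_{y\in\calN_\epsilon}\norm{My}$ for every realization of $M$. Let $u\in\mathbb{S}^{b-1}$ attain $\norm{Mu}=\norm{M}$, and pick $y\in\calN_\epsilon$ with $\norm{u-y}\leq\epsilon$. Then
\begin{equation*}
\norm{M}=\norm{Mu}\leq\norm{My}+\norm{M(u-y)}\leq \norm{My}+\epsilon\norm{M}.
\end{equation*}
Rearranging gives $(1-\epsilon)\norm{M}\leq\max_{y\in\calN_\epsilon}\norm{My}$. Hence the event $\{\norm{M}>z\}$ is contained in $\bigcup_{y\in\calN_\epsilon}\{\norm{My}>(1-\epsilon)z\}$.

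A union bound then gives
\begin{equation*}
\bbP(\norm{M}>z)\leq\sum_{y\in\calN_\epsilon}\bbP(\norm{My}>(1-\epsilon)z)\leq(1+2/\epsilon)^b\max_{y\in\calN_\epsilon}\bbP(\norm{My}>(1-\epsilon)z).
\end{equation*}
Since $\calN_\epsilon$ is finite, the maximum is attained at some $x\in\calN_\epsilon\subset\mathbb{S}^{b-1}$, which is the unit vector in the statement.

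The only subtlety is the quantifier order. The statement says a single $x$ works "for all $\epsilon<1$", whereas the natural construction makes $x$ depend on $\epsilon$ (and on $z$). I would read the lemma, as it is used in the paper, with $\epsilon$ and $z$ fixed first. The applications only need $\epsilon=1/2$, giving the factor $5^b$ and threshold $z/2$, and the later bounds hold uniformly over unit vectors anyway. So the proof should state explicitly that $x$ is chosen after fixing $\epsilon$ and $z$. Beyond this bookkeeping I expect no real obstacle; the net cardinality bound is classical and can be cited or proved via the volume comparison above.
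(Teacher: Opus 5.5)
Your proposal is correct and follows essentially the same route as the paper. Both use an $\epsilon$-net $\calN_\epsilon$ of $\mathbb{S}^{b-1}$ with $|\calN_\epsilon|\le(1+2/\epsilon)^b$, the deterministic comparison $(1-\epsilon)\norm{M}\le\max_{y\in\calN_\epsilon}\norm{My}$ (the paper cites Lemma 5.3 of Vershynin, which you reprove directly), a union bound, and a pigeonhole choice of the maximizing net point as $x$.

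Your quantifier remark is also accurate. The paper's own proof likewise produces an $x$ that depends on $\epsilon$ and $z$, and it needs $\epsilon>0$. So the lemma should be read with $\epsilon\in(0,1)$ and $z$ fixed first, which is all its applications with $\epsilon=1/2$ require.
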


\begin{proof}
    From Lemma 5.3 of \cite{proof_1vershynin2010introduction}, we know that
    \begin{equation}\begin{split}
        \bbP\b{\norm{M} > z} \leq \bbP\b{\max_{x\in\calN_{\epsilon}}\norm{Mx} > (1-\epsilon)z},
    \end{split}\end{equation}
    where $\calN_{\epsilon}$ is any $\epsilon$-net of $\bbS^{b-1}$ with $|\calN_{\epsilon}| \leq \b{1+2/\epsilon}^b$.
    On the other hand, 
    \begin{equation*}\begin{split}
        {}& \bbP\b{\max_{x\in\calN_{\epsilon}}\norm{Mx} > (1-\epsilon)z} \leq \sum_{x\in\calN_{\epsilon}} \bbP\b{\norm{Mx}>(1-\epsilon)z}.
    \end{split}\end{equation*}
    Combining the above inequalities gives
    \begin{equation}\begin{split}
        \bbP\b{\norm{M} > z} \leq \sum_{x\in\calN_{\epsilon}} \bbP\b{\norm{Mx}>(1-\epsilon)z}.
    \end{split}\end{equation}
    Thus, by the pigeonhole principle, there exists $x\in\calN_{\epsilon}$ s.t.
    \begin{equation}\begin{split}
        \bbP\b{\norm{M} > z} \leq {}& |\calN_{\epsilon}|\cdot \bbP\b{\norm{Mx}>(1-\epsilon)z}\\
        \leq {}& \b{1+\frac{2}{\epsilon}}^b\bbP\b{\norm{Mx}>(1-\epsilon)z}.
    \end{split}\end{equation}
\end{proof}

\begin{lemma}[Adapted from Theorem 4.1 of \cite{krahmer2014suprema}]\label{lem:circ}
    Consider any $\delta\in(0,1)$. Let $C\in\bbR^{L\times L}$ be a circulant matrix where the first row has distribution $\calN(0,I)$. For any $L_r\times L_c$ submatrix  $\Phi_{L_rL_c}$ of $C$, if 
    \begin{equation}\begin{split}
        L_r > cL_c\max\left\{\log^2(L_c)\log^2(L), \log(1/\delta)\right\},
    \end{split}\end{equation} 
    then with probability at least $1-\delta$
    \begin{equation}\begin{split}
        {}& \norm{\frac{1}{L_r}\Phi_{L_rL_c}\t\Phi_{L_rL_c}-I}\\
        \leq {}& \sqrt{c\frac{L_c}{L_r}\max\left\{\log^2(L_c)\log^2(L), \log(1/\delta)\right\}}.
    \end{split}\end{equation}
    Here $c$ is the absolute constant in \cite[Theorem 4.1, $L=1$]{krahmer2014suprema}.
\end{lemma}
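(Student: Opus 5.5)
The plan is to reduce the statement to the suprema-of-chaos bound of \cite[Theorem 3.1 / Theorem 4.1]{krahmer2014suprema}. The reduction rests on the fact that, because $C$ is circulant, $\Phi_{L_rL_c}x$ is linear in the Gaussian first row $\xi\sim\calN(0,I_L)$ of $C$.

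\textbf{Step 1 (chaos representation).} Let the rows of $\Phi_{L_rL_c}$ be indexed by $\calR\subseteq[L]$ with $|\calR|=L_r$, and its columns by $\calS\subseteq[L]$ with $|\calS|=L_c$. For $x\in\bbS^{L_c-1}$, let $\tilde x\in\bbR^L$ be the zero-padded extension of $x$ supported on $\calS$. Then $C\tilde x$ is a circular convolution of $\xi$ with $\tilde x$, so
\begin{equation*}
\frac{1}{\sqrt{L_r}}\Phi_{L_rL_c}x = V_x\xi,
\end{equation*}
where $V_x\in\bbR^{L_r\times L}$ is $L_r^{-1/2}$ times the restriction to the rows $\calR$ of the circulant matrix generated by (a reflection of) $\tilde x$. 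Hence
\begin{equation*}
\norm{\frac{1}{L_r}\Phi_{L_rL_c}\t\Phi_{L_rL_c}-I}=\sup_{x\in\bbS^{L_c-1}}\big|\norm{V_x\xi}^2-\bbE\norm{V_x\xi}^2\big|.
\end{equation*}
This uses $\bbE\norm{V_x\xi}^2=\norm{V_x}_F^2=\norm{x}^2=1$, which holds because each row of $V_x$ is a cyclic shift of $\tilde x$ scaled by $L_r^{-1/2}$. The set $\calA=\{V_x:x\in\bbS^{L_c-1}\}$ is exactly the index set in \cite{krahmer2014suprema}.

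\textbf{Step 2 (geometric parameters).} Next I would bound the parameters entering the chaos bound.
\begin{itemize}
\item $d_F(\calA)=1$, by the computation in Step 1.
\item $d_{2\to2}(\calA)\le\sqrt{L_c/L_r}$. To see this, diagonalize the circulant matrix via the DFT: $\norm{V_x}\le L_r^{-1/2}\norm{\widehat{\tilde x}}_\infty\le L_r^{-1/2}\norm{x}_1\le\sqrt{L_c/L_r}$.
\item $\gamma_2(\calA,\norm{\cdot}_{2\to2})\lesssim\sqrt{L_c/L_r}\,\log(L_c)\log(L)$. This is the content of \cite[Theorem 4.1]{krahmer2014suprema}. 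Its proof uses Dudley's integral together with a Maurey-type covering of $\{\widehat{\tilde x}\}$ in $\ell_\infty$, and I would quote it rather than reprove it.
\end{itemize}
This $\gamma_2$ estimate is the main obstacle; in this adaptation it is taken from the cited work.

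\textbf{Step 3 (tail bound and simplification).} Plugging these into the tail form of the chaos theorem gives, with probability at least $1-\delta$,
\begin{equation*}
\sup_{x}\big|\norm{V_x\xi}^2-1\big|\le c_1\Big(E+\sqrt{\log(1/\delta)}\,d_F\,d_{2\to2}+\log(1/\delta)\,d_{2\to2}^2\Big).
\end{equation*}
Here
\begin{equation*}
E=\gamma_2(\gamma_2+d_F)+d_F\,d_{2\to2}\lesssim \sqrt{\tfrac{L_c}{L_r}}\log L_c\log L+\tfrac{L_c}{L_r}\log^2L_c\log^2L.
\end{equation*}
Write $\epsilon^2\coloneqq c\,\tfrac{L_c}{L_r}\max\{\log^2(L_c)\log^2(L),\log(1/\delta)\}$. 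The hypothesis $L_r> cL_c\max\{\cdot\}$ says precisely that $\epsilon<1$. In that regime every quadratic term (e.g.\ $\tfrac{L_c}{L_r}\log^2L_c\log^2L$ and $\log(1/\delta)\,d_{2\to2}^2$) is at most $\epsilon^2\le\epsilon$. Every linear term is at most $\epsilon$ as well. Absorbing $c_1$ into the absolute constant $c$ then gives the stated bound $\sqrt{c\frac{L_c}{L_r}\max\{\cdot\}}$.
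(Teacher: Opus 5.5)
Your argument is correct, but it takes a longer route than the paper.

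\textbf{The paper's route.} The paper never opens up \cite{krahmer2014suprema}. Let $\Phi_{L_r}$ denote the $L_r\times L$ row submatrix. For a fixed set of $L_c$ columns, $\bigl\|\frac{1}{L_r}\Phi_{L_rL_c}^\top\Phi_{L_rL_c}-I\bigr\|$ equals the supremum of $\bigl|\frac{1}{L_r}\|\Phi_{L_r}x\|^2-\|x\|^2\bigr|$ over unit vectors $x\in\mathbb{R}^L$ supported on those columns. It is therefore at most the restricted isometry constant of $\frac{1}{\sqrt{L_r}}\Phi_{L_r}$ at sparsity $L_c$. The paper then inverts the sample-size condition of Theorem 4.1 there, which gives the bound with exactly that theorem's constant $c$.

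\textbf{Your route.} You instead apply the chaos tail bound (Theorem 3.1 of \cite{krahmer2014suprema}) to the fixed-support family and recompute $d_F$, $d_{2\to2}$ and $\gamma_2$. Two small corrections:
\begin{itemize}
\item Your $\mathcal{A}=\{V_x: x\in\mathbb{S}^{L_c-1}\}$ is not \emph{exactly} their index set, which ranges over all $L_c$-sparse unit vectors of $\mathbb{R}^L$. It is a subset of it. Importing their $\gamma_2$ estimate, which appears inside the proof of Theorem 4.1 rather than in its statement, therefore needs monotonicity of $\gamma_2$ under inclusion; this holds up to an absolute constant.
\item The sub-Gaussian term in Theorem 3.1 carries $d_{2\to2}(\gamma_2+d_F)$, not $d_F\,d_{2\to2}$. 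This is harmless here, since $d_F=1$ and $\gamma_2\lesssim\epsilon<1$.
\end{itemize}

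\textbf{Comparison.} The inclusion you need for the $\gamma_2$ transfer, applied directly to the suprema themselves, is the whole of the paper's proof. Your Steps 2--3 thus amount to re-deriving Theorem 4.1 for a sub-family. Your route makes clear where $\sqrt{L_c/L_r}$ and the $\log(L_c)\log(L)$ factors come from, and that only the fixed-support family matters. It costs extra constant bookkeeping. It also ends with some absolute constant rather than the one quoted from Theorem 4.1, though hypothesis and conclusion can be aligned by enlarging $c$.
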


\begin{proof}
    For the given circulant matrix $C$, consider any of its $L_r\times L$ submatrix $\Phi_{L_r}$. Consider the given constant $L_c \leq L$. Then the corresponding ``restricted isometry constant'' of $\frac{1}{\sqrt{L_r}}\Phi_{L_r}$, denoted by RIC, is defined as in \cite[Section 4]{krahmer2014suprema}
    \begin{equation*}\begin{split}
        \ric \coloneqq {}& \max_{x\in\bbR^L: \norm{x}=1 \text{ and }\norm{x}_0\leq L_c} \abs{\frac{1}{L_r}\norm{\Phi_{L_r}x}^2 - \norm{x}^2}\\
        = {}& \max_{x\in\bbR^L: \norm{x}=1 \text{ and }\norm{x}_0\leq L_c} \abs{x\t\b{\frac{1}{L_r}\Phi_{L_r}\t\Phi_{L_r}-I}x}.
    \end{split}\end{equation*}
    Consider any $L_r\times L_c$ submatrix $\Phi_{L_rL_c}$ of $\Phi_{L_r}$ and denote the indices of the $L_c$ columns in $\Phi_{L_r}\in\bbR^{L_r\times L}$ as $\calI$.
    \begin{equation*}\begin{split}
        \ric = {}& \max_{x\in\bbR^L: \norm{x}=1 \text{ and }\norm{x}_0\leq L_c} \abs{x\t\b{\frac{1}{L_r}\Phi_{L_r}\t\Phi_{L_r}-I}x}\\
        \overset{(i)}{\geq} {}& \max_{x\in\bbR^{L_c}: \norm{x}=1} \abs{x\t\b{\frac{1}{L_r}\Phi_{L_rL_c}\t\Phi_{L_rL_c}-I}x}\\
        = {}& \norm{\frac{1}{L_r}\Phi_{L_rL_c}\t\Phi_{L_rL_c}-I}.
    \end{split}\end{equation*}
    Here $(i)$ is because we restrict all possible vectors to vectors with non-zero entries only on rows whose indices are in $\calI$.

    On the other hand, by Theorem 4.1 of \cite{krahmer2014suprema}, the following holds with probability at least $1-\delta$ for any $\delta\in(0,1)$ if $L_r > cL_c\max\left\{\log^2(L_c)\log^2(L), \log(1/\delta)\right\}$
    \begin{equation}\begin{split}
        \ric \leq \sqrt{c\frac{L_c}{L_r}\max\{\log^2(L_c)\log^2(L), \log(1/\delta)\}}.
    \end{split}\end{equation}
    Here $c$ is the absolute constant in \cite[Theorem 4.1, $L=1$]{krahmer2014suprema}.

    Combining the above results finishes the proof.
\end{proof}

\section{Proofs for \Cref{sec:extension}}\label{proof:lem_param}
\begin{proof}[Proof of \Cref{thm:rank2}]
    \textbf{Step 1: We first prove \Cref{eq:sub4}.}
    Consider any $C\in\calC_{\calJ}$, i.e., any non-zero measurement matrix that only measures state coordinates in $\calJ$. We start by bounding the Hankel matrix perturbation. Each block row of $\htH(C)-H(C)$ is a submatrix of $[\htG-G]_{\calJ}$, and $H(C)$ has $r$ block rows. Therefore
    \begin{equation*}\begin{split}
        \norm{\htH(C)-H(C)} \leq {}& \sqrt{r}\,\max_{i\in[r]}\norm{[\htH(C)-H(C)]_i}\\
        \leq {}& \sqrt{r}\,\norm{[\htG-G]_{\calJ}} \leq \kappa_3\sqrt{\frac{s_{\max}\,r\,md}{s_{\min}^2 T}}.
    \end{split}\end{equation*}
    Let $c_1 \coloneqq \kappa_3^2 s_{\max} r md / s_{\min}$. Then $\norm{\htH(C)-H(C)} \leq \sqrt{c_1/(s_{\min}T)}$. By \Cref{eq:sub3}, $T > c_1^2/s_{\min}$, which is equivalent to $\sqrt{c_1/(s_{\min}T)} < \sqrt[4]{1/(s_{\min}T)}$. Therefore
    \begin{equation}\begin{split}\label{eq:rank2_pert}
        \norm{\htH(C) - H(C)} < \sqrt[4]{\frac{1}{s_{\min}T}}.
    \end{split}\end{equation}

    Now define $r_c \coloneqq \rank(H(C))$. By \Cref{eq:rank2_pert}, we have
    \begin{equation*}\begin{split}
        \sigma_{r_c}\b{\htH(C)} \geq {}& \sigma_{r_c}\b{H(C)} - \norm{\htH(C)-H(C)}\\
        \geq {}& \sigma_H - \sqrt[4]{\frac{1}{s_{\min}T}} > \sqrt[4]{\frac{1}{s_{\min}T}}.
    \end{split}\end{equation*}
    Here the first inequality uses $\sigma_{r_c}(H(C)) = \sigma_{\min}(H(C)) \geq \sigma_H$ by definition of $\sigma_H$, and the last inequality follows from \Cref{eq:sub3}, which gives $\sigma_H > 2\sqrt[4]{1/(s_{\min}T)}$.

    On the other hand, for any $i > r_c$ we have $\sigma_i(H(C)) = 0$. Therefore, by \Cref{eq:rank2_pert},
    \begin{equation*}\begin{split}
        \sigma_i\b{\htH(C)} \leq {}& \sigma_i\b{H(C)} + \norm{\htH(C)-H(C)} < \sqrt[4]{\frac{1}{s_{\min}T}}.
    \end{split}\end{equation*}

    By \Cref{alg:rank_unacc}, $\htr(C)$ is defined as the largest $i$ such that $\sigma_i\b{\htH(C)} > \sqrt[4]{1/(s_{\min}T)}$. By the above two inequalities, we know that $\htr(C) = r_c = \rank(H(C)) = \rank(O(C))$.

    \textbf{Step 2. We now prove \Cref{eq:sub6}.} This part of the proof is largely based on Theorem 1 in \cite{wolsey1982analysis}, or Section ``Submodular Optimization Algorithms'' in \cite{clark2017submodularity}. With coordinate set $\calI=\calJ$, \Cref{alg:2} is the greedy algorithm for the problem $\min_{\calS\subseteq \calJ} |\calS|$ such that $f(\calS) \geq r$,
    where $\calJ$ denotes the set of accessible state coordinates and $f\b{\calS} \coloneqq \rank\b{O\b{I_{\calS}}}$ with $I_{\calS} = [I_r]_{\calS}$.
    By Step 1, $\htr(I_{\calS}) = \rank\b{O\b{I_{\calS}}} = f\b{\calS}$ for every $\calS\subseteq \calJ$. This is equivalent to the greedy algorithm having exact access to $f(\cdot)$.

    By Theorem 7 of \cite{summers2015submodularity}, $f(\calS)$ is submodular and monotone increasing. This satisfies the condition of Theorem 1 in \cite{wolsey1982analysis}.
    Then by Theorem 1 in \cite{wolsey1982analysis}, the output $\htC\in\bbR^{\htn\times r}$ of \Cref{alg:2} satisfies $ \htn \leq \b{1+\log(r)} \tiln^*, \rank(O(\htC)) \geq r$.
    The second inequality implies $\rank(O(\htC)) = r$.
\end{proof}

\section*{References}
\bibliography{ref}

\end{document}